\documentclass[11pt]{article}

\usepackage{amsmath,amssymb,amsfonts}
\usepackage[ruled,vlined,linesnumbered,noend]{algorithm2e}
\usepackage{cite}
\usepackage{physics}
\usepackage{geometry}
\usepackage{microtype}
\usepackage{tikz}
\usepackage{amsthm}
\usepackage{tabularx}
\usepackage{array}
\usepackage{xcolor}
\usepackage{multicol}
\usepackage{graphicx}
\usepackage{hyperref}
\usepackage{times}
\usepackage{subcaption}
\usepackage{float}

\newcommand{\UNLABELED}{\textsc{Unlabeled}}
\newcommand{\INLAB}{\textsc{In}}
\newcommand{\OUTLAB}{\textsc{Out}}

\theoremstyle{plain}
\newtheorem{lemma}{Lemma}
\newtheorem{theorem}{Theorem}

\theoremstyle{definition}
\newtheorem{definition}{Definition}

\theoremstyle{remark}

\newtheorem{fact}{Fact}

\newtheorem{corollary}{Corollary}[section]

\newcommand{\out}{\textsc{OUT}}

\newcommand{\SBL}{\textsc{Small Blossom Liquidation}}
\newcommand{\FVR}{\textsc{Free Vertex Reduction}}
\newcommand{\Finalization}{\textsc{Finalization}}
\newcommand{\Liq}{\textsc{Liquidationist}}
\newcommand{\QSearchOne}{\textsc{QSearchOne}}
\newcommand{\SearchOne}{\textsc{SearchOne}}

\renewcommand{\emph}[1]{\textbf{\textit{#1}}}

\newcommand{\GE}{\textsc{Grow Event}}
\newcommand{\BE}{\textsc{Blossom Event}}

\SetInd{0.4em}{0.8em}
\SetNlSty{}{:}{}
\SetAlgoNlRelativeSize{-1}
\DontPrintSemicolon

\title{A Quantum Scaling Algorithm for Maximum-Weight Perfect Matching in General Graphs}

\author{ \begin{tabular}{c@{\hspace{4em}}c@{\hspace{4em}}c} Kourosh Mirsohi & Sandy Irani & Michael T. Goodrich \end{tabular}\\[0.9em] Department of Computer Science\\ University of California, Irvine }

\date{}

\begin{document}
\maketitle

\begin{abstract}
Quantum speed-ups have been obtained for many fundamental graph problems, including most variants of matching.
A notable exception, however, is the maximum-weight perfect matching
(MWPM) problem in general graphs with integer edge weights, which is arguably the most challenging variant of matching.
We present a quantum algorithm for MWPM in general graphs
that runs in \(
    \widetilde{O}(n m^{2/3}\log W)
\)
time, where $W$ is an upper bound on the magnitude of the edge weights.
This is an
improvement over the best known classical combinatorial bound of
\(
    \widetilde{O}(m\sqrt n \log W)
\)
in the dense regime, where $m\ge n^{3/2}$.
To the best of our knowledge, this is the first quantum algorithm to obtain an
asymptotic improvement over the best classical combinatorial algorithm for the MWPM problem in general
graphs. The running time of our method accounts for QRAM initialization and access
overheads up to polylogarithmic factors, as well as all classical updates to the 
data structures.
At a high level, our algorithm is based on a classical framework due to Duan, Pettie, and Su, but our algorithm requires replacing certain classical tasks with quantum methods, alternative analysis of classical procedures, and the use of alternative data structures that can be implemented effectively in the QRAM model.

\end{abstract}

\section{Introduction}

Given an undirected $n$-vertex graph, $G=(V,E)$, with integer weights on its $m$ edges, the \emph{maximum-weight perfect matching} (MWPM) problem is to find a set of edges, $M$, of
maximum weight with no common endpoints such that every vertex in $G$ is incident to
an edge in $M$.
The MWPM appears directly in applications such as resource allocation, exchange
markets, scheduling, and Christofides' well-known $3/2$-approximation for metric
TSP~\cite{Christofides76}. The MWPM problem and its variants have been extensively studied since the
pioneering work by Edmonds~\cite{Edmonds65poly,Edmonds65ptf} in 1965; 
see, e.g.,~\cite{Kuhn55,Munkres57,HopcroftKarp73,MicaliVazirani80,Gabow85scaling,
GabowTarjan91,Lawler76,LovaszPlummer86,Schrijver03}. 
Note that maximum-weight matching
without the perfect-matching requirement reduces to MWPM with only constant-factor
overhead~\cite{Pettie12}, so MWPM is a natural canonical formulation of the problem.

MWPM is arguably the most computationally demanding matching problem.
Variants of the problem restricted to bipartite graphs or unweighted graphs
are generally  easier. 
One of the difficulties encountered by matching problems on general graphs
is that searches for an augmenting path of alternating matched and unmatched edges, beginning and ending with an unmatched vertex, can encounter odd
cycles.
Edmonds'  algorithm handles
odd cycles by temporarily treating the corresponding odd set of vertices as a single
contracted object, called a \emph{blossom}. 
While blossoms are contracted and dissolved throughout the algorithm, 
the set of blossoms forms a nested, non-crossing family of odd sets,
often called a \emph{laminar} family. 
Thus, at any point in time in the execution of the algorithm, the current set of blossoms is organized as a forest of blossoms. 
Edmonds' algorithm also follows a primal-dual framework, which 
 maintains auxiliary
\emph{dual variables} that are used to determine whether a particular edge is tight 
and therefore eligible to be considered when searching the graph.
Dual variables are occasionally updated (a process called \emph{dual adjustment})  so that new useful 
edges become available while feasibility is preserved.

These techniques have  continued to be useful over time. 
For example,
the classical MWPM algorithm of Duan, Pettie, and
Su~\cite{dps}, which runs in $\widetilde{O}(m\sqrt{n}\log W)$ time, incorporates all of these ideas.
Their algorithm searches a \emph{contracted graph}, where all blossoms are contracted so that the roots of the blossom trees are treated as single vertices and only almost-tight edges between the blossoms are considered.

Meanwhile, quantum algorithms for graph problems have been studied extensively over the past two decades, yielding speedups in running time or query complexity for
fundamental problems including shortest paths and minimum spanning
trees~\cite{DHHM04,NayebiWilliams14}, graph connectivity~\cite{DHHM04,JarretEtAl18},
maximum flow~\cite{AS06}, graph sparsification, and cut approximation~\cite{ApersDeWolf20}, as well as  several variants of
matching~\cite{AS06,Dor09,LL15,BT20,KW21,Bli22,TM24}.

Quantum algorithms for matching, in particular, have been obtained for almost every variant of matching in bipartite and general graphs (see Table \ref{tab:quantum-matching-prior} and the discussion in Appendix~\ref{sec:qmatchreview}). 
A notable exception is MWPM in general graphs with integer edge weights.
In this work, we present a quantum algorithm for MWPM in general graphs
that runs in \(
    \widetilde{O}(n m^{2/3}\log W)
\)
time, where $W$ is an upper bound on the magnitude of the edge weights.
This is an
improvement over the best known classical combinatorial bound of
\(
    \widetilde{O}(m\sqrt n \log W)
\)
in the dense regime~\cite{GabowTarjan91,dps},\footnote{As is common, 
we use $\widetilde{O}(*)$ to ignore polylogarithmic factors in $n$ and $m$.} where $m\ge n^{3/2}$. 
To the best of our knowledge, this is the first quantum algorithm to obtain an
asymptotic improvement over the best classical algorithm for the MWPM problem in general
graphs.

\subsection{The Quantum Random Access Memory (QRAM) Model}
The model of computation we use in this paper is the 
\emph{quantum random access memory} (QRAM) model described by Giovannetti, Lloyd, and
Maccone~\cite{GLM08qram,GLM08arch}, and further utilized and studied in work on quantum algorithms~\cite{Prakash14,KP17,Arunachalam2020,Park19,Hann2021,Xu23qram,Phalak23qram}. 
 The QRAM model has a shared memory 
equipped with dual interfaces---one classical and one quantum. 
Formally, if $A[0],A[1],\dots,A[N-1]$
is such a shared memory array, 
we assume any cell, $A[i]$, can be accessed classically, given $i$, in one step;
hence, each entry can be read or written in $O(1)$ time.
Individual entries can be updated in $O(1)$ time.
We also assume access to a unitary ``access'' oracle,
$O_A:\ket{i}\ket{0}\longmapsto \ket{i}\ket{A[i]}$.  A quantum address register may be in superposition, so for
any state $\sum_i \alpha_i \ket{i}\ket{0}$, 
one application of $O_A$ yields $\sum_i \alpha_i \ket{i}\ket{A[i]}$, which 
we view as a coherent read of the shared memory.

This framework allows
updates to $A$ to be performed sequentially with classical 
control between quantum subroutine calls that can utilize quantum address registers in superposition. 
Importantly, our use of this framework avoids so-called ``cheats'' where quantum 
algorithms perform
computations in sublinear time while ignoring the time needed to load their inputs.
For example, in our use of the QRAM model, 
the time to load an input graph into memory is completely accounted for in our running times.

This framework also allows us to use a generalized ``BBHT'' variant of the well-known
Grover-style search quantum subroutine~\cite{Grover96,BBHT98} using the shared memory. 
For example, after having loaded
a graph into the QRAM shared memory, we can then perform a graph
traversal that iterates across vertices, such that, for each vertex, $v$, visited during
the traversal
we can choose an unvisited edge incident to $v$ in
$O(\sqrt{d_v/k})$ time, where $d_v$ is the degree of $v$ and $k$ is the number of unvisited
edges incident to $v$. Owing to the inherent difficulty of the MWPM
problem, however, our algorithm involves much more than just applying Grover-style searching
to improve the running time for various graph traversals.
Instead, as we discuss below, our algorithm involves significant modifications to classical
data structures and procedures  as well as retuning techniques from previous algorithms,
most notably the classical MWPM algorithm of Duan, Pettie, and Su~\cite{dps}.

\subsection{Main Result and Technical Contributions}
At a high level, our algorithm involves a non-trivial adaptation and modification of the
MWPM algorithm of Duan, Pettie, and Su~\cite{dps} to the QRAM model.
The authors call their method the ``{\Liq}'' algorithm,
which is organized around
three procedures, \SBL{}, \FVR{}, and \Finalization{}.
We obtain an asymptotically faster quantum algorithm by speeding up the
\FVR{} procedure. The other two procedures are completely classical and are the same as in the 
MWPM algorithm of Duan, Pettie, and Su~\cite{dps}. 

For integer edge weights, the \Liq{} algorithm uses \emph{weight scaling},
introduced by Gabow~\cite{Gabow85scaling}, where the edge weights are revealed
gradually over \(O(\log W)\) scales.
Within each scale, the algorithm performs \SBL{} and \FVR{}, followed by some additional steps to complete that scale. After the final scale,
\Finalization{} is applied to obtain an exact maximum-weight perfect matching.

One of the primary challenges with this approach is to maintain the
 matching, dual variables, and blossom structures from one scale to the next.
The \Liq{} algorithm balances the cost of the three procedures by partitioning blossoms into \emph{small} and \emph{large} blossoms based on a threshold parameter, $\tau$. 
The dominant contributions to the total running time are summarized in
Table~\ref{tab:procedure-costs}. The factor \(\log W\) accounts for the
\(O(\log W)\) scaling iterations. Although \Finalization{} is performed only
after the last scale, its input contains the free vertices accumulated across
all scales.

\begin{table}[hbt]
\centering
\renewcommand{\arraystretch}{1.2}
\begin{tabular}{|c|c|c|}
\hline
\textbf{Procedure}
&
\textbf{\Liq{}}
&
\textbf{This work}
\\
\hline
\SBL{}
&
\rule[-8pt]{0pt}{24pt} \(\widetilde{O}(\tau m\log W)\)
&
\(\widetilde{O}(n\tau^{2}\log W)\)
\\
\hline
\FVR{}
&
\rule[-8pt]{0pt}{24pt} \(\widetilde{O}(\tau m\log W)\)
&
\(\widetilde{O}(\tau\sqrt{mn}\log W)\)
\\
\hline
\Finalization{}
&
\(\widetilde{O}\!\left(\dfrac{mn}{\tau}\log W\right)\)
&
\rule[-11pt]{0pt}{28pt} \(\widetilde{O}\!\left(\dfrac{mn}{\tau}\log W\right)\)
\\
\hline
\end{tabular}
\caption{Running times of the main procedures in the classical \Liq{} algorithm and our work.}
\label{tab:procedure-costs}
\end{table}

The \FVR{} procedure iterates a subroutine called \textsc{SearchOne} $\tau$ times.
The classical running time of \textsc{SearchOne} is $O(m)$, and we improve the running time to $\widetilde{O}(\sqrt{mn})$ on a quantum computer.
As mentioned above,
the \SBL{} and \Finalization{} procedures are unchanged. For \SBL{}, however, we use a different analysis that gives the bound \(\widetilde{O}(n\tau^2\log W)\) for simple graphs. 
As shown by Duan, Pettie, and Su~\cite{dps}, for the \SBL{} procedure, one need only consider edges that are internal to small blossoms. Furthermore, the running time to liquidate each small blossom is the number of vertices in that blossom (upper bounded by $\tau$) times the number of edges internal to that blossom. Summing up over all the small blossoms gives a running time of \(\widetilde{O}(\tau m)\) for each scale.
We observe that the number of edges within a small blossom can be at most $\tau^2$ and 
then summing up over all the blossoms gives an upper bound of $O(n)$ on the total number of vertices.
Thus the running time of \SBL{} can alternatively be upper bounded by
\(\widetilde{O}(n\tau^{2})\) for each scale (see Lemma \ref{le:detail}).

For the original classical \Liq{} algorithm, the overall running time is minimized by setting $\tau = \sqrt{n}$, yielding an overall running time of $\widetilde{O}(m\sqrt{n}\log W)$.
The overall running time of our algorithm is minimized by setting $\tau = m^{1/3}$,
yielding a running time of $\widetilde{O}(n m^{2/3}\log W)$, improving on the classical  bound  in the dense regime, $m \ge n^{3/2}$.
This is indeed the main result of this paper and it
represents the first asymptotic improvement for a quantum algorithm over the best classical algorithm for MWPM in any regime, as
we are not aware of any prior quantum MWPM algorithm.

As discussed above,
our quantum algorithm is for the QRAM model, in which classical arrays can be accessed in quantum superposition. The running time of our algorithm accounts for all classical updates to the QRAM arrays, including the cost of loading the initial graph into the QRAM.
Note that since the overall running time of \FVR{} is $\Omega(m)$,
we do not require any persistent memory in QRAM from one iteration to the next.
In fact, since each call to \FVR{} happens in a different scale, the weights of all the edges must be updated. 

In order to illustrate the challenge in implementing each of the $\tau$ calls to \textsc{SearchOne} in 
$\widetilde{O}( \sqrt{mn})$ time, we consider  just the first task of \textsc{SearchOne},
which finds a maximal set of blossom-disjoint eligible augmenting paths from the free blossoms, using an ordered depth-first search with temporary search-blossom contractions. The search is performed in the contracted graph in which blossoms are contracted into vertices and only certain edges are eligible. 
The problem for the quantum algorithm is that we do not have time to build and maintain the contracted graph. Instead, we show how to search using the original graph. Once a vertex is reached, all of the vertices with which it shares a blossom are added to the search stack. 
The search algorithm still needs to  ignore edges contained within a blossom, of course, since only edges that span two blossoms should be considered. To this end, our algorithm maintains a vertex-indexed array (called $\textsc{RootB}(\cdot)$) indicating the root blossom to which it belongs. 
Our analysis establishes that the required information can be maintained efficiently and that the path in the contracted graph can be efficiently reconstructed from the information returned by the quantum algorithm that searches the original graph. 

Graph search is only the first of several tasks performed in \QSearchOne{}, our
quantum version of \textsc{SearchOne}. For example,
\QSearchOne{} augments along a maximal set of eligible vertex-disjoint
augmenting paths, contracts the eligible odd cycles
 encountered by the search into blossoms, performs dual
adjustment, and dissolves blossoms that reach a certain condition.
We prove that our \QSearchOne{} performs the same tasks 
as \SearchOne{} while running in $\widetilde{O}(\sqrt{mn})$ time.

\section{Our Quantum \FVR{} Procedure}
\label{sec:result}

\subsection{Blossom Basics}

Matching algorithms in general graphs maintain a collection of nested sets of vertices called \emph{blossoms}. A \emph{trivial} blossom is just a vertex from the original graph and a \emph{non-trivial} blossom contains more than one vertex. 
The nesting structure of the  blossoms is represented as a forest, denoted by $\Omega$,
where leaves are trivial blossoms and internal tree nodes are non-trivial blossoms. 
We say that a vertex $v$ from the original graph is \emph{contained} in blossom $B$, if $B$ is an ancestor of $v$ in the forest.
A \emph{root} blossom is the root of a tree in the forest. Note that some trees may consist of a single trivial blossom.

Each non-trivial blossom has an odd number of children, which are themselves blossoms. These blossoms  form an odd-length cycle.  If there are  $d$ children, then there are $\lfloor d/2 \rfloor$ matched edges along the cycle and one unmatched child, called the \emph{base} of the blossom.  
There can be at most one matched edge that spans the boundary of the blossom.
If such an edge exists, the blossom is said to be \emph{matched}, otherwise it is \emph{free}.
Figure \ref{fig:blossomtree} shows a portion of the original graph, the blossom structure, and the corresponding blossom tree.

\begin{figure}[htb]
\centering

\begin{minipage}{0.40\textwidth}
    \centering
    \includegraphics[width=.7\linewidth]{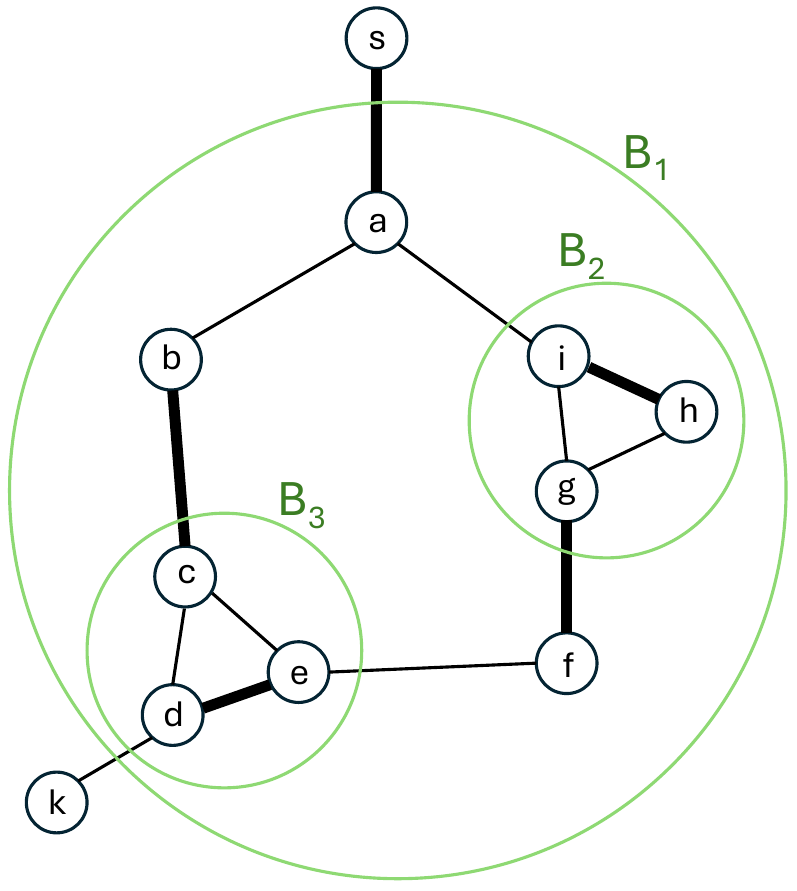}
    \vspace*{-8pt}
    \caption*{(a)}
\end{minipage}
\hspace{.3in}
\begin{minipage}{0.40\textwidth}
    \centering
    \includegraphics[width=\linewidth]{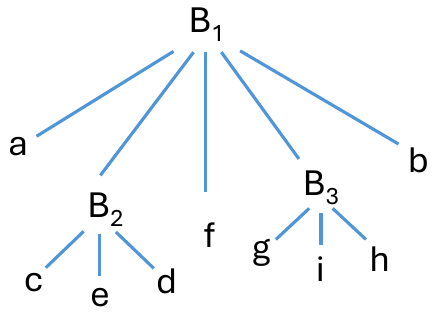}
    \caption*{(b)}
\end{minipage}
\vspace*{-6pt}
\caption{(a) An example of the blossom structure in a graph and (b) its corresponding blossom tree. Matched edges are shown in bold. All blossoms in the figure are matched. If the edge $(a,s)$ were not present in the matching, then $a$ would be a free vertex and $B_1$ would be a free blossom.}
\label{fig:blossomtree}
\end{figure}

The \Liq{} algorithm conducts searches for augmenting paths in
the contracted graph, denoted by $G/\Omega$, where
each  root blossom is contracted to a single vertex~\cite{dps}. 
Note that since the contracted graph may have multi-edges, a path in the contracted graph must specify the sequence of blossoms along the path in addition to the edges from the original graph that connect the blossoms. A path in the contracted graph is an augmenting path if the first and last blossoms are free and the edges connecting the blossoms alternate between matched and unmatched edges. Figure \ref{fig:blossompath} shows an example of an augmenting path in the contracted graph. The formal definition is given below. 

\begin{figure*}[hbt!]
\begin{center}
	\includegraphics[width=0.5\textwidth]{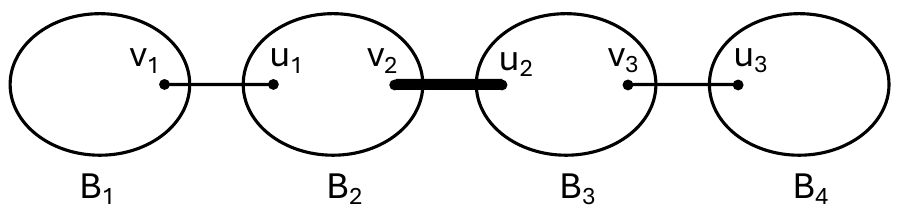}
        \end{center}
        \vspace*{-6pt}
        \caption{An example of an augmenting path in the contracted graph. The sequence of blossoms is $\langle B_1, \cdots B_r \rangle$.
    The endpoints of the edges $(u_i,v_i)$ are vertices in the original graph. The matched edge is bold.
    }
\label{fig:blossompath}

\end{figure*}

\begin{definition}
    An \emph{augmenting path of blossoms} in the contracted graph is a sequence of root blossoms
    $\langle B_1, \cdots B_r \rangle$ and a set of edges from the original graph $\langle e_1, \ldots e_{r-1} \rangle$.  Each edge $e_i$ has one endpoint in $B_i$ and one in $B_{i+1}$.
    The sequence of edges starts and ends with an unmatched edge, and alternates between matched and unmatched.
    Moreover, $B_1$ and $B_r$ are both free blossoms. 
\end{definition}

We will use the following facts about the blossom structure. It is standard for matching algorithms in general graphs to explicitly maintain the blossom structure so that these facts always hold.
For completeness, the facts below are proved in Appendix \ref{sec:contracted}.

\begin{fact}
\label{fact:aug1}
Any augmenting path of blossoms in the contracted graph can be lifted to an augmenting path in the original graph. The lifted path can be computed in time that is linear in the length of the lifted path. If the augmenting path in the contracted graph starts at blossom $B$ and ends at $B'$, then the lifted augmenting path starts at the base of $B$ and ends at the base of $B'$.
Augmenting along such a path does not change the invariants of the blossom structure. 
\end{fact}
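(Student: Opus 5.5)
The plan is to first establish a basic lifting lemma inside a single blossom, and then splice the pieces together along the contracted path. The lemma I would prove is: \emph{for any blossom $B$ (in the tree $\Omega$) and any vertex $v$ contained in $B$, there is an even-length alternating path $P_B(v)$ from $v$ to the base of $B$, starting with a matched edge (if nonempty), using only edges internal to $B$, and computable in time linear in its length.}

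The lemma is proved by induction on the height of $B$ in the blossom forest. For a trivial blossom the path is empty. For a non-trivial $B$ with children $B_0,\dots,B_{d-1}$ on the odd cycle, indexed so that $B_0$ is the base child, let $v$ lie in child $B_i$. Walk around the cycle from $B_i$ to $B_0$ in the direction whose first cycle edge leaving $B_i$ is matched. Such a direction exists because the cycle has odd length and $B_0$ is the unique unmatched child. Along this walk the cycle edges alternate matched/unmatched and end with an unmatched edge entering $B_0$.

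Within each intermediate child $B_j$, the walk enters at some vertex $x$ and leaves at some vertex $y$, exactly one of the two cycle edges at $B_j$ being matched. We use $P_{B_j}(x)$ or the reverse of $P_{B_j}(y)$, by induction, which connects the entry point to the base of $B_j$. The matched cycle edge of $B_j$ is incident to the base of $B_j$, since the base is the only vertex of $B_j$ whose mate lies outside $B_j$. Hence the segment joins up correctly and alternation is preserved. The path begins with $P_{B_i}(v)$ and ends with $P_{B_0}(\cdot)$ at the base of $B_0$, which is the base of $B$. Simplicity follows because the children are vertex-disjoint. Linear time follows if each blossom stores its cycle as a doubly linked list with the connecting edges, so each step of the recursion outputs at least one edge of the path per constant work.

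Now lift the contracted path $\langle B_1,\dots,B_r\rangle$ with edges $e_i=(u_i,v_i)$, where $u_i\in B_i$ and $v_i\in B_{i+1}$. For each internal blossom $B_i$, $1<i<r$, exactly one of $e_{i-1},e_i$ is matched, so the matched one is incident to the base of $B_i$, since $B_i$ is a root blossom whose matched external edge must leave from its base. Let $w$ be the endpoint of the unmatched edge in $B_i$, and insert $P_{B_i}(w)$, oriented appropriately, which runs from $w$ to the base with a matched first edge. Alternation then holds at both junctions. For $B_1$, insert the reverse of $P_{B_1}(u_1)$; this starts at the base of $B_1$, which is free because $B_1$ is free, and ends with a matched edge followed by the unmatched edge $e_1$. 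Do symmetrically at $B_r$. The result is a simple alternating path between two free vertices, the bases of $B_1$ and $B_r$, so it is augmenting in the original graph. Since each segment is linear in its length, the total computation time is linear in the lifted length.

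The main obstacle is the final claim that augmenting preserves the blossom invariants. Inside each blossom $B_i$ touched by the path, the segment $P_{B_i}$ is an even alternating path from the base to $w$. Flipping it makes $w$ the new unmatched vertex of $B_i$ relative to internal edges, and $w$ becomes matched via the external edge. I would argue recursively that each nested blossom on the segment remains a valid blossom with its base updated to the vertex where the path leaves it. Concretely, rotate each cycle list so that the child containing the new base becomes child $0$. Inside the cycle, the flipped portion goes from the old base child to the new one, so matched and unmatched cycle edges swap exactly along that arc. This gives again $\lfloor d/2\rfloor$ matched edges with the new base child unmatched. Blossoms not on the path are untouched, and free blossoms $B_1,B_r$ become matched. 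I expect this case bookkeeping, especially keeping the stored base pointers consistent, to be the most delicate step, but it follows Edmonds' standard argument.
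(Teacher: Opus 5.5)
Your proposal is correct and follows essentially the same route as the paper: an internal-expansion lemma proved by induction on blossom depth, splicing the segments using the fact that the matched boundary edge of each root blossom is incident to its base, and preserving the invariants by observing that the flip only rotates which child serves as the base child. Your version is more explicit than the paper's on several points (the even-length, matched-first-edge parity of $P_B(v)$, the choice of direction around the odd cycle, and the arc-swap count showing that $\lfloor d/2\rfloor$ matched cycle edges survive with the new base child unmatched). Like the paper, however, you gloss over the cost of recursion levels that output no edges (and of locating the child that contains $v$) when you claim time linear in the lifted length.
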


\begin{fact}
\label{fact:aug2}
    If a set of  blossom-disjoint augmenting paths in the contracted graph is  lifted to the original graph, the resulting lifted paths are vertex-disjoint in the original graph. 
\end{fact}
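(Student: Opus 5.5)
The claim follows from the structure of the lifting in Fact~\ref{fact:aug1}: a lifted path never leaves the root blossoms it visits. I would first record this locality. Take a path $\langle B_1,\dots,B_r\rangle$ with connecting edges $\langle e_1,\dots,e_{r-1}\rangle$. Its lift replaces each visit to a root blossom $B_i$ by an even-length alternating path $P_i$ inside $B_i$. That path runs between the endpoint of $e_{i-1}$ in $B_i$ and the endpoint of $e_i$ in $B_i$. At the first and last blossom, the base of the free blossom plays the role of the missing connecting edge. The lifted path is then the concatenation $P_1 e_1 P_2 e_2 \cdots e_{r-1} P_r$. I would justify this from the recursive lifting in Appendix~\ref{sec:contracted}. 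There, a path from a vertex $x\in B$ to the base of $B$ is built by walking around the odd cycle of children of $B$ in the direction that makes it alternate, and recursing into the children. Every vertex produced this way is a descendant of $B$ in $\Omega$, so by induction on the depth of the blossom tree, $V(P_i)\subseteq V(B_i)$.

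Given locality, the vertex set of the lifted path is contained in $\bigcup_{i} V(B_i)$. It remains to check that the edges $e_i$ add no vertices outside this union. This holds because each $e_i$ has one endpoint in $B_i$ and the other in $B_{i+1}$.

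Now take two blossom-disjoint augmenting paths $Q$ and $Q'$ in the contracted graph. They visit disjoint sets of root blossoms, and root blossoms of the laminar forest $\Omega$ are pairwise vertex-disjoint, since they are roots of distinct trees. Hence $\bigcup_{B\in Q}V(B)$ and $\bigcup_{B\in Q'}V(B)$ are disjoint. By locality the two lifted paths are vertex-disjoint, and applying this to every pair gives the statement for the whole set.

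The main obstacle is stating locality precisely. The lift inside $B_i$ must be a path within $B_i$ between two specified vertices, namely the entry and exit points of the contracted path. It need not run all the way to the base: it goes to the base only when the entry edge is matched or when $B_i$ is an endpoint blossom. In the general case, one endpoint of the internal path is the base of $B_i$, because the matched edge incident to a matched blossom is incident to its base. The other endpoint is the endpoint of the unmatched connecting edge. So every case reduces to the "vertex to base" lifting inside a single blossom, whose vertices lie in that blossom by the induction above. I would also note that augmentations performed earlier in the same batch do not affect this argument: augmenting along one path only changes matched edges inside the blossoms it visits, and by Fact~\ref{fact:aug1} it preserves the blossom invariants, so the lifts of the remaining paths, which live in disjoint blossoms, are unchanged.
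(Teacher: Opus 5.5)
Your proposal is correct and matches the paper's argument: the lift of a contracted path uses only vertices of the root blossoms on that path, and distinct root blossoms of $\Omega$ are vertex-disjoint, so blossom-disjoint contracted paths lift to vertex-disjoint paths. The only difference is that you justify the locality step by induction on blossom depth in the recursive expansion of Lemma~\ref{le:internal-expansion}, whereas the paper simply asserts it.
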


\begin{fact}
    \label{fact:blossommatch}
    Each blossom contains at most one unmatched vertex.  If the blossom  contains an unmatched vertex, then the unmatched vertex is the base of the blossom and the blossom is free. If a blossom does not contain an unmatched vertex, then the blossom is matched.
\end{fact}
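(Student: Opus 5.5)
We prove Fact~\ref{fact:blossommatch} by structural induction on the blossom tree, using the recursive definition: a non-trivial blossom $B$ with $d$ children $B_0,\dots,B_{d-1}$ (odd $d$) has these children arranged in an odd cycle. Exactly $\lfloor d/2\rfloor$ cycle edges are matched, every child except the base $B_0$ is covered by one of them, and $B_0$ is not. We strengthen the statement to the invariant \emph{$(\ast)$: every vertex of $B$ other than possibly the base vertex $\beta(B)$ is matched to another vertex of $B$, and $\beta(B)$ is either unmatched or matched to a vertex outside $B$.} Here $\beta(B)$ denotes the vertex reached by repeatedly descending into base children; for a trivial blossom, $\beta(B)$ is the vertex itself. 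The three claims of the fact follow from $(\ast)$.

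\textbf{Base case.} A trivial blossom $\{v\}$ satisfies $(\ast)$ vacuously, since $v$ is its own base.

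\textbf{Inductive step.} Suppose $(\ast)$ holds for each child $B_i$, and consider a non-odd-position child $B_i$ with $i\neq 0$. It is incident to exactly one matched cycle edge. Since the matching has at most one edge per vertex, and by induction every non-base vertex of $B_i$ is matched internally, that cycle edge must leave $B_i$ at $\beta(B_i)$. So all vertices of $B_i$ are matched within $B$. For the base child $B_0$, all vertices except $\beta(B_0)=\beta(B)$ are matched inside $B_0$. The vertex $\beta(B)$ is not matched along the cycle, because the base child is uncovered by cycle matched edges. It also cannot be matched to another child $B_j$ by a non-cycle edge: that would give $\beta(B_j)$ or some vertex of $B_j$ a second matched edge, contradicting the coverage just established. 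Hence $\beta(B)$ is unmatched or matched outside $B$, and $(\ast)$ holds for $B$.

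\textbf{Deriving the fact.} By $(\ast)$, at most one vertex of $B$, namely $\beta(B)$, can be unmatched, and if an unmatched vertex exists it is the base. In that case no matched edge crosses the boundary of $B$: every vertex is either unmatched or matched internally. So $B$ is free. Conversely, if $B$ contains no unmatched vertex, then $\beta(B)$ is matched, and by $(\ast)$ its partner lies outside $B$. So exactly one matched edge crosses the boundary, and $B$ is matched.

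\textbf{Main obstacle.} The delicate point is the identification of ``base'' in the statement with $\beta(B)$, the base vertex, rather than the base child. The other is ruling out chords: matched edges between children that are not cycle edges. Both are handled by the degree-one property of matchings as above. The fact's phrasing ``the unmatched vertex is the base'' should be read via the recursive base vertex, which is the convention Fact~\ref{fact:aug1} also uses.
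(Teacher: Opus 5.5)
Your proof is correct. It has the same skeleton as the paper's: structural induction on the blossom forest, showing that every non-base child is entirely matched inside $B$, so only the base of the base child can be unmatched, and then splitting into the free and matched cases.

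The real difference is the induction hypothesis.

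\begin{itemize}
\item \textbf{The paper's route.} It inducts on the statement of the fact itself. It then takes as a standing hypothesis the blossom invariant that any matched edge leaving a blossom is incident to that blossom's base. It applies this twice: to conclude that a child's matched cycle edge attaches at the child's base, and to conclude that a matched edge crossing the boundary of $B$ must hit $b(B)$. It also says without comment that $b(B)$ ``remains unmatched'' when nothing crosses the boundary. That tacitly excludes a matched non-cycle edge from $b(B)$ to another vertex of $B$.
\item \textbf{Your route.} Your strengthened invariant $(\ast)$ records where each vertex's mate lies. This lets you derive all of these points from only two ingredients: the cycle structure (matched cycle edges cover exactly the non-base children) and the degree-one property of a matching. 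In particular, uniqueness of the boundary-crossing matched edge and its incidence to the base become consequences rather than assumptions, and chords are ruled out explicitly.
\end{itemize}

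The paper's argument is shorter because it leans on the maintained invariant. Yours shows that this invariant is forced by the cycle structure, so it is more self-contained.

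A small wording fix: ``non-odd-position child'' should simply read ``non-base child''.
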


\subsection{The Primal-Dual Framework and Eligible Edges}

Like many algorithms for matching in general graphs, the  \Liq{} algorithm is  a primal-dual algorithm that maintains a dual variable
for each vertex and blossom~\cite{dps}. The dual variable for vertex $v$ is denoted as $y(v)$ and the dual variable for each blossom $B$ is denoted as $z(B)$. For more details on the primal-dual framework, see Appendix~\ref{ap:lp}. 

The edges under active consideration are called \emph{eligible}. 
Whether an edge $(u,v)$ is eligible or not depends on whether the edge is in the current matching as well
as it's  weight $w(u,v)$, as well as the values of $y(u)$, $y(v)$, and $z(B)$, for any blossom $B$ that contains both $u$ and $v$. 
One of the factors used to determine eligibility is whether an edge is \emph{tight} as determined by the values of the dual variables. 
The tightness condition used by Duan, Pettie, and Su~\cite{dps} depends on the following quantity:
\[
yz(u,v) = y(u)+y(v)+\sum_{B\in \Omega: e \subseteq B} z(B).\]
More specifically, an edge is \emph{tight} if
$w(u,v) = yz(u,v)$ and is \emph{almost tight} if $w(u,v)-2 = yz(u,v)$.
Note that it may not be possible to calculate the term \(\sum_{B\in \Omega: e \subseteq B} z(B)\) in constant time. However, the \FVR{} procedure only needs to determine whether an
edge $(u,v)$ is eligible when $u$ and $v$ are contained in different root blossoms.
In this case, the \(\sum_{B\in \Omega: e \subseteq B} z(B)\) term is equal to $0$
and can be  ignored. In the description of our algorithm, we will therefore use the quantity $y(u)+y(v)$
in lieu of $yz(u,v)$.

The eligibility condition used by  Duan, Pettie, and Su~\cite{dps} in their \FVR{} Procedure is called \emph{Criterion 2}.
The condition depends on whether the edge is matched. Since we will be primarily focused on edges which span two root blossoms, we define the Criterion 2 below restricted only to those edges:

\medskip
\noindent
{\bf Criterion 2:} If an edge $(u,v)$ has endpoints in two different root blossoms, the edge is \emph{eligible} if one of the following holds:
\begin{enumerate}
\setlength{\itemsep}{0pt}
    \item $(u,v)$ is matched and tight, i.e., $y(u) + y(v) = w(u,v)$.
    \item $(u,v)$ is not matched and almost tight, i.e., $y(u) + y(v) = w(u,v) - 2$.
\end{enumerate}

\subsection{\FVR{} Procedure Overview}

The outer loop of the \Liq{} algorithm runs $O(\log (nW))$ times~\cite{dps}.
In each iteration (called a \emph{scale}), one more bit of the weights is revealed.
The \FVR{} procedure is  executed exactly once
per iteration. Since \FVR{} is the only part of our algorithm with a 
quantum component, we describe it in detail here. We provide just enough background on the \Liq{} algorithm to describe the quantum portion of the procedure and prove that our algorithm completes the same set of tasks as its classical counterpart.
For completeness, Section \ref{sec:full} describes the rest of the \Liq{} algorithm and the analysis for the overall running time of our quantum algorithm.

At the beginning of the \FVR{} procedure, the adjacency list structure for the  graph accessed by the QRAM
is updated to reflect the new weights for that scale. 
\FVR{} consists of $\tau$ iterations of \SearchOne{}, where $\tau$ is a parameter that is optimized later.
Figure \ref{fig:s1overview} describes the tasks performed by \SearchOne{}. The main objective in this section is to establish that the same tasks can be completed by the quantum algorithm in time
$\widetilde{O}(\sqrt{mn})$ instead of time $\widetilde{O}(m)$ required by the classical version. 

\begin{figure}[hbt!]
\setlength{\fboxsep}{15pt}
\noindent\fbox{\begin{minipage}{.9\textwidth}
{\bf \underline{\QSearchOne{}/\SearchOne{} Tasks:}}
\begin{enumerate}
    \item {\bf Find Augmenting Paths:} Find a maximal set of blossom-disjoint eligible augmenting paths from the set of free blossoms $F$ in the contracted graph $G/\Omega$, where each root blossom is contracted into a single vertex.
    \item {\bf Augment Paths:} Lift the augmenting paths found in the first task to augmenting paths in the original graph $G$ and augment along those paths.  The matching is updated, but the blossom structure remains unchanged.
    \item {\bf Detect New Blossoms:} Build a new search forest of alternating paths from the set of free blossoms in the contracted graph $G/\Omega$. A blossom is labeled \textsc{OUT} if its distance from the root is even. A blossom $B$ is labeled \textsc{OUT} if there is an even length alternating path from a free blossom to $B$ in the contracted graph. Thus, an edge between two \textsc{OUT} blossoms in the same tree corresponds to an odd cycle. Iteratively search for \textsc{OUT}--\textsc{OUT} edges between two root blossoms in the same search tree and contract the corresponding odd cycle into a new blossom. Update, the search graph to reflect the new \textsc{OUT}/\textsc{IN} status of any blossom that  changed as a result of the merge. Repeat until no eligible \textsc{OUT}--\textsc{OUT} edges remain. 
    \item {\bf Adjust Dual Variables:} Update the dual variables: $y(v)$ is incremented if $v$ is \textsc{IN} and decremented if $v$ is \textsc{OUT}. $z(B)$ is incremented by $2$ if $B$ is \textsc{OUT} and decremented by $2$ if $B$ is \textsc{IN}. 
    \item {\bf Dissolve Zero-valued Blossoms:} For any root blossom $B$ such that $z(B) = 0$, dissolve $B$. 
\end{enumerate}  
\end{minipage}
}
\caption{Overview of the tasks in \QSearchOne{}/\SearchOne{}. These tasks are the same in
both the classical \SearchOne{} algorithm and our \QSearchOne{} algorithm; we use
this framework to show that our quantum algorithm correctly performs all the tasks required
in the classical \SearchOne{} method.
}
\label{fig:s1overview}
\end{figure}

In our algorithm, Tasks $1$ and $3$ will be replaced by quantum routines and the rest of the tasks remain classical. Note that even the  classical part of the procedure may need to update data stored in the QRAM for future iterations. 
The table in Figure \ref{fig:data}  shows the data structures used and maintained by \SearchOne{}. The third column indicates whether the data structure is accessed by QRAM, while the fourth indicates  whether it is local to one iteration of \QSearchOne{} or persistent throughout the algorithm.

\begin{figure}

\begin{center}
\begin{tabular}{|p{0.14\linewidth}|p{0.39\linewidth}|p{0.17\linewidth}|p{0.18\linewidth}|}
\hline
{\bf Name} & {\bf Function}  & {\bf QRAM access} & {\bf Persistence}\\
\hline
$adj(v,i)$ & Ordered pair $(a_v(i), w(v, a_v(i))$, where $a_v(i)$ is the $i^{th}$ neighbor of $v$ in the adjacency list representation and $i \in \{1, \ldots, d_v\}$. & Indexed by $(v,i)$ & Persistent and Static\\
\hline
$M$ & Current matching. $M[v]$ is the vertex to which $v$ is matched. If $v$ is a free vertex, then $M[v] = \perp$ & Not accessed & Persistent\\
\hline
$\Omega$ & Forest of nested blossoms & Not accessed & Persistent\\
\hline
$y$ & Dual variables for each vertex. & Indexed by $v$ & Persistent\\
\hline
$z$ & Dual variables for each blossom. & Not accessed & Persistent\\
\hline
$\textsc{RootB}[v]$ & The name of the root blossom to which $v$ belongs & Indexed by $v$ &  \QSearchOne{} \mbox{Local}\\
\hline
$UIO[v]$ & The current search label & Indexed by $v$ & \QSearchOne{} \mbox{Local}\\
\hline
$\mathrm{OutTime}[B]$ & Time reached in search & Indexed by $B$ & \QSearchOne{} \mbox{Local}\\
\hline
$\mathrm{predEdge}[B]$ & Original graph edge joining $B$ to parent in search structure & Not accessed & \QSearchOne{} \mbox{Local}\\
\hline
\end{tabular}
\end{center}
\vspace*{-11pt}
\caption{The data structures used and maintained by \SearchOne{}. The third column indicates whether the data structure is accessed by QRAM, and the fourth column indicates whether it is local to one iteration of \QSearchOne{} or persistent throughout the algorithm.}
\label{fig:data}
\end{figure}

The classical \Liq{} algorithm maintains most of the data structures in the table, as well as explicitly maintaining $G/\Omega$.
Our quantum version of the algorithm cannot afford the time to build and maintain the contracted graph $G/\Omega$, however, so we introduce
$\textsc{RootB}[\cdot]$ to quickly determine whether an edge spans two root blossoms.
In particular, $u$ and $v$ are contained in different root blossoms if and only if
$\textsc{RootB}[u] \neq \textsc{RootB}[v]$.
The $\textsc{RootB}[\cdot]$ array is rebuilt in $O(n)$ time at the beginning of
each iteration of \SearchOne{} and must be kept updated throughout each iterations.

\paragraph{Overview of Task 1 of \textsc{QSearchOne}: Quantum Path Search.}
\label{sec:QPQSearch_over}
The first task of \QSearchOne{}
 effectively searches the contracted graph for a maximal set
of blossom-disjoint augmenting paths from $F$, the set of free blossoms in $G/\Omega$.
The algorithm will then augment along these paths lifted to the original graph.

We call our quantum algorithm that performs this task \textsc{QuantumPathSearch}{}, and the pseudocode is given in Algorithm \ref{alg:qpathsearch}.
The algorithm is an adaptation of the DFS  augmenting-path search of
Gabow and Tarjan, described in Section $8$ of ~\cite{GabowTarjan91}. 
The Gabow-Tarjan search algorithm uses DFS to construct a forest of alternating search trees in the contracted graph, starting with a free blossom. When an augmenting blossom-path is found or the current search is exhausted, the algorithm starts again from a new, unlabeled free blossom.

Periodically an odd cycle is encountered that needs to be contracted into a new blossom.
The Gabow-Tarjan algorithm stores these newly formed blossoms implicitly in a search structure.
We store them explicitly as temporary blossoms used in the search.
This is accomplished by making a copy $\Omega'$ of the current blossom structure
$\Omega$ at the beginning of \textsc{QuantumPathSearch}{}.
$\Omega'$ is augmented with new blossoms throughout the search and then discarded 
at the end, restoring the original blossom structure $\Omega$
from the beginning of the search.
The temporary blossom structure enables us to use a  union-find labeling scheme to keep the
$\textsc{RootB}[\cdot]$ array up to date. 
The Gabow-Tarjan algorithm also tracks the time that any vertex contained in a blossom was first placed on the DFS search stack, corresponding to the time it was first labeled \out{}.
This value (which we call $\mathrm{OutTime}[B]$) is used to determine when a new blossom will be formed. 

Since a classical  algorithm  has the luxury of $O(m)$ time to maintain its data structures, it can build the contracted graph explicitly and search directly in that graph. In this scenario,
each root blossom of the initially contracted graph $G/\Omega$ is assigned
a label, initialized to \textsc{UNLABELED}. During the alternating DFS,
a newly reached blossom is labeled \textsc{IN} or
\textsc{OUT} according to the parity of its alternating path from the
free root. 
Since the paths in the DFS tree must alternate, edges from an \textsc{OUT} blossom to its \textsc{IN} children are unmatched.  Each \textsc{IN} blossom has at most one child in the DFS tree, to which it is connected by a matched edge.
When a new \textsc{OUT} blossom is reached, it is put on the stack so that its unmatched outgoing edges can be explored. 

In \textsc{QuantumPathSearch}{},  the search is carried out in the original graph $G$. Thus, the $UIO$ array stores the labels indexed by vertex.
When the first vertex in a root blossom is reached, all of the vertices in that blossom are given the same label. 
If the label is \textsc{OUT}, then all the vertices from that blossom are added to the DFS stack.
Since the classical portion of \textsc{QuantumPathSearch}{} has access to the entire blossom structure (denoted by $\Omega'$), it can enumerate the vertices inside the blossom. For blossom $B$, we use $V(B)$ to denote all the graph vertices contained inside $B$ (corresponding to the leaves of $B$ in $\Omega$). 
Thus,  the vertices in $V(B)$ can be enumerated in time that is linear in $|V(B)|$.
Note that $V(\textsc{RootB}[x])$ denotes the set of vertices contained in the same root blossom as vertex $x$.

When an \textsc{OUT} vertex $v$ is at the top of the stack, \textsc{QuantumPathSearch}{} has to find an eligible unmatched edge from $v$ that leads to a vertex outside its root blossom. This is the heart of the quantum portion of the algorithm: $\textsc{GroverSearch}(v)$ searches for the an actionable step in $v$'s adjacency list. 
The criteria for a candidate neighbor $u$
are given in Figure \ref{fig:grover2}.
An unsuccessful Grover Search returns $\perp$.

\begin{figure}[h]
\begin{center}
\noindent\fbox{\begin{minipage}{.95\textwidth}
\vspace{.1in}
$\textsc{GroverSearch}(v)$  searches the adjacency list of $v$ for a neighbor
$u$  satisfying the following eligibility criteria:
\begin{enumerate}
    \item
$y(v) + y(u) = w(u,v) - 2$ ~~~~~~~~~~~~~ (i.e., is the edge almost tight?) 

\item 
$\GE{}$: ~~~~~~$UIO[u] = \textsc{UNLABELED}$

{\em OR}

\BE{}: ~~~~~~$UIO[u] = \out{} ~~\wedge~~ \mathrm{OutTime}[\textsc{RootB}[u]] > \mathrm{OutTime}[\textsc{RootB}[v]].$
 
\end{enumerate}

\vspace{.1in}
\end{minipage}
}
\caption{Criteria for $\textsc{GroverSearch}(v)$}
\label{fig:grover2}
\end{center}
\end{figure}

We use a function $f_v(i)$ for the eligibility criterion for a neighbor of $v$ in our Grover
search procedure: $f_v(i)$ is equal to $1$ if and only if  the two conditions on the list above are satisfied.
Note that $f_v(i)$ can be computed in constant time, given the QRAM accessible arrays described in Figure \ref{fig:data}.
Notice that it is unnecessary to explicitly check that $u$ and $v$ are in different root blossoms ($\textsc{RootB}[v] \neq \textsc{RootB}[u]$), since 
$\textsc{GroverSearch}(v)$ is only performed if $UIO[v] = \out{}$.
In this case, the conditions for both a $\GE$ or a $\BE$ imply that
$u$ and $v$ are contained in different root blossoms.

The alternating search tree is stored by the
$\mathrm{predEdge}[\cdot]$ array, indexed by  root blossoms.
$\mathrm{predEdge}[B]$ is the  edge from the original graph
joining $B$ to its parent in the tree.
If $\mathrm{predEdge}[B] = (u,v)$, then $\textsc{RootB}[v] = B$
and the parent of $B$ in the search tree is $\textsc{RootB}[u]$.
The pseudocode for \textsc{QuantumPathSearch} is shown in Algorithm \ref{alg:qpathsearch}
in Section~\ref{sec:proofs}. We will illustrate the outcome of a 
$\GE$ separately from a $\BE$.

\begin{figure}[H]

\centering

\includegraphics[width=.7\linewidth]{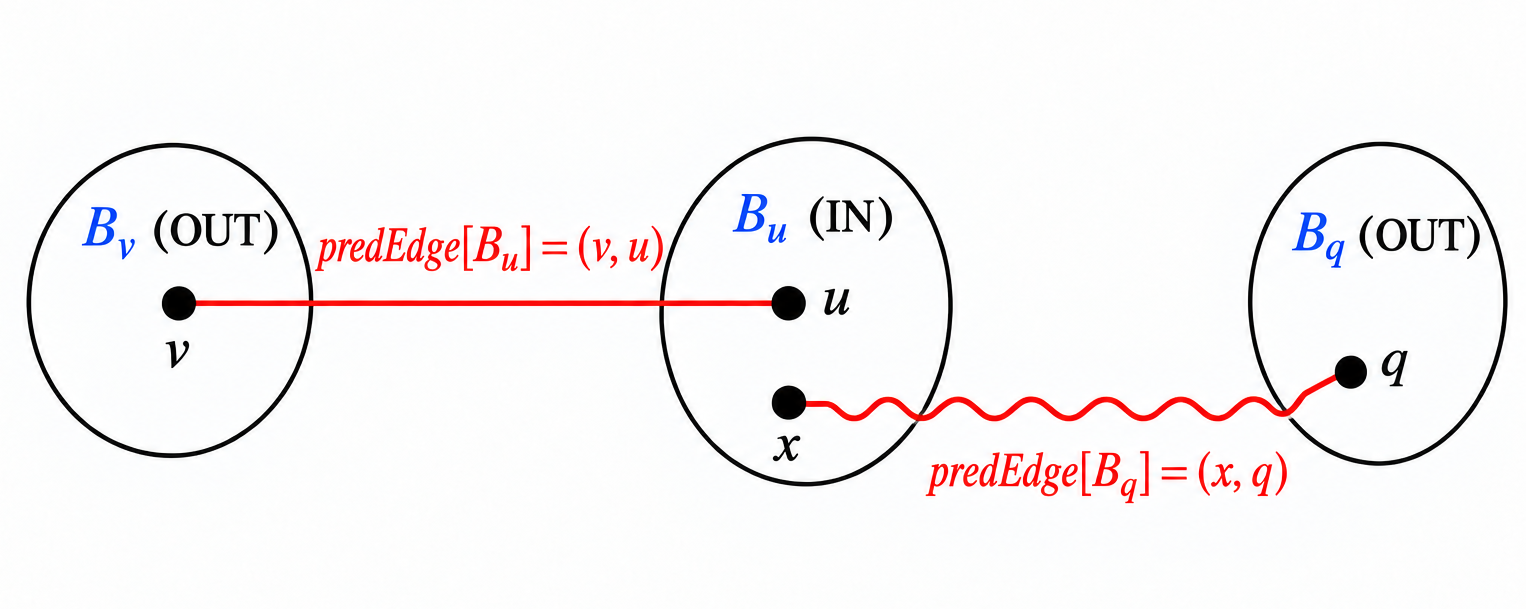}
\vspace*{-6pt}

\caption{A $\GE$ in \textsc{QuantumPathSearch}. The eligible unmatched edge $(v,u)$ reaches the previously
UNLABELED blossom $B_u$, so
$\mathrm{predEdge}[B_u]=(v,u)$ and $B_u$ becomes \textsc{IN}.
Since $B_u$ is matched, its eligible external matched edge
$(x,q)$, shown as a wiggly edge, is followed to the
\textsc{OUT} blossom $B_q$, with
$\mathrm{predEdge}[B_q]=(x,q)$.}

\label{fig:pathsearch}
\end{figure}

{\bf A $\GE$:}
Figure \ref{fig:pathsearch} shows the outcome of a single $\GE$.
A $\GE$ happens if the result of \textsc{GroverSearch}
applied to the vertex $v$ at the top of the stack
finds eligible edge $(v,u)$, where $u$ is unlabeled.
If $B_u$ denotes $u$'s root blossom, then 
 $\mathrm{predEdge}[B_u]$ array records the edge $(v,u)$
 in order to trace the blossom path up the search tree.
All of the vertices in $u$'s root blossom, denoted by $V(\textsc{RootB}[u])$, are labelled \textsc{IN}.

The next task is to determine whether blossom $\textsc{RootB}[u]$ is matched.
Figure \ref{fig:pathsearch} illustrates the case where $\textsc{RootB}[u]$ is matched and its outgoing matched edge $(x,q)$ is eligible. 
Vertex $x$ is contained in $\textsc{RootB}[u]$ and 
$q$ is contained in a new blossom $B_q = \textsc{RootB}[q]$.
Then
 $\mathrm{predEdge}[B_q]$ array records the edge $(x,q)$
 in order to trace the blossom path up the search tree.
 All of the vertices in $q$'s root blossom, denoted by $V(\textsc{RootB}[q])$, are labelled \textsc{OUT} and placed on the search stack to continue the DFS search.
Note that it is possible that blossom $\textsc{RootB}[u]$ is matched but its outgoing matched edge is ineligible. In this case, the search backtracks to vertex $v$.

Another outcome of a $\GE$ is that $\textsc{RootB}[u]$ is unmatched.
Note that because of Fact \ref{fact:blossommatch}, there can be at most one unmatched vertex in $\textsc{RootB}[u]$. If an unmatched vertex $x$ in $\textsc{RootB}[u]$ is discovered, that signals that an augmenting path has been found. Following the edges in the $\mathrm{predEdge}$ array
back to the free root
gives the current  alternating path; temporary blossoms on that
path are expanded from their defining cycles in $\Omega'$. The resulting
augmenting path in the original contracted graph is added to $\Psi$, while the matching itself is not changed
until Task~2.

{\bf A $\BE$:}
There is a second possible outcome of
\textsc{GroverSearch}$(v)$. If $u$ is already labeled
\textsc{OUT}, then the edge $(v,u)$ may close an odd alternating
cycle. Following the ordered DFS of Gabow and Tarjan, such
an edge is acted upon only when
\[
\textsc{OutTime}[\textsc{RootB}[u]]
>
\textsc{OutTime}[\textsc{RootB}[v]].
\]
The Gabow-Tarjan proof (and our Corollary \ref{cor:blossomevent})
shows that in this case, $\textsc{RootB}[u]$ is always a descendant
of $\textsc{RootB}[v]$ in the search tree, as depicted in Figure \ref{fig:blossomdetect}.
The corresponding odd cycle is then contracted into a temporary
search blossom in $\Omega'$, while the persistent blossom family
$\Omega$ is unchanged. The IN blossoms in the newly formed blossom
become OUT. All of the graph vertices in those  newly labeled OUT blossoms and are inserted into the DFS stack, in  descending order 
along the search tree path, as required
by the Gabow--Tarjan DFS. The new temporary blossom inherits the
 $\mathrm{predEdge}$ value of  base, and its
defining cycle is stored in $\Omega'$.

\begin{figure}[htb]
\centering

\begin{minipage}{0.40\textwidth}
    \centering
    \includegraphics[width=.76\linewidth]{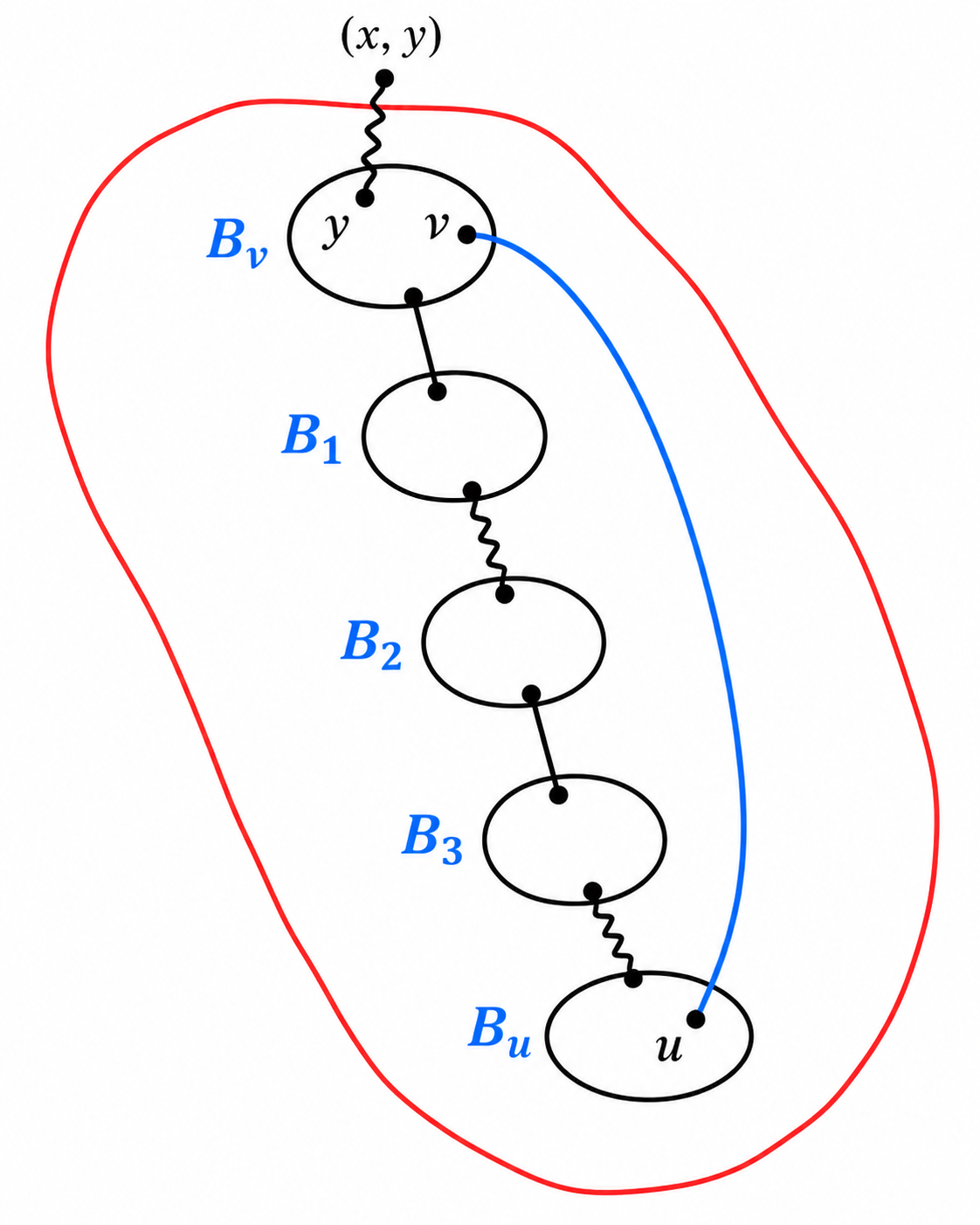}
    \vspace*{-8pt}
    \caption*{(a)}
\end{minipage}
\hspace{.3in}
\begin{minipage}{0.40\textwidth}
    \centering
    \includegraphics[width=\linewidth]{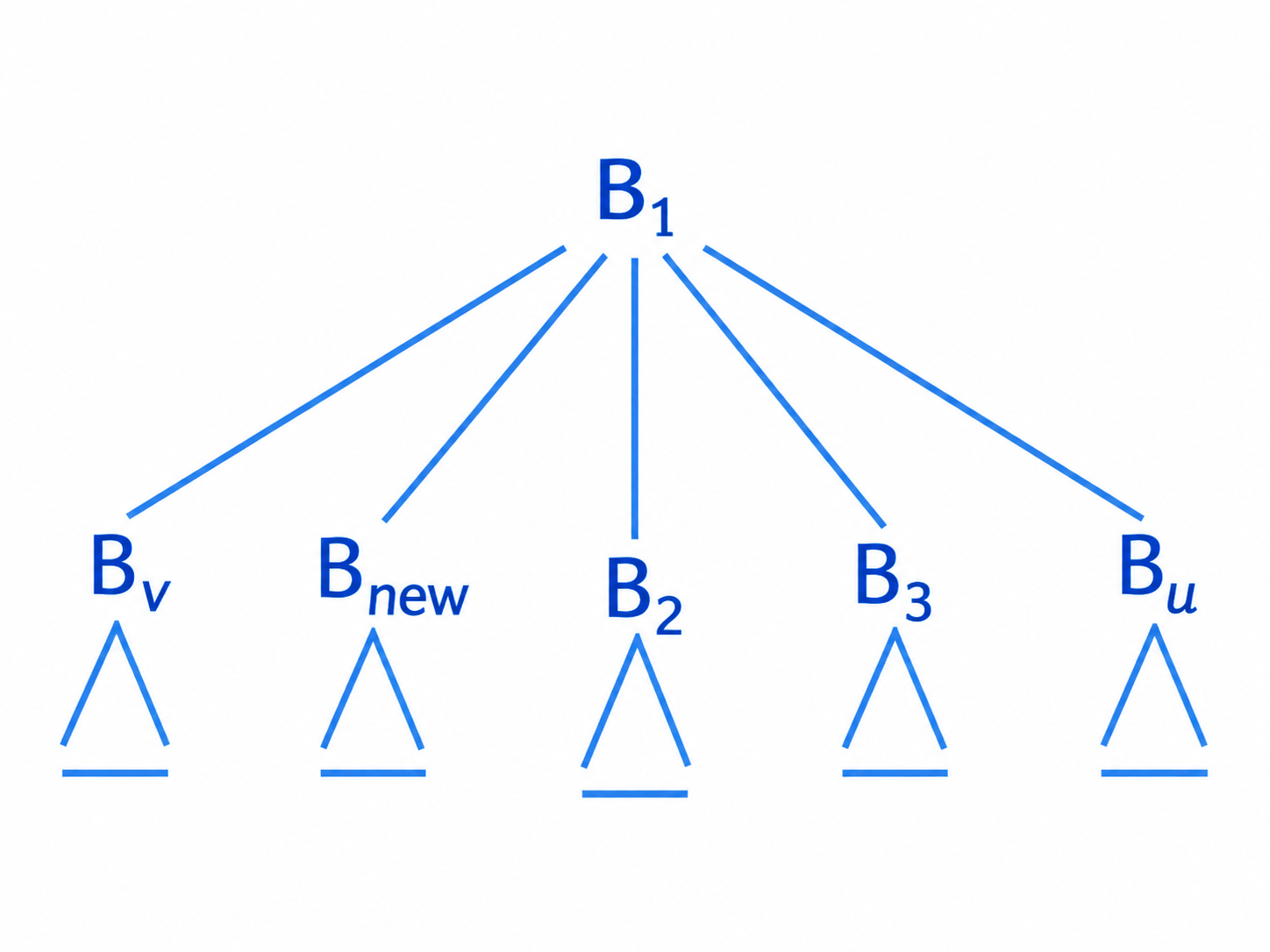}
    \caption*{(b)}
\end{minipage}
\vspace*{-6pt}
\caption{(a) The figure on the left shows $\textsc{OUT} - \textsc{OUT}$ edge $(u,v)$ and the odd cycle in the DFS tree that will be contracted. Assume wlog that $B_1$ is the blossom with the most vertices in the cycle. $B_1$ will receive a new label $B_{new}$ and the new blossom will get label $B_1$. The $\textsc{RootB}[\cdot]$ value  must be updated for the vertices in $B_v, B_2, B_3, B_u$, but not $B_1$.
(b) The blossom forest $\Omega'$ shown on the right has a new blossom whose children are the blossoms in the cycle. 
The base of the new cycle is $B_v$. 
  $\mathrm{predEdge}$  for the new blossom will be $(x,y)$.}
\label{fig:blossomdetect}
\end{figure}

The classical blossom contraction task usually uses a union-find structure with path
compression to update the blossom structure. When a new blossom is created, we need to update the $\textsc{RootB}$ array so that all of the vertices in the new blossom have the same value. By updating the vertices in all the children of the new blossom, except the largest, we are implicitly implementing a union-by-size data structure based on labeling each item with the name of the set to which
it belongs. 
The reason we use this implementation is that following pointers in a classical tree-based union-find data structure cannot be incorporated into a Grover-style quantum search. 
As a result, we can upper bound the number of times
$\textsc{RootB}$ is updated for a particular vertex $x$ by $\log n$.
Informally, every time $\textsc{RootB}[v]$ is relabeled, the number of vertices in its root blossom at least doubles.
This is formally established in Lemma \ref{le:union-by-size}.
Figure \ref{fig:blossomdetect} illustrates an odd cycle found as the result of an eligible $\textsc{OUT} - \textsc{OUT}$ edge $(u,v)$ and the resulting changes to the blossom structure.

{\bf Clean Up:}
After all free roots have been processed, the search-local
forest $\Omega'$ and the corresponding temporary modifications to
$\textsc{RootB}$ are discarded. Thus Task~1 returns the
augmenting-path family $\Psi$ while leaving the persistent blossom
family $\Omega$ unchanged, as required by the augmentation step of
\textsc{SearchOne}.
The paths in $\Psi$ form a maximal set of blossom-disjoint augmenting paths,
where the blossoms are from the original contracted graph $G/\Omega$.

The correctness for \textsc{QuantumPathSearch} is proven in Section \ref{sec:qpathcorrect}
Lemma \ref{lem:qpath-valid} establishes that the paths returned are eligible and blossom-disjoint, and Lemma \ref{lem:qpath-maximal} establishes maximality.
For both proofs, the blossoms are root blossoms in the original blossom structure $\Omega$.
Lemma \ref{le:qpath-runtime}, proven below, establishes that 
\textsc{QuantumPathSearch} runs in time
$\widetilde{O}(\sqrt{mn})$. The analysis uses the following bound for Grover search:

\begin{lemma}[Grover/BBHT search]
\label{le:grover}
Given  access to a predicate $f$ on domain $[N] = \{1. \ldots, N\}$, which is computable in constant time, there is a
bounded-error quantum procedure that can find an $x \in [N]$ such that $f(x)=1$ in time $\widetilde{O}(\sqrt{N/k})$, where $k$ is the number of $x \in [N]$ such that $f(x)=1$. If $f(x)=0$ for all $x \in [N]$, then the algorithm runs in time $\widetilde{O}(\sqrt{N})$ and returns $\perp$.
With a logarithmic number of repetitions, which is hidden in the $\widetilde{O}(*)$ notation, the probability of failure can be bounded by any inverse polynomial. 
\end{lemma}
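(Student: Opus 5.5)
This lemma is the standard Grover/BBHT result, so the plan is to assemble it from known amplitude-amplification facts rather than prove anything new.

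The first step is to turn the constant-time predicate $f$ into a phase oracle $O_f:\ket{x}\mapsto(-1)^{f(x)}\ket{x}$. In the QRAM model, $f$ reads a constant number of QRAM-accessible cells and does $O(1)$ arithmetic, which gives a reversible circuit computing $\ket{x}\ket{0}\mapsto\ket{x}\ket{f(x)}$. Phase kickback onto an ancilla in state $\ket{-}$, followed by uncomputing the garbage, yields $O_f$ at $O(1)$ cost, up to polylogarithmic factors for the $\log N$-bit registers. The diffusion operator $2\ket{s}\bra{s}-I$, with $\ket{s}$ the uniform superposition over $[N]$, costs $O(\log N)$ gates.

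The second step handles the unknown count $k$, using the BBHT exponential-schedule argument. Start with $\lambda=6/5$ and $m=1$. In each round, pick $j$ uniformly from $\{0,\ldots,m-1\}$, apply $j$ Grover iterations to $\ket{s}$, measure, and check the outcome classically with one evaluation of $f$. If the check fails, set $m\leftarrow\min(\lambda m,\sqrt N)$ and repeat.

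The key fact is the BBHT lemma. With $\sin^2\theta=k/N$, the average success probability over random $j<m$ is
\[
\tfrac12-\tfrac{\sin(4m\theta)}{4m\sin(2\theta)},
\]
which is at least $1/4$ once $m\ge 1/\sin(2\theta)$, and $1/\sin(2\theta)=O(\sqrt{N/k})$. Summing the geometric series of costs up to the critical stage, and bounding the expected cost of the later stages by a convergent series, gives expected total work $O(\sqrt{N/k})$. Each iteration costs $\widetilde O(1)$.

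The third step covers the case $k=0$ and the error bounds. Cap the total number of iterations at $C\sqrt N$ for a suitable constant $C$. If no verified solution appears by then, return $\perp$. When $k=0$, verification never succeeds, so the output is always $\perp$ after $\widetilde O(\sqrt N)$ time, with no error at all. When $k\ge1$, Markov's inequality converts the expected-time bound into a constant success probability within $O(\sqrt{N/k})$ time. Running $O(\log n)$ independent repetitions, each truncated at that budget, drives the failure probability below any inverse polynomial. Because every candidate is checked classically, an output is never a false positive; the only possible error is a spurious $\perp$.

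The main obstacle is the amortized expected-time analysis with unknown $k$, in particular the averaged success-probability bound. I would cite BBHT directly for it rather than rederive it. There is a smaller subtlety in the truncation: the $k\ge1$ time bound must not depend on knowing $k$. This works because the repetitions are restarted with the same growing schedule, and each repetition succeeds with constant probability by the point the schedule reaches $\Theta(\sqrt{N/k})$.
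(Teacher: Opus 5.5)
The paper gives no proof of this lemma; it imports it from Grover and BBHT, so reconstructing BBHT as in your Steps 1--2 is the right plan. The gap is in the amplification of Step 3. You cannot truncate repetitions at a budget of $O(\sqrt{N/k})$ because $k$ is unknown, and ``restarting with the same growing schedule'' does not repair this. A repetition that has not succeeded when its $m$ passes $\Theta(\sqrt{N/k})$ cannot detect that fact, so it keeps enlarging $m$ toward the $C\sqrt N$ cap before the next repetition starts. Every stage past the critical one still fails with probability at least $1/4$, because the averaged success probability you quote is at most $3/4$ once $m\ge 1/\sin(2\theta)$, while each stage costs a factor $\lambda$ more. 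So the running-time tail of a single run decays only polynomially in the overshoot factor. The probability that the first repetition overshoots $\sqrt{N/k}$ by a polylogarithmic factor is therefore not inverse-polynomially small. Sequential restarts with the fixed $C\sqrt N$ cap give time $O(\sqrt{N/k})$ only in expectation, with one-sided error $2^{-r}$ after $r$ restarts; for that reading you need no $k$-dependent truncation at all. The paper, however, uses the high-probability form. Its main theorem asserts correctness and running time together with probability $1-n^{-c}$, and its runtime lemmas sum per-call bounds as if they held deterministically.

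To get that form, put the repetitions inside the schedule. For each value of $m$, run $r=\Theta(\log n)$ independent trials with fresh $j<m$ (equivalently, advance $r$ copies of the schedule in lockstep), and stop at the first verified hit. Once $m\ge 1/\sin(2\theta)$, a stage fails with probability at most $(3/4)^r\le n^{-c}$. The number of Grover iterations through that stage is $O(r/\sin(2\theta))=O(\sqrt{N/k}\log n)$ deterministically. If $k=0$, the schedule reaches the cap after $O(\sqrt N\log n)$ iterations and returns $\perp$. This is precisely the ``logarithmic number of repetitions hidden in $\widetilde O$'' of the statement. Separately, your claim $1/\sin(2\theta)=O(\sqrt{N/k})$ is false for $k$ near $N$: $1/\sin(2\theta)=N/\bigl(2\sqrt{k(N-k)}\bigr)$ is unbounded as $k\to N$. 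BBHT assume $k\le 3N/4$ and dispose of $k>3N/4$ by classical sampling. You should do the same, e.g.\ by first checking $\Theta(\log n)$ uniformly random elements.
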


An actionable outcome of  \textsc{GroverSearch}$(v)$ corresponds to an edge $(v,u)$ that results in a $\GE$ or a $\BE$.
Since a vertex can transition from \textsc{unlabeled} to \textsc{IN} to \out{}, this suggests that an edge incident to $v$ could transition from actionable, to non-actionable, and then back to actionable again.
This would be a problem for the analysis below since Grover takes less time to find an actionable edge incident to $v$ when there are many of them to select from.
Lemma~\ref{lem:actionable-monotonicity} (proven in
Section~\ref{sec:qpathrun}) establishes that between consecutive calls to
\textsc{GroverSearch}$(v)$, an edge incident to $v$ that is non-actionable cannot later become
actionable. Thus every edge responsible for a future successful call from
$v$ is already actionable at each earlier call from $v$.
This fact is used in the proof below.

\begin{lemma}
\label{le:qpath-runtime}
\textsc{QuantumPathSearch} runs in time
\(
\widetilde{O}(\sqrt{mn})
\).
\end{lemma}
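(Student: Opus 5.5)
We split the running time into a classical part and a quantum part. The quantum part is the total cost of all calls to $\textsc{GroverSearch}(v)$; the classical part is everything else.

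\textbf{Classical part.} It covers labeling, stack pushes, writing $\mathrm{predEdge}$, following matched edges, blossom contraction with $\textsc{RootB}$ relabeling, path reconstruction, and cleanup. We bound it by $\widetilde{O}(n)$ as follows.

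\begin{itemize}
\item Each root blossom of $G/\Omega$ is labeled at most once by a $\GE$. Its vertices are enumerated once, in time $O(|V(B)|)$, so labeling costs $O(n)$ in total.
\item Each vertex is pushed onto the stack when its blossom first becomes \out{}, either directly or through a blossom contraction. After that it is \out{} for the rest of the search, so there are $O(n)$ pushes.
\item Each blossom contraction reduces the number of root blossoms in $\Omega'$ by at least two, so there are $O(n)$ contractions. Walking the cycle costs time linear in the number of blossoms on it.
\item By Lemma \ref{le:union-by-size}, each vertex is relabeled in $\textsc{RootB}$ at most $\log n$ times, giving $O(n\log n)$ for relabeling.
\item Reconstructing augmenting paths costs time linear in their length (Fact \ref{fact:aug1}). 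The paths are blossom-disjoint, so their total length is $O(n)$.
\item Initialization and cleanup of $\textsc{RootB}$, $UIO$ and $\Omega'$ cost $O(n)$.
\end{itemize}

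\textbf{Quantum part.} We classify each call to $\textsc{GroverSearch}(v)$ by its outcome.

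An \emph{unsuccessful} call returns $\perp$, after which $v$ is popped and never pushed again. There is at most one such call per vertex, each costing $\widetilde{O}(\sqrt{d_v})$ by Lemma \ref{le:grover}.

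A \emph{successful} call triggers either a $\GE$ or a $\BE$, and each event labels a new blossom or contracts blossoms. By the counting above, there are only $O(n)$ successful calls in total. The difficulty is that a successful call at $v$ costs $\widetilde{O}(\sqrt{d_v/k})$, where $k$ is the current number of actionable edges at $v$, and $k$ could be small.

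\textbf{Main step: amortizing successful calls at a fixed vertex.} Fix $v$. Let $s_v$ be the number of successful calls at $v$ and $k_1,k_2,\dots,k_{s_v}$ the numbers of actionable edges at those calls.

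Lemma \ref{lem:actionable-monotonicity} says that a non-actionable edge at $v$ never becomes actionable again. Each successful call acts on its edge and makes it non-actionable: its endpoint becomes labeled, or the two endpoints become the same root blossom. Hence the actionable set at $v$ strictly shrinks after each success. The edge found at the $j$-th-from-last success, together with those found at all later successes, was already actionable at that call, so the $j$-th-from-last call has $k\ge j$.

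It follows that the total cost at $v$ is
\[
\widetilde{O}\Big(\sqrt{d_v}+\sum_{j=1}^{s_v}\sqrt{d_v/j}\Big)=\widetilde{O}\big(\sqrt{d_v}+\sqrt{d_v s_v}\big).
\]

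\textbf{Summing over vertices.} Summing over $v$ and applying Cauchy--Schwarz with $\sum_v d_v=2m$ and $\sum_v s_v=O(n)$ gives
\[
\sum_v \sqrt{d_v}\le\sqrt{2mn},\qquad \sum_v\sqrt{d_v s_v}\le\sqrt{\textstyle\sum_v d_v\cdot\sum_v s_v}=O(\sqrt{mn}).
\]

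\textbf{Error probability.} There are $O(n)$ Grover calls in total. Lemma \ref{le:grover} makes each fail with inverse-polynomial probability at only polylogarithmic extra cost, and a union bound over all calls keeps the overall error small. The total running time is therefore $\widetilde{O}(\sqrt{mn}+n)=\widetilde{O}(\sqrt{mn})$.

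\textbf{Expected obstacle.} The delicate step is the claim that each success strictly removes its found edge from the actionable set, which is what lets us index successes by the remaining count $k\ge j$.
\begin{itemize}
\item For a $\GE$, the edge $(v,u)$ becomes non-actionable because $u$ becomes \textsc{IN} or \out{}.
\item For a $\BE$, $u$ and $v$ merge into the same root blossom.
\end{itemize}
Combined with Lemma \ref{lem:actionable-monotonicity}, this gives the harmonic-type bound above.
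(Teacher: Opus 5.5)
Your proposal is correct and follows the paper's proof step for step: $O(n)$ successful calls, at most one final $\perp$ call per vertex, Lemma~\ref{lem:actionable-monotonicity} to get at least $j$ actionable edges at the $j$-th-from-last success, the harmonic sum giving $\widetilde{O}(\sqrt{d_v s_v})$, Cauchy--Schwarz, and $\widetilde{O}(n)$ classical work via Lemmas~\ref{le:cycle-cost} and~\ref{le:union-by-size} or your direct counting. The paper also leaves implicit the step you flag, that each found edge stays non-actionable until $v$ is rescanned; Lemma~\ref{lem:actionable-monotonicity} does not literally cover it, since that edge was actionable at the scan time, but it holds because any contraction that later turns $u$'s \textsc{IN} blossom into \out{} has its base at or above $\textsc{RootB}[v]$ and therefore absorbs $v$.
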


\begin{proof}
Each time a vertex $v$ is at the top of the stack, the algorithm performs one
Grover-style search over its adjacency list of size $d_v$. Let $s_v$ denote the
number of successful calls from $v$, that is, calls that result in a $\GE$ or a $\BE$. The total number of searches that return a new unlabeled neighbor (resulting in a $\GE$) is at most $n$.
The total number of searches that result in a $\BE$ is also at most $n$.
Therefore, 
$\sum_v s_v \le 2n$.

An actionable outcome of  \textsc{GroverSearch}$(v)$ corresponds to an edge $(v,u)$ that results in a $\GE$ or a $\BE$.
Lemma~\ref{lem:actionable-monotonicity} establishes that between
consecutive calls to \textsc{GroverSearch}$(v)$, an edge incident to $v$  that is
non-actionable cannot become actionable.
Therefore, on the $j^{th}$  successful call from $v$, there are at least $s_v-j+1$ actionable remaining edges incident to $v$   among the $d_v$ candidates. 

By Lemma~\ref{le:grover}, the cost of the $j^{th}$   call from $v$ is $\widetilde{O}\!\left(\sqrt{d_v/(s_v-j+1)}\right)$.
Summing over the successful calls from $v$ gives
\[
\sum_{k=1}^{s_v} \widetilde{O}\!\left(\sqrt{\frac{d_v}{k}}\right)
=
\widetilde{O}\!\left(\sqrt{d_v} \sum_{k=1}^{s_v} \frac 1 {\sqrt{k}}\right)
=
\widetilde{O}\!\left(\sqrt{d_v\,s_v}\right).
\]
The last equation follows from the fact that $\sum_{j=1}^s j^{-1/2} = 2 \sqrt{s} +  O(1)$.
This identity can be verified by observing that the difference between the summation and the integral $\int^s_1 j^{-1/2} dj$ is $\pm O(1)$.

Using the Cauchy--Schwarz inequality,
\[
\sum_v \widetilde{O}\!\left(\sqrt{d_v\,s_v}\right)
\le
\widetilde{O}\!\left(\sqrt{\Bigl(\sum_v d_v\Bigr)\Bigl(\sum_v s_v\Bigr)}\right)
=
\widetilde{O}(\sqrt{mn}),
\]
since $\sum_v d_v=2m$ and $\sum_v s_v\le 2n$.

There is also at most one final unsuccessful search for each processed vertex,
whose total contribution is $\sum_v \widetilde{O}(\sqrt{d_v})
=
\widetilde{O}(\sqrt{mn})$.
Finally, the classical work spent on maintaining the temporary blossom structure
and on changing
\textsc{IN} vertices into \textsc{OUT} is $O(n)$ by
Lemma~\ref{le:cycle-cost}, and the union-by-size overhead is $O(n\log n)$ by
Lemma~\ref{le:union-by-size}. Hence the overall running time is
$\widetilde{O}(\sqrt{mn})$.
\end{proof}

\paragraph{Overview of task 3 of \textsc{QSearchOne}: Quantum Blossom Detection.}
After Task~2, Lemma~\ref{le:nomore} guarantees that the current eligible
contracted graph contains no augmenting path.  Task~3 therefore does not
need a second search algorithm.  Instead, we run the ordered search of
\textsc{QuantumPathSearch} from Task~1 again on the graph again with the updated
matching.  The search-local blossom family $\Omega'$ is initialized as a copy
of the current persistent family $\Omega$, and the search arrays
$UIO$,  $\mathrm{OutTime}$, and \(\mathrm{predEdge}[\cdot]\) are reinitialized exactly as in
Algorithm~\ref{alg:qpathsearch}.  The same calls to
\textsc{GroverSearch}, the same delayed OUT--OUT contraction rule, and the
same routine \textsc{ContractSearchBlossom} are used.

Since no eligible augmenting path exists when Task~3 begins, the
augmenting-path branch of Algorithm~\ref{alg:qpathsearch} is never taken.
Thus the search only performs grow steps, temporary blossom contractions,
and backtracking.  In contrast with Task~1, the blossoms created during this
second execution are not discarded.  If $\Omega_{\mathrm{old}}$ denotes the
persistent blossom family at the beginning of Task~3, then after the search
terminates we set
\[
    \Omega\gets\Omega'
\]
and initialize
\[
    z(B)\gets0
    \qquad
    \text{for every }B\in\Omega'\setminus\Omega_{\mathrm{old}}.
\]
The final $UIO$ labels and $\textsc{RootB}$ map are retained for the dual
adjustment in Task~4.

The ordered-DFS invariants (Lemmas \ref{lem:csp}, \ref{lem:blossomevent} and \ref{lem:qpath-dfs-invariants})  proved for Task~1 imply the termination
conditions required here.  Because no augmenting path is returned, every
OUT vertex is completely scanned when the search terminates, meaning that there has been a Grover Search from that vertex which returns $\perp$.  Consequently,
no eligible edge remains from an OUT vertex to an UNLABELED vertex in a
distinct root blossom, and every eligible OUT--OUT edge has both endpoints in
the same final root blossom.  The latter edges have therefore already been
absorbed by the search-blossom contractions.  These statements are proved
formally in Lemmas~\ref{le:q-forest} and~\ref{le:maxblossom} in
Section~\ref{sec:proofs}.  Since Task~3 executes the same ordered search as
Task~1, its running time is again $\widetilde{O}(\sqrt{mn})$ by
Lemma~\ref{le:qblossom-runtime}.

Finally the dual adjustment and blossom dissolution tasks (Tasks $4$ and $5$ in Figure \ref{fig:s1overview})
 are identical to the corresponding tasks of classical
\textsc{SearchOne}. The same linear-time bound therefore applies; see
Duan, Pettie, and Su~\cite{dps}.
A single transversal of all the trees in $\Omega$ takes $O(n)$ time and can be done
to update the entire $\textsc{RootB}$ array as the result of any dissolved blossoms.

Theorem \ref{th:qsearchone-correct} establishes 
\QSearchOne{} correctly performs all of the tasks as the classical \SearchOne{}
as outlined in Figure \ref{fig:s1overview}.
Theorem \ref{th:qsearchone-runtime} shows that 
the total running time of \QSearchOne{} is $\widetilde{O}(\sqrt{mn})$.

\section{Proofs and Pseudocode for \FVR{}}
\label{sec:proofs}
In this section, we provide the details, proofs, and pseudocode for \FVR{}.
Let $H = G_{\mathrm{elig}}/\Omega$
denote the eligible graph in which the persistent root blossoms of
$\Omega$ are contracted.  During
\textsc{QuantumPathSearch}, a search-local copy $\Omega'$ of $\Omega$
is maintained, and $\textsc{RootB}$ is the corresponding root-blossom
map.  Additional blossoms formed by the DFS are temporary blossoms in
$\Omega'$ and do not modify the persistent family $\Omega$.

The term ``blossom'' will refer to any blossom in the blossom structure $\Omega'$, which can include newly formed temporary blossoms or blossoms in $\Omega$.
We will use to \textbf{$H$-blossoms} to refer to any root blossom in $\Omega$. Thus, there is a one-to-one correspondence between $H$-blossoms and vertices in $H = G_{\mathrm{elig}}/\Omega$. Our algorithm which operates in the original graph $G$ maintains the invariant that all the vertices contained in the same $H$-blossom have the same label. Therefore, the label of an $H$-blossom is a well-defined notion. \textsc{QuantumPathSearch} returns a set of $H$-blossom augmenting paths. We prove in this section that those paths are disjoint and the set is maximal.

\begin{algorithm}[p]

\footnotesize
\caption{\textsc{QuantumPathSearch}$(G,M,y,z,\Omega)$}
\label{alg:qpathsearch}
\DontPrintSemicolon

\KwIn{$G,M,y,z$, persistent blossom family $\Omega$}
\KwOut{A maximal set $\Psi$ of pairwise blossom-disjoint eligible
augmenting paths in $H=G_{\mathrm{elig}}/\Omega$}

$\Omega' \gets$ a search-local copy of $\Omega$\;
$\textsc{RootB} \gets$ root-blossom map
induced by $\Omega'$\;

$\Psi\gets\emptyset$, $t\gets0$\;

\ForEach{$v\in V$}{
    $UIO[v] \gets \UNLABELED$\;
}

\ForEach{root blossom $B$ of $\Omega'$}{
    $\mathrm{predEdge}[B] \gets \perp$\;
}

\ForEach{free vertex $r$ with $UIO[r]=\UNLABELED$}{

    $S\gets[\,]$\;
    $B_r\gets\textsc{RootB}[r]$\;

    $t\gets t+1$\;
    $\mathrm{OutTime}[B_r]\gets t$\;

    \ForEach{$x\in V(B_r)$}{
        $UIO[x]\gets\OUTLAB$\;
        \textsc{Push}$(S,x)$\;
    }

    \While{$S\neq\emptyset$}{

        $v\gets\textsc{Top}(S)$\;
        $B_v\gets\textsc{RootB}[v]$\;

        $u\gets
        \textsc{GroverSearch}
        (v)$\;

        \eIf{$u=\perp$}{

            \textsc{Pop}$(S)$\;

        }{

            \eIf{$UIO[u]=\UNLABELED$}{
            \tcp{A $\GE$}                       

                $B_u\gets\textsc{RootB}[u]$\;

                $ \mathrm{predEdge}[B_u] \gets  (v,u)$\;

                \ForEach{$x\in V(B_u)$}{
                    $UIO[x]\gets\INLAB$\;
                }

                \ForEach{$x\in V(B_u)$}{

                    $q\gets M[x]$\;

                    \eIf{$q=\emptyset$}{
                        \tcp{Agmenting Path Found}
                        $\bar P\gets$ the contracted path obtained by
                        following
                        $\mathrm{predEdge}[\cdot]$
                        from $B_u$ until $\perp$ reached\;

                        $P\gets$ the path in
                        $H=G_{\mathrm{elig}}/\Omega$ obtained from
                        $\bar P$ by expanding every temporary blossom
                        of $\Omega'\setminus\Omega$ along its stored
                        defining cycle\;

                        $\Psi\gets\Psi\cup\{P\}$\;

                        \textbf{break} the while loop\;

                    }{

                        \If{
                            $\textsc{RootB}[q]
                            \neq
                            \textsc{RootB}[x]$
                            \textbf{and}
                            $y(x)+y(q)=w(x,q)$
                        }{

                        \tcp{New OUT Blossom Extends the Search}
                            $B_q\gets\textsc{RootB}[q]$\;
                            $\mathrm{predEdge}[B_q]\gets(x,q)$\;

                            $t\gets t+1$\;
                            $\mathrm{OutTime}[B_q]\gets t$\;

                            \ForEach{$p\in V(B_q)$}{
                                $UIO[p]\gets\OUTLAB$\;
                                \textsc{Push}$(S,p)$\;
                            }
                        }
                    }
                }

            }{

                \If{$UIO[u]=\OUTLAB$}{
                        \tcp{A $\BE$}

                    \textsc{ContractSearchBlossom}
                    $(v,u,S,UIO,\mathrm{predEdge},
                    \Omega',\textsc{RootB},
                    \mathrm{OutTime})$\;
                }
            }
        }
    }
}

\Return $\Psi$\;

\end{algorithm}

\begin{algorithm}[p]
\footnotesize
\caption{
\textsc{ContractSearchBlossom}
$(v,u,S,UIO,\mathrm{predEdge},
\Omega',\textsc{RootB},\mathrm{OutTime})$
}
\label{alg:contract-search-blossom}
\DontPrintSemicolon

\KwIn{
$v,u$ are \textsc{OUT} and
$\mathrm{OutTime}[\textsc{RootB}[u]] > \mathrm{OutTime}[\textsc{RootB}[v]]$.\;
~~~~~~~~~~~~The array $\mathrm{predEdge}$ encode the current
alternating search tree.
}

\KwOut{
Updated
$S,UIO,\mathrm{predEdge},
\Omega',\textsc{RootB},
\mathrm{Base}$, and $\mathrm{OutTime}$.
}

$B_v\gets\textsc{RootB}[v]$\;
$B_u\gets\textsc{RootB}[u]$\;

Following $\mathrm{predEdge}$ from $B_u$ toward the root, obtain the current
search-tree path
\[
B_v=C_0,C_1,\ldots,C_{2k}=B_u,
\]\;

\For{$i\gets1$ \KwTo $2k$}{
    $e_i\gets\mathrm{predEdge}[C_i]$\;
}

Create in $\Omega'$ a new temporary blossom $B$ whose ordered
cycle of children is \[
C_0,C_1,\ldots,C_{2k},
\]
and whose defining cycle edges are $e_1,e_2,\ldots,e_{2k},(u,v)$.
and whose base child is $C_0$\;
\tcp{The new  blossom inherits the search path of its  base.}
$\mathrm{predEdge}[B]\gets \mathrm{predEdge}[C_0]$\;
$\mathrm{OutTime}[B]\gets\mathrm{OutTime}[C_0]$\;
$C_{max} \gets
\displaystyle\arg\max_{C_i}
|V(C_i)|$
\tcp*{union-by-size representative}

Rename $C_{max}$ with a new label $C_{new}$\;
Rename $B$ with  label $C_{max}$\;
\;

\ForEach{$C_i\neq C_{new}$}{
    \ForEach{$a\in V(C_i)$}{
        $\textsc{RootB}[a]\gets C_{max}$\;
    }
}

\For{$j\gets k$ \KwTo $1$ \textbf{by} $-1$}{

    \tcp{$C_{2j-1}$ was IN and becomes exposed as OUT.}

    \ForEach{$x\in V(C_{2j-1})$}{
        $UIO[x]\gets\OUTLAB$\;
        \textsc{Push}$(S,x)$\;
    }
}

\end{algorithm}

\subsection{Proof of Correctness for Quantum Path Search}
\label{sec:qpathcorrect}

The DFS follows the ordered augmenting-path search of
Gabow and Tarjan~\cite{GabowTarjan91}.  In particular, an
OUT--OUT edge is used for a blossom step only when the target
blossom became OUT strictly after the source blossom, and the
blossoms that become OUT in a blossom step are subsequently explored
in the prescribed order.  The latter formulation and its DFS
invariants are also discussed by Gabow~\cite{Gabow2017}.
The proofs for correctness in this section closely follow the arguments given in
~\cite{GabowTarjan91}.

\begin{lemma}[Augmenting paths returned by
\textsc{QuantumPathSearch}]
\label{lem:qpath-valid}
Conditioned on the correctness of all calls to \textsc{GroverSearch},
Algorithm~\ref{alg:qpathsearch} constructs a set $\Psi$ of pairwise
$H$-blossom eligible augmenting paths.
The paths are disjoint in that no $H$-blossom is contained in more than one path.
Each path can be lifted to the original graph $G$ in time linear in its expanded length.
\end{lemma}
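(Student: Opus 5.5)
The plan is to prove the three claims in turn: each returned path is an eligible augmenting path in $H$, the paths are $H$-blossom-disjoint, and each can be lifted in linear time. The backbone is an invariant of the search, proved by induction over the steps of Algorithm~\ref{alg:qpathsearch} and \textsc{ContractSearchBlossom}, in the style of Gabow--Tarjan.

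\textbf{The invariant.} At every moment, the root blossoms of $\Omega'$ that carry a label form a forest under $\mathrm{predEdge}$. Every root has $\mathrm{predEdge}=\perp$ and is the (possibly enlarged) root blossom of a free vertex. For each labeled non-root blossom $B$ with $\mathrm{predEdge}[B]=(x,q)$, we have $\textsc{RootB}[q]=B$, $\textsc{RootB}[x]$ is its parent, and the edge is eligible under Criterion~2. Moreover, $B$ is \textsc{IN} iff the edge is unmatched, and \textsc{OUT} iff it is matched and tight. Finally, every temporary blossom of $\Omega'\setminus\Omega$ is a genuine blossom: an odd cycle of children joined by eligible edges, alternating around its base, with the base inheriting the parent edge.

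\textbf{Checking the invariant step by step.}
\begin{itemize}
\item Initialization is immediate.
\item In a \GE{}, the edge $(v,u)$ is almost tight and unmatched by the Grover criteria, and correctness of \textsc{GroverSearch} is assumed. Here $v$ is \textsc{OUT} and $u$ is \textsc{UNLABELED}, so they lie in different root blossoms. The matched continuation $(x,q)$ is checked explicitly to be tight and to cross root blossoms.
\item In a \BE{}, we use Corollary~\ref{cor:blossomevent}: $\textsc{RootB}[u]$ is a descendant of $\textsc{RootB}[v]$. So the $\mathrm{predEdge}$ walk from $B_u$ reaches $B_v$, giving an even-length tree path $C_0,\dots,C_{2k}$. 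Closing it with the unmatched eligible edge $(u,v)$ yields an odd alternating cycle based at $C_0$, so the new node is a legitimate blossom. Assigning it $\mathrm{predEdge}[C_0]$ preserves the forest structure.
\item The relabeling trick in \textsc{ContractSearchBlossom} keeps $\textsc{RootB}$ consistent. This is a bookkeeping check, not a structural one.
\end{itemize}

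\textbf{Validity of each returned path.} A path is emitted only when a \GE{} reaches a blossom $B_u$ containing a free vertex; by Fact~\ref{fact:blossommatch} this blossom is free. Following $\mathrm{predEdge}$ from $B_u$ gives, by the invariant, an alternating path in $G_{\mathrm{elig}}/\Omega'$ from a free root to a free blossom, which starts and ends with unmatched edges. Each temporary blossom on it is then expanded into the even-length alternating segment of its stored cycle, entering at the inherited edge and leaving at its base. This converts it into an alternating path of $H$-blossoms in $H$.

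\textbf{Main obstacle: blossom-disjointness.} I would argue it as follows.
\begin{itemize}
\item When a path is found, the while loop breaks and the algorithm moves to a new free vertex $r$ with $UIO[r]=\UNLABELED$.
\item Labels are never reset to \UNLABELED, and \GE{} targets only unlabeled vertices.
\item A \BE{} only merges blossoms within the current search tree, since \OUTLAB{} vertices of earlier trees have $\mathrm{OutTime}$ smaller than anything in the current tree.
\end{itemize}
It follows that every $H$-blossom on a path of the current tree was first labeled during the current tree's search. Hence $H$-blossoms of distinct trees are disjoint. The subtle point is that expanding temporary blossoms introduces only $H$-blossoms inside that tree, and these were labeled during that search.

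\textbf{Lifting.} The final claim follows directly from Fact~\ref{fact:aug1}, applied to each path in $H$.
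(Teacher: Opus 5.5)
Your outline matches the paper's proof: an alternating predecessor forest, blossom steps as odd alternating cycles whose new blossom inherits the base's $\mathrm{predEdge}$, expansion of temporary blossoms along their stored cycles, disjointness from persistent labels plus the $\mathrm{OutTime}$ test, and lifting via Fact~\ref{fact:aug1}. One step of the invariant maintenance is missing, and it is the one the paper addresses with its remark that an already-reached $B$ would have had its matched edge included ``from the opposite direction.'' In the second half of a \GE{}, Algorithm~\ref{alg:qpathsearch} sets $\mathrm{predEdge}[B_q]\gets(x,q)$ and labels and pushes $V(B_q)$ as \OUTLAB{}, without ever testing $UIO[q]$. Two parts of your proof silently assume $B_q$ is \UNLABELED{}. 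Your forest invariant needs it, since otherwise $\mathrm{predEdge}[B_q]$ could be overwritten. Your disjointness argument needs it too: it rests on ``\GE{} targets only unlabeled vertices,'' hence every $H$-blossom on a path was first labeled in that tree, but $B_q$ is not a Grover target. Without this step, an $H$-blossom from an earlier tree, possibly lying on a path already in $\Psi$, could be pulled into the current tree, and the predecessor structure could stop being a forest.

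The missing argument is short and uses your invariant together with the blossom invariant. Since $B_u$ is matched, $(x,q)$ is its unique boundary-crossing matched edge. The same edge crosses the boundary of $B_q=\textsc{RootB}[q]$, and $B_q$ has at most one such edge: a tree root has none, and a temporary blossom has at most its inherited $\mathrm{predEdge}$. There are two cases if $B_q$ were already labeled.
\begin{itemize}
\item If $B_q$ were \OUTLAB{}, it is not a root, so $(x,q)$ would be $\mathrm{predEdge}[B_q]$ and its parent $B_u$ would already be labeled.
\item If $B_q$ were \INLAB{}, it is an $H$-blossom. When it was labeled, the algorithm examined this same edge, which passes the same test because $M$ and $y$ are fixed during Task~1 and $B_u$ was never merged. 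So $B_u$ would have been labeled \OUTLAB{}.
\end{itemize}
Either case contradicts $UIO[u]=\UNLABELED{}$, so this should be added to your \GE{} case.

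A smaller point: \textsc{GroverSearch} does not check $(v,u)\notin M$, so ``unmatched by the Grover criteria'' is not literally true. Derive it from the invariant instead: the only matched edge leaving an \OUTLAB{} root blossom is its $\mathrm{predEdge}$. The other end of that edge lies in the \INLAB{} parent, which is never actionable.
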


\begin{proof}
Consider a path $P$ added to $\Psi$.  The search starts at a free
$H$-blossom. Whenever the search moves from an OUT vertex to a previously
UNLABELED vertex, \textsc{GroverSearch} has returned an unmatched
eligible edge.  The newly reached $H$-blossom is therefore
labeled IN.  If it is matched and its external matched edge is
eligible, the search follows that unique matched edge and reaches a
new  $H$-blossom, $B$, which is then labeled OUT.  
Note that if $B$ was already reached previously in the search, the matched edge into $B$ would already have been included in the search structure from the opposite direction.
Consequently, the predecessor structure maintained
by the search alternates between unmatched and matched eligible
edges.

A temporary blossom contraction does not change this alternating
reachability.  The blossom is formed from an OUT--OUT bridge together
with the alternating search-tree path between its endpoints.  Hence
its defining cycle is odd and alternating, and its base is OUT.
The defining cycle is stored directly in $\Omega'$, and the new
temporary blossom inherits the predecessor
edge of its true base. Nested temporary blossoms can therefore be
expanded recursively from $\Omega'$.

If the search reaches a free $H$-blossom, following $\mathrm{predEdge}[\cdot]$ gives the current
contracted alternating path to the free root. Expanding the temporary
blossoms of $\Omega'\setminus\Omega$ along their stored defining
cycles returns an eligible augmenting path in $H$.

It remains to prove disjointness.  Once an $H$-blossom is
reached, it remains labeled for the remainder of
\textsc{QuantumPathSearch}.  A later grow step enters only an
UNLABELED blossom.  Moreover, an OUT blossom belonging to an
earlier free-root search has smaller $\mathrm{OutTime}$ than every
OUT blossom of the current search and therefore cannot satisfy the
OUT--OUT predicate of \textsc{GroverSearch}.  Thus a later search
cannot enter any blossom belonging to an augmenting path
already placed in $\Psi$.  Hence the paths of $\Psi$ are pairwise
$H$-blossom-disjoint.

Finally, since the paths in $\Psi$ are blossom-disjoint,  their total
expanded length is $O(n)$.  The stored predecessor and blossom-cycle
information therefore permits all path reconstructions in time
linear in the total output size.
\end{proof}

For every vertex $v$, we assign an $H$-blossom $b(v)$, which serves as a representative of the root blossom in $\Omega'$ to which $v$ belongs.
Thus if two vertices $v$ and $w$ have $b(v) = b(w)$, then they belong to the same root blossom in $\Omega'$.
We assume that the tree structure of $\Omega'$ always stores the base of each blossom as its leftmost child in the tree.
$b(v)$ can then be found by starting at blossom $\textsc{RootB}[v]$ in $\Omega'$
and following the pointers along the leftmost path in $\Omega'$ until an $H$-blossom is reached. Note that if $\textsc{RootB}[v] = B$, and $v$ is matched to a vertex outside $B$, then $b(v)$ is equal to the $H$-blossom to which $v$ belongs.

We say that a vertex $v$ in the graph has been \textbf{completely scanned} if there has been a Grover Search from $v$ that returns $\perp$.
If $v$ is the vertex at the top of the DFS stack then the \textbf{current search path} refers to the blossom path from $\textsc{RootB}[v]$ to the root of the current search tree.
If the temporary blossom structure is used to lift that blossom path to a path of $H$-blossoms (corresponding to a vertex path in $H = G/\Omega$), the resulting path is called the
\textbf{current $H$-blossom search path}. 

\begin{lemma}
\label{lem:csp}
    At any point in the execution of \textsc{QuantumPathSearch}, an OUT vertex in the current search tree that has not been completely scanned, belongs to a blossom in the current search path.
\end{lemma}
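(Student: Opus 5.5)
We argue by induction on the steps of \textsc{QuantumPathSearch}, maintaining a stronger stack invariant that implies the lemma. The invariant is: \emph{the vertices on the DFS stack $S$ are exactly the OUT vertices of the current search tree that are not yet completely scanned. Reading $S$ from bottom to top, these vertices are grouped into consecutive segments by root blossom of $\Omega'$ (via $\textsc{RootB}$). Every such root blossom lies on the current search path, and the segments appear in the same order as their blossoms along that path from the free root toward $\textsc{RootB}[\textsc{Top}(S)]$.} The lemma follows from the first and third clauses. A vertex is popped only when its Grover search returns $\perp$, i.e.\ when it becomes completely scanned, and no vertex is pushed twice within one tree. Hence an unscanned OUT vertex is still on $S$, and its blossom is therefore on the current search path.

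The base case is the start of a new free-root search. All of $V(B_r)$ is pushed, the current search path is just $B_r$, and the invariant holds trivially. We then check each type of step.

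\textbf{Pop.} The popped vertex becomes completely scanned. If it was the last vertex of its segment, the top blossom is removed from the stack. The new top then belongs to the previous segment, whose blossom is an ancestor on the old path. Because $\mathrm{predEdge}$ is unchanged, the new current search path is exactly the prefix of the old one, so the invariant persists.

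\textbf{Grow event} from top vertex $v$. The new IN blossom $B_u$ becomes a child of $B_v$, and if it is matched and extends the search, $B_q$ becomes a child of $B_u$. The vertices of $B_q$ are pushed on top. The new current search path is the old one extended by $B_u, B_q$, so ordering and membership are preserved. Vertices of $B_u$ are IN and never pushed, which is consistent with the invariant. If the grow event ends in backtracking or an augmenting path, nothing new is pushed. (After an augmenting path the while loop breaks and the tree is abandoned, so the statement only concerns the current tree.)

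\textbf{Blossom event}, the main obstacle. Here $v$ is on top, $B_v$ is the last blossom of the current path, and $\mathrm{OutTime}[\textsc{RootB}[u]] > \mathrm{OutTime}[B_v]$. I would first establish, as in Gabow--Tarjan, that $\textsc{RootB}[u]$ is a descendant of $B_v$ (the fact the paper records as Corollary~\ref{cor:blossomevent}, which I would prove jointly with this induction). The intended argument is that any OUT blossom with larger $\mathrm{OutTime}$ that is not a descendant of $B_v$ would lie on a branch already abandoned, since it was created after $B_v$. All its vertices would then have been popped, i.e.\ completely scanned, which contradicts the requirement that the eligible edge $(u,v)$ be found without $u$ having discovered $v$ first. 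This needs care, because the edge from $u$ might have been scanned before $v$ became OUT, and that is exactly why the $\mathrm{OutTime}$ ordering is used.

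Given descendancy, the cycle $C_0=B_v,\dots,C_{2k}=B_u$ consists of blossoms created after $B_v$. Their unscanned OUT vertices are, by the invariant, on the stack above $B_v$'s segment, and so the cycle lies on the current path. After contraction, $B$ replaces $C_0,\dots,C_{2k}$ on the path and inherits $\mathrm{predEdge}[C_0]$. The newly OUT vertices of the formerly IN blossoms $C_{2j-1}$ are pushed on top, all belonging to $B$. So the top portion of $S$ becomes one contiguous region whose vertices all have $\textsc{RootB}$ equal to $B$'s label, and $B$ is the last blossom of the new current path. All remaining unscanned OUT vertices either lie in $B$ or in blossoms preceding $C_0$ on the path, so the invariant is restored.
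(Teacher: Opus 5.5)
Your induction is the paper's own argument: the same invariant (the stacked vertices of each blossom are contiguous and ordered along the current search path), checked over the start, grow, blossom and pop steps. Your explicit clause that the stack holds exactly the not-yet-completely-scanned OUT vertices of the current tree is a useful sharpening.

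One step of the blossom-event case is false, however. You claim the unscanned OUT vertices of $C_0,\dots,C_{2k}$ lie ``on the stack above $B_v$'s segment, and so the cycle lies on the current path.'' But $v=\textsc{Top}(S)$, so $B_v$'s segment is the topmost one and $B_v$ is the last blossom of the current path. The blossoms $C_1,\dots,C_{2k}$ are proper descendants of $B_v$ and hence off that path. Your invariant therefore gives the opposite conclusion: no vertex of $C_1,\dots,C_{2k}$ is on the stack. Consequently every OUT vertex of $C_2,C_4,\dots,C_{2k}$ is already completely scanned, because the DFS has backtracked to $B_v$; the paper notes this in the proof of Lemma~\ref{lem:chsp}. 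With this correction, $B$ replaces $C_0$ alone at the end of the path and inherits $\mathrm{predEdge}[C_0]$. Its stack segment is the remainder of $C_0$'s segment followed by the freshly pushed vertices of $C_1,C_3,\dots,C_{2k-1}$. Your final sentence is then correct; only its justification must be replaced.

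Your sketch of the descendancy fact also does not go through as written. When $u$ was scanned, $v$ may have been IN, or OUT with the $\mathrm{OutTime}$ test failing from $u$'s side. So nothing contradicts ``$u$ not having discovered $v$.'' A clean route is to add a clause to your invariant: for every OUT root blossom $C$ on the current search path, every OUT root blossom of the current tree with larger $\mathrm{OutTime}$ is a proper descendant of $C$. This clause is preserved by each step:
\begin{itemize}
\item grow steps attach new blossoms below the current end;
\item pops only shorten the path;
\item a contraction collapses a path hanging below the current end $C_0$ into a blossom that takes $C_0$'s place and $C_0$'s $\mathrm{OutTime}$.
\end{itemize}
Applied to $C=B_v$, the clause gives the descendancy you need.

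The paper derives Corollary~\ref{cor:blossomevent} only afterwards, via Lemma~\ref{lem:blossomevent}, and that proof itself invokes the present lemma. Proving this clause in the same induction, as you propose, is therefore the right way to remove that circularity.
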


\begin{proof}
We maintain the invariant that if a blossom has any vertices on the stack then 
those vertices are all contiguous on the stack. Moreover, the order in which they appear
on the stack is in the same order as the current search path (from bottom to the root).
By induction on the number of steps.
When a search starts at a new free blossom,  all the vertices in that blossom are 
labeled OUT and placed on the stack, therefore becoming part of the current search path.
They are all from the same blossom and there is only one blossom in the current search path.

     A step resulting in a $\GE$, may create a new OUT blossom. If so all of the vertices from that blossom are placed on the stack and the new OUT blossom is the lowest blossom in the current search path. 

     Any vertices that transition from IN to OUT as a result of a $\BE$ will be in the blossom at the end of the current search path as a result of the blossom creation. 
     They will all be added  to the top of the stack.
    
    A blossom leaves the current search path only when all of the vertices in that blossom have been completely scanned and removed from the stack. 
    Therefore the invariant is maintained after each step. 
\end{proof}

\begin{lemma}
    \label{lem:blossomevent}
    If an edge $(v,w)$ is an eligible edge, where $v$ and $w$ are in the same search tree, both OUT, and
   in different root blossoms, then either
    \begin{enumerate}
        \item $\textsc{RootB}[v]$ is an ancestor of $\textsc{RootB}[w]$ in  the current DFS tree.
        \item $\textsc{RootB}[w]$ is an ancestor of $\textsc{RootB}[v]$ in  the current DFS tree.
    \end{enumerate}
    \end{lemma}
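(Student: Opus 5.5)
This is the standard DFS property that a non-tree edge joins an ancestor and a descendant, adapted to the alternating search with OUT labels and temporary blossoms. We argue by contradiction. Suppose $B_v=\textsc{RootB}[v]$ and $B_w=\textsc{RootB}[w]$ lie in the same search tree and neither is an ancestor of the other. Let $A$ be their lowest common ancestor in the current DFS tree of root blossoms. By symmetry assume $B_v$ was first placed on the current search path before $B_w$, in the sense of the DFS ordering of subtrees of $A$; that is, the branch of $A$ containing $B_v$ was entered and abandoned before the branch containing $B_w$ was entered.

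\textbf{Step 1: $v$ is completely scanned when the search leaves $B_v$'s branch.} A blossom leaves the current search path only after all of its vertices have been completely scanned and popped; this is the content of the stack invariant in the proof of Lemma~\ref{lem:csp}. Since $B_w$ is not a descendant of $B_v$, the search path must have retreated past $B_v$ (to $A$ or above) before $B_w$'s branch was entered. Hence every OUT vertex of $B_v$, in particular $v$, was completely scanned at a moment $T$ when no vertex of $B_w$ was yet labeled. A subtlety is that $B_v$ might itself have been formed later by a contraction. We therefore apply the argument to the $H$-blossom or temporary blossom containing $v$ at time $T$. We use that root blossoms only grow, and that a contraction merges only blossoms along the current search path, so it cannot merge $B_v$'s abandoned branch with anything. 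This gives the ancestor relationship directly unless $v$ and $w$ end up in separate branches.

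\textbf{Step 2: derive the contradiction from the complete scan of $v$.} At time $T$, $w$ was \UNLABELED, since its branch had not been entered, and $(v,w)$ is eligible. Eligibility here means unmatched and almost tight, because both endpoints are OUT and an OUT vertex's matched edge is its tree edge or an internal edge. So the predicate of Figure~\ref{fig:grover2} held for $w$ at time $T$ as a \GE{}. The Grover search from $v$ returning $\perp$ therefore contradicts the correctness of \textsc{GroverSearch}, on which we condition as in Lemma~\ref{lem:qpath-valid}. One case needs care: $w$ may instead have been labeled earlier, i.e.\ $B_w$ came first. Then swap roles. In that case $w$ was completely scanned while $v$ was \UNLABELED, giving the same contradiction. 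Labels in the current search tree only move \UNLABELED$\to$\INLAB$\to$\OUTLAB, so "unlabeled at the scan time" is the only possibility for the later blossom. It could not have been IN at that time, because IN vertices of the later branch are likewise unreached.

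\textbf{Main obstacle.} The hard step is Step 1's handling of temporary blossom contractions. We must show that the DFS tree of root blossoms in $\Omega'$ evolves so that "scanned and abandoned before the other branch is entered" is preserved. Concretely, we need the invariant that a \BE{} only contracts a path $C_0,\dots,C_{2k}$ lying on the current search path. This follows because the OutTime condition and Lemma~\ref{lem:csp} place $B_u$ on the current path, so abandoned subtrees are never merged into active ones. We would prove this invariant jointly with the lemma by induction on the number of steps, mirroring the Gabow--Tarjan ordered-DFS argument~\cite{GabowTarjan91}.
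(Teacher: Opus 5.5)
Your core mechanism is the same as the paper's. An eligible unmatched edge from a completely scanned OUT vertex to a vertex that was \UNLABELED{} at the time of that scan contradicts the correctness of \textsc{GroverSearch}. The stack invariant of Lemma~\ref{lem:csp} then turns ``not completely scanned'' into ``on the current search path''. The paper runs this as an induction over steps: when a \GE{} makes $w$ OUT, no completely scanned vertex can have an eligible edge to $w$, so every OUT neighbour of $w$ lies on the current search path and is an ancestor. You instead argue by global contradiction, using the DFS property that of two incomparable branches one is finished before the other is entered. That reorganization is legitimate.

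The gap is in the step you flag as the main obstacle, which you resolve with a false invariant. You claim that a \BE{} only contracts blossoms on the current search path, that abandoned subtrees are never merged into active ones, and that the OutTime test plus Lemma~\ref{lem:csp} puts $\textsc{RootB}[u]$ on the current path. It is the other way around. When \textsc{GroverSearch}$(v)$ returns an OUT vertex $u$ with $\mathrm{OutTime}[\textsc{RootB}[u]]>\mathrm{OutTime}[\textsc{RootB}[v]]$, the blossom $\textsc{RootB}[u]$ is a proper descendant of the top blossom $\textsc{RootB}[v]$. It therefore lies strictly below the current search path. By the contrapositive of Lemma~\ref{lem:csp}, its vertices are already completely scanned and popped; the paper uses exactly this in the proof of Lemma~\ref{lem:chsp}. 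So $C_1,\dots,C_{2k}$ are abandoned blossoms that the contraction pulls into the active blossom, which is the whole point of the delayed Gabow--Tarjan blossom step.

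Consequently, the invariant you propose to prove ``jointly by induction'' cannot be proved. The consequence you draw from it in Step~1, that nothing ever happens to $B_v$'s branch once it is abandoned, is also unjustified. In particular, your argument does not cover a vertex $v$ that was IN when its $H$-blossom was first left behind and only became OUT in a later contraction, which really can occur.

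The repair is to use the correct invariant: every contraction merges a tree path from a blossom on the current search path down to one of its descendants, and the new blossom sits on the current search path. Three facts then follow.
\begin{enumerate}
\item Contracting an ancestor--descendant path keeps comparable blossoms comparable.
\item Two blossoms that are incomparable after a contraction had incomparable preimages before it, so the ``one incomparable branch is finished before the other is entered'' ordering survives by induction.
\item The formerly IN vertices absorbed by a contraction are pushed, so they are completely scanned before the new blossom can leave the current search path.
\end{enumerate}
With these, consider the last moment at which the blossom now containing $v$ leaves the current search path. At that moment $v$ is OUT and completely scanned, while $w$ is still \UNLABELED{}, so your Step~2 goes through. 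Since the ancestor--descendant shape of a contracted path comes from Corollary~\ref{cor:blossomevent}, which is itself derived from this lemma, apply it inside the induction only to the state before the current step.
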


\begin{proof}
By induction on the number of steps.  
We need only consider actions that cause a vertex to become OUT.
Consider a $\GE$ resulting from a Grover Search at vertex $v$ that results in a new vertex $w$ becoming OUT. The vertex $w$ was unlabeled before the step and therefore cannot be
adjacent to a vertex that was completely scanned. Therefore if $w$ is adjacent to an OUT vertex in the same tree, by Lemma \ref{lem:csp}, that vertex must belong to a blossom in the current search path, meaning its blossom is an ancestor of $\textsc{RootB}[w]$ in the current search tree.
 A blossom step only contracts a
portion of the current DFS path and cannot create any cross edges between two blossoms where one is not a descendant of the other.
\end{proof}

\begin{corollary}
\label{cor:blossomevent}
    A Grover Search that returns an edge $(v,w)$ resulting in a $\BE$ must satisfy the condition that $\textsc{RootB}[v]$ is an ancestor of $\textsc{RootB}[w]$ in the DFS tree.
\end{corollary}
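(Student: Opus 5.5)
The plan is to derive the corollary from Lemma~\ref{lem:blossomevent} together with the \BE{} predicate in Figure~\ref{fig:grover2}, which requires $\mathrm{OutTime}[\textsc{RootB}[w]] > \mathrm{OutTime}[\textsc{RootB}[v]]$.

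\textbf{Step 1: the hypotheses of Lemma~\ref{lem:blossomevent} hold.}
\begin{itemize}
\item Since \textsc{GroverSearch}$(v)$ is only invoked when $v$ is on the stack, $v$ is OUT.
\item A \BE{} requires $UIO[w]=\OUTLAB$, so $w$ is OUT as well.
\item The edge $(v,w)$ is almost tight and unmatched, hence eligible under Criterion~2.
\item The strict inequality on $\mathrm{OutTime}$ forces $\textsc{RootB}[v]\neq\textsc{RootB}[w]$.
\item It remains to show $w$ lies in the current search tree. Every OUT blossom of an earlier free-root search received its $\mathrm{OutTime}$ before the current root $B_r$ did, and $\textsc{RootB}[v]$ is in the current tree, so $\mathrm{OutTime}[\textsc{RootB}[v]]\ge \mathrm{OutTime}[B_r]$. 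The predicate therefore excludes earlier trees, which is the same observation used in the disjointness part of Lemma~\ref{lem:qpath-valid}.
\end{itemize}
Lemma~\ref{lem:blossomevent} then says that one of $\textsc{RootB}[v]$, $\textsc{RootB}[w]$ is an ancestor of the other.

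\textbf{Step 2: OutTime is strictly increasing down the tree.} I would prove, by induction over the steps of Algorithm~\ref{alg:qpathsearch}, that along any root-to-node path in the current DFS tree (with blossoms in $\Omega'$), $\mathrm{OutTime}$ is strictly increasing over the OUT blossoms.
\begin{itemize}
\item \textbf{Grow steps.} A new OUT blossom $B_q$ is created as a child of an IN blossom, whose parent is the OUT blossom at the top of the search. $B_q$ gets $t+1$, which exceeds all previously assigned times.
\item \textbf{Blossom steps.} Contracting $C_0,\dots,C_{2k}$ into $B$ assigns $\mathrm{OutTime}[B]=\mathrm{OutTime}[C_0]$. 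Every ancestor of $B$ is an ancestor of $C_0$, so it has a smaller time. Every descendant of $B$ was a descendant of some $C_i$, hence of $C_0$, so it has a larger time. The invariant is therefore preserved.
\end{itemize}
The main obstacle is this second case: I must check that the tree after contraction really is the old tree with the path collapsed. This means confirming that $\mathrm{predEdge}$ of the children of the former $C_i$ still point into the renamed blossom. The union-by-size renaming makes that hold through $\textsc{RootB}$.

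\textbf{Step 3: conclude.} If $\textsc{RootB}[w]$ were an ancestor of $\textsc{RootB}[v]$, Step~2 would give $\mathrm{OutTime}[\textsc{RootB}[w]]<\mathrm{OutTime}[\textsc{RootB}[v]]$. This contradicts the \BE{} predicate. So $\textsc{RootB}[v]$ is an ancestor of $\textsc{RootB}[w]$.

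One subtlety concerns IN blossoms on the path, which have no $\mathrm{OutTime}$. Comparisons are only ever made between the two OUT endpoints, so the invariant need only be stated for OUT blossoms.
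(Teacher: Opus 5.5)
Your proof is correct and follows essentially the paper's route: $\mathrm{OutTime}$ increases down each search tree and every earlier tree carries smaller values, so the \BE{} predicate rules out both an earlier tree and $\textsc{RootB}[w]$ being an ancestor of $\textsc{RootB}[v]$, which leaves the other alternative of Lemma~\ref{lem:blossomevent}. The only difference is that you spell out the induction behind the monotonicity of $\mathrm{OutTime}$ (including the inheritance $\mathrm{OutTime}[B]=\mathrm{OutTime}[C_0]$ at contractions) and check the lemma's hypotheses, both of which the paper merely asserts.
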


\begin{proof}
    The $\textsc{OutTime}[\cdot]$ value for the  blossoms in each search tree 
    increase downward in the tree.
    Therefore if a Grover Search at vertex $v$ returns an edge $(v,w)$, where 
    $\textsc{OutTime}[v] < \textsc{OutTime}[w]$, then $w$ cannot be in a previously searched tree and $\textsc{RootB}[w]$ cannot be an ancestor of $\textsc{RootB}[v]$. By Lemma \ref{lem:blossomevent}, the only remaining option is that $\textsc{RootB}[v]$ is an ancestor of $\textsc{RootB}[w]$ in the DFS tree.
\end{proof}

We are now ready to prove a stronger version of Lemma \ref{lem:csp}:

\begin{lemma}
\label{lem:chsp}
    At any point in the execution of \textsc{QuantumPathSearch}, an OUT vertex in the current search tree that has not been completely scanned, belongs to an $H$-blossom in the current $H$-blossom search path.
\end{lemma}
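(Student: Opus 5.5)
We will derive Lemma~\ref{lem:chsp} from Lemma~\ref{lem:csp} by lifting the current search path through the temporary blossoms of $\Omega'$. Let $x$ be an OUT vertex in the current search tree that has not been completely scanned. By Lemma~\ref{lem:csp}, $\textsc{RootB}[x]$ is a blossom $C$ on the current search path $C_0, C_1, \ldots, C_\ell = \textsc{RootB}[v]$, where $v$ is the top of the stack. We must show that the $H$-blossom $X\in\Omega$ containing $x$ lies on the lifted path. If $C$ is already an $H$-blossom, we are done immediately, since lifting leaves $H$-blossoms unchanged. The real content is therefore the case where $C$ is a temporary blossom.

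In that case we argue by induction on the nesting depth of temporary blossoms. The lifted path is obtained by replacing each temporary blossom $C_i$ with a portion of its defining cycle. Because the predEdge values are inherited from the base, that portion runs from the child entered by $\mathrm{predEdge}[C_{i+1}]$ (or from the child containing $v$, when $i=\ell$) around the cycle to the base child. The alternating direction is chosen so that the portion has even length, and the expansion is then applied recursively inside each child.

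The inductive claim we aim for has two parts. Consider a temporary blossom $B$ created by $\textsc{ContractSearchBlossom}$ from the tree path $C_0,\ldots,C_{2k}$ together with the bridge $(v,u)$. At the moment of creation, every not-completely-scanned vertex of $B$ lies in one of the following:
\begin{itemize}
\item an $H$-blossom inside $C_{2k}$ on its own current search path, since $C_{2k}$ was at the bottom of the current path and Lemma~\ref{lem:csp} and the induction apply to it;
\item one of the $C_{2j-1}$ that just became OUT;
\item a $C_{2j}$ along the original tree path.
\end{itemize}
The second part of the claim is the key structural fact: the lifted path of $B$ is exactly the path obtained by going from the child containing the current top-of-stack vertex around the cycle through the bridge. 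That route passes through every $C_i$ in the cycle, because the path from $C_{2k}$ via the bridge to $C_0$ and the path down the tree together cover the cycle. We will therefore check that the expansion through $(u,v)$ and $e_{2k},\ldots,e_1$ traverses all children that still contain unscanned vertices. To support this, we will use the stack-contiguity invariant from the proof of Lemma~\ref{lem:csp}: vertices of the $C_{2j-1}$ are pushed in descending order, so children whose vertices have all been popped are exactly those whose portion of the cycle has been passed.

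The main obstacle is matching the direction of traversal around each temporary blossom's cycle with the set of children that still contain unscanned OUT vertices. The lifted path uses only one of the two arcs of the cycle, so we must show that the other arc contains only completely scanned vertices. We will handle this by an invariant maintained alongside the stack order. For each temporary blossom $B$ on the current path, the unscanned vertices of $B$ lie in children forming a contiguous arc. That arc runs from the child containing the current top (or the entry child) toward the base along the even-length direction, and it coincides with the expanded path. We will verify that this invariant is preserved by Grow Events, which add new blossoms below and do not affect $B$; by Blossom Events, which create a new cycle whose even arc from $C_{2k}$ via the bridge covers all children and which nest older temporary blossoms as children; and by pops, which only shrink the arc from the top. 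Combining this with the recursive expansion gives that the $H$-blossom containing $x$ lies on the current $H$-blossom search path.
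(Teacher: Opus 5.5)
Your framework is workable: use Lemma~\ref{lem:csp} to put $\textsc{RootB}[x]$ on the current search path, then maintain an arc invariant for every temporary blossom, including nested ones whose entry child is read off the lifted path. It reorganizes the paper's argument, which instead maintains directly that stack segments are contiguous per $H$-blossom and ordered along the current $H$-blossom search path. The gap is in the Blossom Event case, which is the only nontrivial step, and part of what you say there is wrong. When \textsc{ContractSearchBlossom} is called, the top of the stack is $v\in C_0$, so the bottom of the current search path is $C_0$, not $C_{2k}$. The step needs three facts, all used in the paper's proof:
\begin{itemize}
\item[(i)] The DFS has backtracked to $C_0$, so Lemma~\ref{lem:csp} shows that every vertex of $C_2,\dots,C_{2k}$ is already completely scanned.
\item[(ii)] The formerly \textsc{IN} children $C_1,C_3,\dots,C_{2k-1}$ are $H$-blossoms, since an \textsc{IN} blossom was entirely \textsc{UNLABELED} when reached and so was never merged.
\item[(iii)] The bridge enters $C_0$ exactly at the top vertex $v$, so the arc inside $C_0$ is the one your invariant already certified.
\end{itemize}

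These facts are necessary. After contraction, the lifted path crosses each nested child only along the arc from its \textit{new} non-base attachment point (the tail of $e_{i+1}$, or $u$ for $C_{2k}$) to its base. That arc is generally not the one your invariant previously certified for that child. Suppose $C_{2k}$ were a temporary blossom with cycle $D_0,\dots,D_4$ (matched pairs $D_1D_2$, $D_3D_4$) that still held unscanned vertices in $D_3$, as your first bullet allows, and suppose $u\in D_2$. Then the lifted path would cross $C_{2k}$ only through $D_2,D_1,D_0$ and miss $D_3$. Similarly, all vertices of an odd child are pushed at once, so if $C_{2j-1}$ were a nontrivial temporary blossom, only one arc of it would lie on the lifted path. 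Your remark that blossom events ``nest older temporary blossoms as children'' does not deal with this re-entry issue. With (i)--(iii), the invariant holds vacuously for $C_2,\dots,C_{2k}$, trivially for the atomic odd children, and by the previous invariant for $C_0$. Your pop case then goes through: the stack holds $C_1,C_3,\dots,C_{2k-1}$ above the rest of $C_0$, so the top advances two children at a time along the even arc.
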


\begin{proof}
We maintain the invariant that if an $H$-blossom has any vertices on the stack then 
those vertices are all contiguous on the stack. Moreover, the order in which they appear
on the stack is in the same order as the current $H$-search path (from bottom to the root).

     A step resulting in a $\GE$, may create a new OUT blossom. If so all of the vertices from that blossom are placed on the stack and the new OUT blossom is the lowest blossom in the current search path. The blossom has not been a part of any merge operations, it is an $H$-blossom that is part of the current $H$-blossom search path. 

     Now consider a $\BE$, by Corollary \ref{cor:blossomevent}, $\textsc{RootB}[v]$ is an ancestor of $\textsc{RootB}[w]$ in the DFS tree. All of the vertices in the OUT blossoms in the path from $\textsc{RootB}[w]$  up to $\textsc{RootB}[v]$ have been completely scanned, since the algorithm has backtracked back to $\textsc{RootB}[v]$.
     The IN blossoms are all $H$-blossoms. Their vertices are added to the stack so that the highest $H$-blossom (in the search tree) is at the top of the stack. Note that any  alternating $H$-blossom path leading back to the root, will enter through one of these $H$-blossoms, head downward and then take the edge (w, v), back up to $\textsc{RootB}[v]$. Therefore, the stack order respects the order of the current $H$-search path. 
    
    An $H$-blossom leaves the current $H$-blossom search path only when all of the vertices in that $H$-blossom have been completely scanned and removed from the stack. 
    Therefore the invariant is maintained after each step. 
\end{proof}

\begin{corollary}
\label{cor:csp}
    When a free-root search terminates, every OUT vertex not belonging
    to an $H$-blossom in a path in $\Psi$ has been completely scanned.
\end{corollary}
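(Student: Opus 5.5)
We prove Corollary~\ref{cor:csp} by examining how a free-root search in Algorithm~\ref{alg:qpathsearch} can end, and applying Lemma~\ref{lem:chsp} at the moment it ends. The \textbf{while} loop over the stack $S$ exits in exactly two ways. Either $S$ becomes empty, or the augmenting-path branch executes its \textbf{break} after a path $P$ has been added to $\Psi$. We handle these two cases separately.

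\textbf{Case 1: the stack empties.} A vertex is removed from $S$ only by a \textsc{Pop}, and a \textsc{Pop} happens only when $\textsc{GroverSearch}$ returns $\perp$ for the vertex on top. We condition on all Grover calls being correct, as in Lemma~\ref{lem:qpath-valid}. So every popped vertex has been completely scanned. Every OUT vertex of the current tree was pushed at the moment it became OUT. This happens either at the root initialization, in a $\GE$ extension, or in \textsc{ContractSearchBlossom}. Hence if $S=\emptyset$ at termination, every OUT vertex of this tree has been pushed and popped, and so has been completely scanned. This gives the claim for this case, in fact without excluding anything.

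\textbf{Case 2: the search breaks after finding $P$.} Here Lemma~\ref{lem:chsp} does the work. At the moment of the \textbf{break}, every OUT vertex of the current tree that is not completely scanned belongs to an $H$-blossom on the current $H$-blossom search path. It remains to identify that search path with the $H$-blossoms of $P$. At the break, $\bar P$ is formed by following $\mathrm{predEdge}$ from $B_u$ back to the root. Expanding its temporary blossoms via $\Omega'$ produces $P$.
\begin{itemize}
\item The top vertex $v$ lies in $\textsc{RootB}[v]$, which is the predecessor of $B_u$ on $\bar P$.
\item Hence the current search path is a subpath of $\bar P$.
\item Lifting a subpath through the same temporary-blossom expansion gives $H$-blossoms that all lie on $P$.
\end{itemize}
Any vertex that was unscanned before the break is still unscanned, since the break performs no further scanning. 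Therefore every OUT vertex that is not completely scanned lies in an $H$-blossom of $P\in\Psi$.

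\textbf{Earlier trees.} OUT vertices from earlier free-root searches were handled when those searches terminated, by the same argument. Such a vertex is either completely scanned or lies in an $H$-blossom of an earlier path in $\Psi$. Labels and scan status are never undone during \textsc{QuantumPathSearch}, so this remains true.

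\textbf{Main obstacle.} The delicate point is the identification in Case 2 between the lifted current search path and the $H$-blossoms of $P$. Temporary blossoms may have been contracted, so the lift of the path is not simply the list of blossoms on it. The plan is to show that the expansion used to build $P$ contains every $H$-blossom of each temporary blossom on $\bar P$. This holds because $P$ is routed through the stored defining cycles, and an $H$-blossom sitting in a temporary blossom but off the chosen route would appear to violate the claim. We would try to resolve this by reading ``belongs to an $H$-blossom in a path in $\Psi$'' as covering the full root blossom in $\Omega'$ that contains the path, since the disjointness argument of Lemma~\ref{lem:qpath-valid} already treats such blossoms as consumed. If that reading is not acceptable, we would instead strengthen Lemma~\ref{lem:chsp} directly, showing that an unscanned vertex in a temporary blossom must lie in an $H$-blossom on the alternating route through that blossom's defining cycle.
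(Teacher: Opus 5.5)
Your proof is correct and is essentially the paper's: when a free-root search ends with a path, that path is the current $H$-blossom search path plus the newly reached IN blossom $B_u$, so Lemma~\ref{lem:chsp} gives the claim; your explicit treatment of the empty-stack case and of earlier trees fills in what the paper's two-line proof leaves implicit. Drop the ``main obstacle'' paragraph, because it rests on a misreading. The current $H$-blossom search path in Lemma~\ref{lem:chsp} is by definition the lifted route through the defining cycles, not every $H$-blossom inside the temporary blossoms. That route is forced by parity once you start at the $H$-blossom containing $v$, so it coincides with the part of $P$ after $B_u$, and your bullet points already finish the argument. By contrast, your proposed plan that $P$ contains every $H$-blossom of each temporary blossom on $\bar P$ is false, since $P$ uses only one side of each defining cycle. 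The loosened reading of the statement would also be too weak for its use in Lemma~\ref{lem:qpath-maximal}.
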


\begin{proof}
    When a free-root search terminates, the path added to $\Psi$ is the current $H$-blossom search path. By Lemma \ref{lem:chsp}, an OUT vertex in the current search tree that has not been completely scanned, belongs to a blossom in the current $H$-blossom search path.
\end{proof}

\begin{lemma}
\label{lem:qpath-dfs-invariants}
    If an eligible edge $(v,w)$ joins two OUT vertices such that  $v$ and $w$ have both been completely scanned, then
    \[
        b(v)=b(w).
    \]
\end{lemma}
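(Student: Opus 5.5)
We need $b(v)=b(w)$, meaning $v$ and $w$ end up in the same root blossom of $\Omega'$. Note that $b$ depends on the current $\Omega'$, and once two vertices share a root blossom they continue to do so, since blossoms in $\Omega'$ are only created and never dissolved during the search. So it suffices to show that at some moment after both are completely scanned, \textsc{RootB}$[v]=\textsc{RootB}[w]$.

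Let the edge be eligible and joining two OUT vertices, and suppose toward a contradiction that $\textsc{RootB}[v]\neq\textsc{RootB}[w]$ at the current time. Without loss of generality, let $v$ be the endpoint whose final completing (returning $\perp$) Grover search occurred later. At that call, the predicate $f_v$ was false for the edge $(v,w)$. The edge is almost tight since it is eligible and unmatched; a matched edge between two OUT vertices in distinct root blossoms cannot occur, because an OUT blossom's matched edge is its predEdge into an IN parent or lies inside the blossom. Hence at that time either $w$ was labeled (not \UNLABELED) and was not OUT with larger OutTime, or $w$ was not yet OUT.

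I would split into cases.

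\textbf{Case 1: $w$ was OUT at the time of $v$'s final scan.} Then $\mathrm{OutTime}[\textsc{RootB}[w]]\le \mathrm{OutTime}[\textsc{RootB}[v]]$.
\begin{itemize}
\item If the two lie in different search trees, the earlier tree's OUT vertices were completely scanned (Corollary~\ref{cor:csp}), or belong to a path in $\Psi$. An edge from an already completely scanned $w$ to the then-\UNLABELED{} $v$ would have been actionable at $w$'s final scan, contradicting the fact that $w$'s search returned $\perp$. I must also handle vertices on $\Psi$ paths, which never get completely scanned, but the hypothesis excludes them.
\item If they lie in the same tree, by Lemma~\ref{lem:blossomevent} one root blossom is an ancestor of the other. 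Since OutTime increases downward, $\textsc{RootB}[w]$ is an ancestor of $\textsc{RootB}[v]$. Then $w$'s scan must have occurred later, because by the stack invariant of Lemma~\ref{lem:csp} the ancestor's vertices remain below the descendant's on the stack. Hence $w$'s final scan happened after $v$'s, contradicting the choice of $v$. More carefully: at $w$'s final scan, $v$ was OUT with larger OutTime (if still in a distinct root blossom), so $f_w$ was true, a contradiction. Otherwise they were already merged.
\end{itemize}

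\textbf{Case 2: $w$ was not OUT at $v$'s final scan.} It was \UNLABELED{} (then $f_v$ would be true, a contradiction) or IN. If it was IN and later became OUT, this happened either by a $\BE$, which merges $w$'s blossom into a blossom containing the path up to the bridge, or by being matched out of an IN blossom. Here I would argue, using the monotonicity property of Lemma~\ref{lem:actionable-monotonicity}, or directly, that when $w$ becomes OUT via a $\BE$ and is later scanned, the edge to $v$ satisfies the OUT--OUT predicate from $w$'s side unless $v$ is already in the same blossom. So $w$'s final scan either merges them or finds the edge actionable.

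The main obstacle is the interplay between OutTime values after contractions, since a new blossom inherits its base's OutTime. I would rely on Corollary~\ref{cor:blossomevent} and the stack-order invariant of Lemma~\ref{lem:chsp} to show that the later-scanned endpoint always sees the other as OUT with larger OutTime, unless they are already in one root blossom. In each case, the contradiction yields $\textsc{RootB}[v]=\textsc{RootB}[w]$, hence $b(v)=b(w)$.
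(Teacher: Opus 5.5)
\textbf{There is a genuine gap, and it lies in the case that carries the content of the lemma.}

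\emph{Case 1, same tree.} You correctly get that $\textsc{RootB}[w]$ is a proper ancestor of $\textsc{RootB}[v]$ at $v$'s final scan $\theta_v$. You then use the stack invariant of Lemma~\ref{lem:csp} to conclude that $w$ is scanned later. That invariant only constrains vertices still on the stack. It says nothing when $w$ was already popped at some $\theta_w<\theta_v$, and that is exactly the situation to rule out. At $\theta_w$ the vertex $v$ may have been IN, so $(w,v)$ was not actionable from $w$, and $v$ was turned OUT afterwards by a \BE{}. Your ``more carefully'' sentence assumes $v$ was already OUT at $\theta_w$, which is what needs proof.

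\emph{Case 2 is vacuous under your own WLOG.} Labels only move UNLABELED$\to$IN$\to$OUT, and only OUT vertices are ever scanned. So if $w$ is IN (or UNLABELED) at $\theta_v$, it has not been scanned yet, contradicting that its final scan precedes $\theta_v$. The IN-then-OUT reasoning you sketch therefore sits where it is never used, and even there it is only asserted. Lemma~\ref{lem:actionable-monotonicity} does not supply it, since it concerns a single vertex between its own consecutive scans.

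\textbf{The missing idea is the backtracking/LCA argument in the paper's proof.} The paper orients the pair by the ancestor relation: $\textsc{RootB}[v]$ is a proper ancestor of $\textsc{RootB}[w]$. It then examines only $v$'s final scan.
\begin{itemize}
\item If $w$ was OUT then, its blossom has larger OutTime, so the edge would have been an actionable \BE{}.
\item If $w$ was IN, let $B$ be the least common ancestor of the two blossoms. The search has already backed out of every blossom on the path from $w$'s blossom up to $B$. So the only event that can later turn $w$ OUT is a contraction whose path starts at or above $B$. The resulting blossom either swallows $v$'s blossom or becomes an ancestor of it, contradicting the orientation.
\end{itemize}

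\textbf{How to repair your scan-order version.} You need the analogous statement for the configuration you actually face: $w$ popped first, with $v$ IN at $\theta_w$. Every later contraction that absorbs $v$'s IN blossom must start at or above the LCA of the two blossoms. This holds because all OUT vertices in the backtracked part of the tree are completely scanned, and new OUT vertices there arise only through contractions from above. Consequently, either $v$ and $w$ merge, or the new blossom containing $v$ is an ancestor of $\textsc{RootB}[w]$ with strictly smaller OutTime. In the latter case $(v,w)$ is an actionable \BE{} at $\theta_v$, a contradiction. You also need that the ancestor relation between two root blossoms that stay distinct is preserved by later contractions. That is what lets the orientation (and OutTime order) at scan time match the one at the end.
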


\begin{proof}
Consider the edge $(v,w)$ between two OUT vertices.
It must be the case that $v$ and $w$ belong to the same search tree. Otherwise, suppose that $v$ is in the first search tree that was created. Since $v$ was completely scanned, the edge
$(v,w)$ would have resulted in a $\GE$, which would have resulted in $v$ and $w$ being in the same search tree.

If $b(v) \neq b(w)$, then $v$ and $w$ are contained in different root blossoms,
and by Lemma \ref{lem:blossomevent} one must be a proper ancestor of the other in the search tree. Without loss of generality assume that $\textsc{RootB}[v]$
is a proper ancestor of $\textsc{RootB}[w]$.

Consider the possible outcomes of a Grover Search on vertex $v$.
If the edge $(v,w)$ is returned as a $\GE$, then $w$ is UNLABELED, and 
the scan performs the
corresponding grow step, at which point $w$ is labeled IN.  If $w$  later becomes OUT in a
blossom step, the edge $(v,w)$ lies on the
contracted alternating-tree path, so the two endpoints will then belong
to the same temporary blossom.
If the edge $(v,w)$ is returned as a $\BE$, then the blossoms containing $v$ and $w$ are contracted into a common temporary blossom.

The remaining possibility is that a Grover Search from $v$ never returns the edge $(v,w)$. If $v$ is completely scanned then, $w$  is either IN, or is
an OUT vertex for which the delayed test fails.  
Suppose $w$ is IN, and let $B$ be the lowest common ancestor of 
$\textsc{RootB}[v]$
and $\textsc{RootB}[w]$ in the search tree.
The search has backed up from every blossom on the search tree path from 
$\textsc{RootB}[w]$ back to $B$. The only way for $w$ to become OUT is for there
to be a blossom event that contracts the path from an ancestor of $B$ to a descendant of $\textsc{RootB}[w]$. This contradicts the assumption that
$\textsc{RootB}[v]$ is a proper ancestor of $\textsc{RootB}[w]$.
The last possibility is that $w$ is OUT. Since $\textsc{RootB}[v]$
is a proper ancestor of $\textsc{RootB}[w]$, this would result in a blossom event, placing $v$ and $w$ in the same blossom.
\end{proof}

\begin{lemma}[Maximality of \textsc{QuantumPathSearch}]
\label{lem:qpath-maximal}
Conditioned on the correctness of all calls to
\textsc{GroverSearch}, when Algorithm~\ref{alg:qpathsearch}
terminates, there is no eligible augmenting $H$-blossom path
that does not intersect with any path from $\Psi$.
\end{lemma}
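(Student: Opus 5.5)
We argue by contradiction. Suppose that at termination there is an eligible augmenting $H$-blossom path $P=\langle B_1,\dots,B_r\rangle$ with edges $e_1,\dots,e_{r-1}$ that meets no $H$-blossom of any path in $\Psi$. The plan is to show that the label pattern forced along $P$ contradicts the DFS invariants already established: Corollary~\ref{cor:csp} (completely scanned OUT vertices) and Lemma~\ref{lem:qpath-dfs-invariants} (eligible OUT--OUT edges between scanned vertices stay inside one $b(\cdot)$ class). Throughout we condition on correct Grover calls.

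\textbf{Step 1: both endpoints of $P$ are labeled OUT.} $B_1$ is free, so its base $r$ is a free vertex. The outer loop processes every free vertex whose label is still \UNLABELED. Hence at termination $B_1$ is labeled, since it was either a search root or was reached earlier. If $B_1$ were reached as IN by a grow step, the algorithm would inspect $M[x]$ for $x\in V(B_1)$, find the free base, and add a path to $\Psi$ through $B_1$. That contradicts disjointness from $\Psi$. So $B_1$ is OUT, and symmetrically so is $B_r$.

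Since no vertex of $P$ lies in a blossom on a $\Psi$ path, Corollary~\ref{cor:csp} shows that every OUT vertex on $P$ has been completely scanned.

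\textbf{Step 2: apply the usual Edmonds parity argument along $P$.} Walk from $B_1$ to $B_r$, and let $B_i$ be the first $H$-blossom on $P$ that is not OUT at an ``even'' position, i.e.\ one entered via an unmatched edge from an OUT blossom.

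If the position is odd (entered by unmatched $e_{i-1}$ from OUT $B_{i-1}$), then $B_{i-1}$ is completely scanned. So $B_i$ is not UNLABELED, since otherwise the final Grover search from the endpoint of $e_{i-1}$ would not have returned $\perp$. It is therefore IN or OUT. If $B_i$ is IN and matched, its matched edge $e_i$ is eligible, and the grow step would have labeled $B_{i+1}$ OUT; a free IN blossom is ruled out as in Step 1.

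This yields two patterns. Either the labels strictly alternate OUT/IN along $P$ with OUT at odd indices, which is impossible because $P$ has an even number of blossoms and both ends are OUT. Or some unmatched edge $e_j$ joins two OUT $H$-blossoms.

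\textbf{Step 3: the OUT--OUT edge is the main obstacle.} In that case both endpoints of $e_j$ are completely scanned, so Lemma~\ref{lem:qpath-dfs-invariants} gives $b(\cdot)$ equal for them: they lie in a common temporary root blossom $R$ of $\Omega'$. I would then replace $P$ by the portion before its first entry into $R$ and the portion after its last exit from $R$. Since $R$ is an OUT search blossom whose base is OUT, the standard blossom-shrinking lemma applies: an augmenting path in the graph contracted by $R$ exists whenever one exists before contraction. This gives an augmenting path in the search-contracted graph $G_{\mathrm{elig}}/\Omega'$ that is still disjoint from $\Psi$.

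The cleanest way to organize this is to run Steps 1--2 directly in $G_{\mathrm{elig}}/\Omega'$. There, Lemma~\ref{lem:qpath-dfs-invariants} says every eligible edge between OUT vertices is internal to a root blossom, so no OUT--OUT edge appears, and the alternation argument finishes the proof. The delicate point is justifying the lift/shrink equivalence for the temporary blossoms. That requires each OUT temporary blossom's base to be OUT and its defining cycle to be eligible and alternating, which is exactly what was established in the proof of Lemma~\ref{lem:qpath-valid}. It also requires handling IN $H$-blossoms absorbed into OUT temporary blossoms, which by construction have been relabeled OUT.
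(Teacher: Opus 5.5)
Your Steps 1 and 2 are sound and match the label-propagation part of the paper's argument. Step 3, however, which you yourself call ``the delicate point,'' is where the whole content of the lemma lies, and it is not supplied.

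The blossom-shrinking lemma you cite is an existence statement. Its nontrivial direction is proved by flipping the blossom's stem and invoking Berge's theorem. It does not transform $P$, so it says nothing about disjointness from $\Psi$.

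Applying it in $H$ with the $\Psi$-blossoms deleted does not help either. A temporary root blossom in a tree that produced a path of $\Psi$ can contain $\Psi$-blossoms, and its stem starts at a root on that path. Without a stem the implication is false. Take a triangle $\{a,b,c\}$ with $bc\in M$, base $a$ matched to a pendant vertex $d$, and free vertices $x$ adjacent to $b$ and $y$ adjacent to $c$. Then $x,b,c,y$ is augmenting, but after contracting the triangle there is no augmenting path.

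The concrete surgery you describe keeps $P$ up to its first entry into $R$ and from its last exit. This preserves alternation only if that first entry uses the unique matched edge crossing the boundary of $R$, so that the last exit is unmatched. Otherwise you create two consecutive unmatched edges at $R$. Knowing that the base of $R$ is OUT and its defining cycle is alternating does not tell you where $P$ enters $R$.

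The missing ingredient is that entry property, and it has to come from the search, via the same label propagation as in your Step 2. Walking along $P$, every OUT root blossom of $\Omega'$ met by $P$, other than the one containing $B_1$, is first entered by a matched edge leaving an IN $H$-blossom, hence at its base. It is last left by an unmatched edge leading to an IN $H$-blossom: an OUT endpoint would contradict Lemma~\ref{lem:qpath-dfs-invariants}, and an UNLABELED one would contradict the final $\perp$ search.

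The paper carries exactly this information as the invariant that $B_{2j}$ is OUT with $b(B_{2j})=B_h$ for some $h\le 2j$. The three cases go as follows:
\begin{itemize}
\item an IN $B_{2j+1}$ forces $b(B_{2j+2})=B_{2j+2}$, because the matched edge crosses a root-blossom boundary;
\item an OUT $B_{2j+1}$ forces the matched edge to stay inside the root blossom with base $B_h$;
\item a free $B_{2j+1}$ would be a search root with $b(B_{2j+1})=B_{2j+1}\neq B_h$, contradicting Lemma~\ref{lem:qpath-dfs-invariants}.
\end{itemize}
This works directly in $H$ and never needs contraction.

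If you prefer your contracted formulation, prove the entry property first. The retained edges are edges of $P$ whose endpoints lie in non-$\Psi$ $H$-blossoms, so Corollary~\ref{cor:csp} still applies even when a contracted node contains $\Psi$-blossoms. Your Steps 1--2 then finish the proof.
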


\begin{proof}

Suppose for contradiction that, when the algorithm terminates, there
is an eligible augmenting path
\[
    Q=(B_0,B_1,\ldots,B_{2k+1}),
\]
where each $B_i$ in the path is an $H$ blossom that is not contained in
any path in $\Psi$.
The graph edge connecting $B_i$ to $B_{i+1}$ in the path will be denoted as $(v_i, u_{i+1})$, where $v_i \in B_i$ and $u_{i+1} \in B_{i+1}$.

We can extend the base function $b(\cdot)$, which is defined on vertices
to $H$-blossoms, since all the vertices in the same $H$-blossom have the same
value for the function $b(\cdot)$.
We by induction that prove for $j=0,\ldots,k$ that $B_{2j}$ is OUT and
\[
    b(B_{2j})=B_h
    \qquad\text{for some } h\le 2j.
\]
We will also prove that if the induction hypothesis holds for $j$, then
$B_{2j+1}$ cannot be a free $H$-blossom.

The base case $j=0$ holds because $B_0$ is free and does not belong
to $\Psi$, so the algorithm eventually starts a search from $B_0$.
The $H$-blossom will continue to be the base of its blossom, even after any contraction steps, so $b(B_0) = B_0$ will continue to hold throughout the search.

Suppose that $B_{2j}$ is OUT. Since $B_{2j}\notin V(\Psi)$,
Corollary \ref{cor:csp} implies that every vertex in $B_{2j}$ has been completely scanned,
including $v_{2j}$, the end point of the edge connecting $B_{2j}$ to $B_{2j+1}$.
Therefore $B_{2j+1}$ must be contained in the search structure.
If $B_{2j+1}$ is free and not in a path from $\Psi$, then it must be the
case that a search originated from $B_{2j+1}$. In this case $B_{2j+1}$ is
OUT. Moreover since a new search started from $B_{2j+1}$, it must be
the case that $b(B_{2j+1}) = B_{2j+1}$. However, we also have the
fact that $(v_{2j}, u_{2j+1})$ is an OUT-OUT edge with both endpoints completely scanned. Lemma \ref{lem:qpath-dfs-invariants} implies then that
$b(B_{2j+1}) = b(B_{2j}) = B_h$, for $h \le 2j$, a contradiction.
Therefore, $B_{2j+1}$ cannot be free and must be matched. If $j=k$,
this contradicts the fact that the terminal blossom $B_{2k+1}$ of $Q$
is free. It remains to establish the induction step when $j<k$.

If $B_{2j+1}$ is IN, then it lies outside every nontrivial
temporary search blossom, since every nontrivial temporary blossom
is OUT.  Hence the matched edge
$(v_{2j+1},u_{2j+2})$ crosses the boundary of the current temporary
blossom containing $B_{2j+2}$.  A matched blossom has at most one
matched edge leaving it, and that edge is incident to its base.
Therefore
\[
    b(u_{2j+2}) = b(B_{2j+2})=B_{2j+2}.
\]
Since $B_{2j+2}$ is OUT, the induction claim follows for $j+1$.

If $B_{2j+1}$ is OUT and $B_{2j+1}\notin V(\Psi)$, then
Corollary \ref{cor:csp} implies that all the vertices contained in $B_{2j+1}$ are completely scanned. Applying
Lemma \ref{lem:qpath-dfs-invariants} to the edge $(v_{2j},u_{2j+1})$ gives
\[
    b(B_{2j+1})=b(u_{2j+1})=b(v_{2j})=b(B_{2j})
\]
Therefore $b(B_{2j+1}) = B_h$ for some $h\le2j$. Since this common base is not $B_{2j+1}$, the
matched edge $(v_{2j+1},u_{2j+2})$ lies inside the same temporary
blossom. Consequently $B_{2j+2}$ is OUT and
\[
    b(B_{2j+2})=b(B_{2j+1})=B_h,
\]
so the induction claim again holds for $j+1$.

Therefore, by contradiction, there is no eligible augmenting $H$-blossom path
that does not intersect with any path from $\Psi$.
\end{proof}

\subsection{Proof for the Running Time of Quantum Path Search}
\label{sec:qpathrun}

The lemmas and proofs in this section are referenced
in the proof of Lemma \ref{le:qpath-runtime} which establishes that
the running time of 
\textsc{QuantumPathSearch} is
\(
\widetilde{O}(\sqrt{mn})
\).

\begin{lemma}[Monotonicity of actionable incidences]
\label{lem:actionable-monotonicity}
Let $\theta_1<\theta_2$ be two consecutive moments at which an
OUT vertex $v$ is scanned during one free-root search.  If an
incidence $(v,w)$ is not actionable from $v$ at time $\theta_1$,
then it does not become actionable from $v$ during
$(\theta_1,\theta_2)$.
\end{lemma}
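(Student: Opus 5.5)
The plan is to fix an incidence $(v,w)$ that is not actionable from $v$ at time $\theta_1$. Recall that actionability means the almost-tight condition $y(v)+y(w)=w(v,w)-2$ holds together with either $UIO[w]=\UNLABELED$ (a $\GE$) or $UIO[w]=\OUTLAB$ with $\mathrm{OutTime}[\textsc{RootB}[w]]>\mathrm{OutTime}[\textsc{RootB}[v]]$ (a $\BE$). Then I would check, event by event, every state change the algorithm can make during $(\theta_1,\theta_2)$.

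\textbf{Step 1: the static part.} Inside \textsc{QuantumPathSearch} the duals $y$ and the weights are never modified. So if the almost-tight condition fails at $\theta_1$, it fails throughout, and we are done. We may therefore assume the edge is almost tight and that the label or time condition fails at $\theta_1$. That leaves two cases for $w$:
\begin{itemize}
\item $w$ is \INLAB{} at $\theta_1$;
\item $w$ is \OUTLAB{} with $\mathrm{OutTime}[\textsc{RootB}[w]]\le \mathrm{OutTime}[\textsc{RootB}[v]]$.
\end{itemize}
Labels only move forward ($\UNLABELED\to\INLAB\to\OUTLAB$), so $w$ can never become \UNLABELED{} again, and a $\GE$ on this incidence is impossible afterwards. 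It remains to show the $\BE$ predicate cannot switch on.

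\textbf{Step 2: what happens between two scans of $v$.} Since $\theta_1,\theta_2$ are consecutive scans of $v$, during $(\theta_1,\theta_2)$ the top of the stack lies strictly above $v$'s stack position. I would first argue that this interval consists of a single successful search at $\theta_1$ followed by the processing of everything pushed above $v$. By the stack invariant of Lemma~\ref{lem:csp}, the only blossom events in the interval are either the one triggered at $\theta_1$ itself, or ones triggered from vertices above $v$. The latter contract only parts of the current search path at or below $\textsc{RootB}[v]$, by Corollary~\ref{cor:blossomevent}.

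Any such contraction merges $v$'s current root blossom with descendants. The new blossom inherits the $\mathrm{OutTime}$ of its base $C_0$, which is the topmost blossom of the contracted path. So $\mathrm{OutTime}[\textsc{RootB}[v]]$ can only stay the same or decrease, since it becomes the OutTime of an ancestor. Meanwhile the IN-to-OUT conversions inside a contraction put the converted vertices into the same new root blossom as its base.

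\textbf{Step 3: the two cases.}
\begin{itemize}
\item If $w$ is \OUTLAB{} with small OutTime, then $\textsc{RootB}[w]$ is an ancestor of $\textsc{RootB}[v]$ or lies in an earlier tree. Contractions either leave it as an ancestor, so its OutTime is still at most that of $v$'s blossom, or merge $w$ and $v$ into one blossom, giving equal OutTimes. Either way the strict inequality never arises.
\item If $w$ is \INLAB{} and later becomes \OUTLAB{}, this happens only through a contraction that makes $w$ part of the new root blossom containing the path. If that blossom contains $v$, the OutTimes are equal and the incidence is not actionable. Otherwise the contraction occurs strictly below $\textsc{RootB}[v]$. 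I would rule this out with the argument from the IN case of Lemma~\ref{lem:qpath-dfs-invariants}: $\textsc{RootB}[w]$ being IN and adjacent through an almost-tight edge forces its relationship to $v$'s path. I would split on whether $\textsc{RootB}[w]$ is on the current path above $v$, or was grown off a descendant subtree.
\end{itemize}

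\textbf{Main obstacle.} The hardest sub-case is an IN $w$ hanging below $v$ in a subtree explored after $\theta_1$. A contraction there could turn $w$ OUT with a larger OutTime than $v$'s blossom without absorbing $v$. My first attempt would be to show that such $w$ must have been UNLABELED, not IN, at $\theta_1$. The reason is that any blossom grown below $v$ during the interval was unlabeled at $\theta_1$, and blossoms labeled IN before $\theta_1$ lie on the path above $v$ or in finished subtrees. Contractions never reach into finished subtrees, because they only span the current path.

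If this claim fails, the fallback is to weaken the lemma to the form actually used in Lemma~\ref{le:qpath-runtime}: count edges that are actionable at some later successful call and charge them differently. I would try the direct claim first.
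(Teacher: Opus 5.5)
\textbf{Overall assessment.} Your skeleton matches the paper's. Eligibility is frozen during Task~1, and labels only move forward, so a \GE{} cannot reappear. You then split on $w$ being \INLAB{} or \OUTLAB{} at $\theta_1$, using the dichotomy ``the contraction absorbs $v$'' versus ``it happens strictly below $\textsc{RootB}[v]$.'' The \OUTLAB{} case is fine, granting Lemma~\ref{lem:blossomevent}. The gap is in the case that carries the lemma: $w$ is \INLAB{} at $\theta_1$ and a contraction not containing $v$ absorbs it.

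\textbf{The gap.} You leave this case as a ``first attempt'' with a fallback, and your justification is false. By Corollary~\ref{cor:blossomevent}, a blossom step triggered from the top-of-stack blossom $X$ contracts the tree path from $X$ \emph{down} to the other endpoint's blossom. That blossom lies in $X$'s already explored subtree. So contractions do not ``span the current search path'' (which runs from $X$ upward), and they do reach into finished subtrees. For instance, a blossom step based at $v$'s current blossom, such as one triggered at $\theta_1$ itself, absorbs \INLAB{} blossoms hanging below $\textsc{RootB}[v]$ that were labeled before $\theta_1$. Hence knowing that contractions occur ``at or below $\textsc{RootB}[v]$'' is not enough, because \INLAB{}-at-$\theta_1$ blossoms can sit below $\textsc{RootB}[v]$. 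Your ``hardest sub-case'' is also misidentified. An \INLAB{} $w$ in a subtree explored after $\theta_1$ was \UNLABELED{} at $\theta_1$ by definition, so that case is vacuous. The dangerous configuration is an \INLAB{} blossom in a subtree explored before $\theta_1$.

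\textbf{The missing invariant.} What you need is the paper's stack invariant. Let $T$ be the value of the counter $t$ at $\theta_1$. While $v$ is on the stack, every \OUTLAB{} vertex $x$ above $v$ either lies in the current blossom containing $v$ or satisfies $\mathrm{OutTime}[\textsc{RootB}[x]]>T$. The proof is by induction over the three kinds of step:
\begin{itemize}
\item Pops only remove vertices above $v$.
\item Grow steps give the new \OUTLAB{} blossom a fresh value $t>T$.
\item A blossom step from such an $x$ has base $\textsc{RootB}[x]$. So the new blossom either contains $v$ or inherits an $\mathrm{OutTime}$ greater than $T$.
\end{itemize}

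\textbf{How it closes both cases.} Every blossom on a downward tree path from a base was reached after that base. Therefore a contraction not containing $v$ involves only blossoms first reached after $\theta_1$, which were \UNLABELED{} at $\theta_1$. It can never convert an \INLAB{}-at-$\theta_1$ vertex. The same invariant shows that such a contraction cannot involve $w$'s blossom when $w$ was \OUTLAB{} with $\mathrm{OutTime}\le T$. Contractions that do contain $v$ have base $v$'s blossom, so $\mathrm{OutTime}[\textsc{RootB}[v]]$ is unchanged. This settles the \OUTLAB{} case directly, without Lemma~\ref{lem:blossomevent}. With this invariant, the fallback of weakening the lemma is unnecessary.
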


\begin{proof}
The matching $M$ and the dual variables $y,z$ do not change during
Task~1.  Hence the Criterion~2 eligibility of an edge is fixed during
\textsc{QuantumPathSearch}.  Moreover, temporary contractions only
merge search blossoms.  It therefore suffices to consider changes in
the search labels and in the temporary-blossom structure.

Let $T$ be the value of the global counter $t$ when $v$ is scanned
at time $\theta_1$.  We first use the following stack invariant:
while $v$ remains on the stack, every OUT vertex $x$ above $v$ on
the stack either belongs to the same current temporary blossom as
$v$, or satisfies
\[
  \mathrm{OutTime}[\textsc{RootB}[x]]>T.
\]

This follows by induction on the search operations performed before
$v$ is scanned again.  A pop only removes a vertex above $v$.  A
grow step places on the stack a newly reached OUT vertex whose
$\mathrm{OutTime}$ is larger than $T$.  Finally, suppose a blossom
step is performed while $x$ is at the top of the stack.  By the
induction hypothesis, either $x$ is already in the temporary blossom
containing $v$, in which case every vertex newly absorbed by the
contraction is also in that blossom, or
\[
  \mathrm{OutTime}[\textsc{RootB}[x]]>T.
\]
In the latter case the new temporary blossom inherits the
$\mathrm{OutTime}$ of its base, namely that of
$\textsc{RootB}[x]$, so every newly activated OUT vertex placed on
the stack also belongs to a blossom whose $\mathrm{OutTime}$ is
larger than $T$.

Now consider a neighbor $w$ for which $(v,w)$ is not actionable at
$\theta_1$.  If $(v,w)$ is not eligible under Criterion~2, it remains
ineligible throughout Task~1.  If $v$ and $w$ are already in the
same temporary blossom, they remain so because temporary blossoms
are only contracted, never expanded.

It remains to consider a Criterion~2 eligible edge joining distinct
temporary blossoms.  Since $v$ is OUT, $w$ cannot be UNLABELED at
$\theta_1$, because then $(v,w)$ would be actionable.  Suppose first
that $w$ is IN.  If $w$ later becomes OUT before $\theta_2$, this
happens in a blossom step initiated from some OUT vertex $x$ above
$v$.  By the stack invariant, either $x$ lies in the temporary
blossom containing $v$, in which case the contraction also places
$w$ in that blossom and $(v,w)$ becomes internal, or
\[
  \mathrm{OutTime}[\textsc{RootB}[x]]>T.
\]
In the latter case every vertex newly exposed on the contracted
search-tree path was reached after time $T$ and therefore was
UNLABELED at $\theta_1$.  Thus this case cannot apply to a vertex
$w$ that was already IN at $\theta_1$.

Finally suppose that $w$ is OUT at $\theta_1$.  Since $(v,w)$ is
not actionable,
\[
  \mathrm{OutTime}[\textsc{RootB}[w]]
  \le
  \mathrm{OutTime}[\textsc{RootB}[v]].
\]
Before $v$ is scanned again, only vertices above $v$ are active.
The stack invariant implies that any contraction affecting such
vertices either merges them into the blossom containing $v$, making
$(v,w)$ internal, or has a base whose $\mathrm{OutTime}$ is larger
than $T$.  It cannot reverse the above ordering while leaving
$v$ and $w$ in distinct temporary blossoms.  Thus $(v,w)$ does not
become actionable before $\theta_2$.

\end{proof}

\begin{lemma}
\label{le:union-by-size}
The total cost of maintaining the root-blossom labels by union-by-size during
\textsc{ContractSearchBlossom} is $O(n\log n)$.
\end{lemma}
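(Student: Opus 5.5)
We use the standard charging argument for union-by-size with explicit set labels. The cost of maintaining $\textsc{RootB}$ inside \textsc{ContractSearchBlossom} comes from the loop that writes $\textsc{RootB}[a]\gets C_{max}$ for every vertex $a$ in every child $C_i \neq C_{new}$. The renaming of $C_{max}$ to $C_{new}$ and of $B$ to $C_{max}$ is a constant number of pointer and label updates per contraction. There are at most $O(n)$ contractions, since each one strictly decreases the number of root blossoms in $\Omega'$, and $\Omega'$ starts with at most $n$ roots. So these renamings contribute only $O(n)$. It therefore suffices to bound the total number of writes $\textsc{RootB}[a]\gets\cdot$ by $O(n\log n)$.

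We charge each such write to the vertex $a$ being relabeled. Before the contraction, $a$ lies in some child $C_i$ that is not the child of maximum size. Because $C_{max}$ maximizes $|V(C_i)|$, we have $|V(C_{max})|\ge |V(C_i)|$. After the contraction, $a$ lies in the new root blossom, and $V(B)\supseteq V(C_i)\cup V(C_{max})$, so $|V(B)|\ge 2|V(C_i)|$. Hence each time $a$ is relabeled, the size of the root blossom of $\Omega'$ containing $a$ at least doubles.

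This size never decreases during \textsc{QuantumPathSearch}. Temporary blossoms are only created, never dissolved, within one search, and $\Omega'$ only gains new roots that are unions of old ones. The size is also always at most $n$. Therefore each vertex is relabeled at most $\log_2 n$ times, and summing over the $n$ vertices gives $O(n\log n)$ writes. Computing $\arg\max_{C_i}|V(C_i)|$ needs stored sizes, which we keep at each root of $\Omega'$. Each contraction then costs $O(2k+1)$ for the argmax and for setting $|V(B)|=\sum_i |V(C_i)|$. Since the $2k+1$ children stop being roots, the total over all contractions is $O(n)$.

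The only delicate point is that the relabeling must touch exactly the vertices outside $C_{max}$. The contraction renames $C_{max}$ to $C_{new}$ and gives $B$ the old name $C_{max}$, so every vertex of the old $C_{max}$ already stores the label $C_{max}$ and needs no write. This is what makes the labels stay correct for free on the largest part, and hence what makes the charging scheme valid. The same bound applies to the second execution of the search in Task~3.
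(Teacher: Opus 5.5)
Your proposal is correct and follows the same route as the paper's proof. Each write $\textsc{RootB}[a]\gets C_{max}$ at least doubles the size of $a$'s root-blossom class, and that size is at most $n$, so each vertex is relabeled $O(\log n)$ times. Your additional bookkeeping (constant-cost renamings, stored sizes for the argmax, and monotonicity of class sizes within one execution) fills in details the paper leaves implicit.
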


\begin{proof}
Whenever several current root-blossom classes are merged, we rename the labels
of the smaller classes to the largest class. Each time the label of a vertex is
renamed, the size of its class at least doubles. Since a class size never
exceeds $n$, each vertex is relabeled at most $\lfloor \log_2 n \rfloor$ times.
Hence the total number of relabelings over the whole blossom detection task is $O(n\log n)$.
\end{proof}

\begin{lemma}
\label{le:cycle-cost}
The total cost of maintaining the defining cycles and changing the
\textsc{IN} vertices into \textsc{OUT} during
\textsc{ContractSearchBlossom} is $O(n)$.
\end{lemma}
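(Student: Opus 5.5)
We bound the work of \textsc{ContractSearchBlossom} by charging it to $H$-blossoms and to vertices whose labels change irreversibly, excluding the \textsc{RootB} relabeling already handled by Lemma~\ref{le:union-by-size}. Each call, with cycle $C_0,C_1,\ldots,C_{2k}$, does three kinds of work besides relabeling. It walks the \(\mathrm{predEdge}\) path from $C_{2k}=B_u$ up to $C_0=B_v$. It creates the new node $B$ in $\Omega'$ with $2k+1$ children and stores $2k+1$ defining edges. It enumerates $V(C_{2j-1})$ for the $k$ formerly \textsc{IN} children, setting their labels to \textsc{OUT} and pushing them. The first two items cost $O(k+1)$ and the third costs $O\bigl(1+\sum_{j=1}^k |V(C_{2j-1})|\bigr)$. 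We need the sums of these quantities over all calls to be $O(n)$.

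\textbf{Step 1 (\textsc{IN}$\to$\textsc{OUT} work).} Show that every vertex changes from \textsc{IN} to \textsc{OUT} at most once during a run of \textsc{QuantumPathSearch}. Labels only move forward along \textsc{Unlabeled} $\to$ \textsc{IN} $\to$ \textsc{OUT}, and no operation ever sets a label back to \textsc{IN} or \textsc{Unlabeled}. Moreover, the odd-indexed $C_{2j-1}$ on the cycle are exactly the \textsc{IN} blossoms, by the alternation of the search tree (Lemma~\ref{lem:qpath-valid}) and because every nontrivial temporary blossom is \textsc{OUT}. Hence each vertex in $V(C_{2j-1})$ is \textsc{IN} just before the call. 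Therefore $\sum \sum_j |V(C_{2j-1})|\le n$, which also bounds the pushes done here.

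\textbf{Step 2 (cycle length and path walking).} Charge $O(k+1)$ per call against the decrease in the number of root blossoms of $\Omega'$. A call replaces $2k+1\ge 3$ root blossoms by one, so it reduces that count by $2k$. The count starts at most $n$ and never increases during the search, since temporary blossoms are never expanded within Task~1. Hence $\sum_{\text{calls}}(2k)\le n$. The number of calls is also at most $n/2$, because each reduces the count by at least $2$. Together these bound $\sum(2k+1)=O(n)$. This also covers the walk along \(\mathrm{predEdge}\), which visits exactly the $2k+1$ cycle blossoms, and the storage of the defining cycle. The bookkeeping steps (inheriting \(\mathrm{predEdge}\) and $\mathrm{OutTime}$, renaming $C_{\max}$ and $B$) are $O(1)$ per call once the maximum is known. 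Finding $C_{\max}$ needs sizes, which we keep as a size field per root blossom updated in $O(k)$. A constant-time enumeration of $V(C_i)$ can be provided through a linked list of leaves per root blossom, concatenated in $O(k)$ at each contraction.

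\textbf{Main obstacle.} The delicate point is Step 1's claim that the odd-indexed children are precisely the current \textsc{IN} blossoms and that none of their vertices was already \textsc{OUT}. Otherwise a vertex could be pushed repeatedly and the $O(n)$ charge would fail. I would prove it from Corollary~\ref{cor:blossomevent}, which says the path is a genuine ancestor path, together with the alternation of the \(\mathrm{predEdge}\) tree. If an odd child were a nontrivial temporary blossom it would be \textsc{OUT}, contradicting alternation. Everything else reduces to the potential argument over the number of root blossoms.
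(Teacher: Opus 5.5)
Your proof is correct and follows the paper's argument. The defining-cycle cost is charged to the decrease in the number of current root blossoms: each contraction of $2k+1$ roots removes $2k$ of them, and there are $O(n)$ contractions. The \textsc{IN}$\to$\textsc{OUT} work is bounded by the monotonicity of the labels. Your extra check is also useful: you verify that the odd-indexed children $C_{2j-1}$ are exactly the \textsc{IN} blossoms, so every vertex enumerated in that loop really changes label, which fills in a step the paper leaves implicit.
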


\begin{proof}

If a contraction uses a defining cycle containing $\ell$ current root blossoms, Algorithm~\ref{alg:contract-search-blossom} records only those $\ell$ current roots and replaces them by one new current root.  Thus the number of current root blossoms decreases by $\ell-1$.  Summed over all contractions, $\sum(\ell-1)=O(n)$; since there are also only $O(n)$ contractions, the total defining-cycle length is $O(n)$.

For the label changes, turning \textsc{IN} into \textsc{OUT} is monotone: once a
vertex becomes \textsc{OUT}, it remains \textsc{OUT} for the rest of
Algorithm~\ref{alg:qpathsearch}. Hence there are at most $O(n)$ such label changes.
Therefore the total classical work in these two tasks is $O(n)$.
\end{proof}

\subsection{Augmentation}

By Facts \ref{fact:aug1} and \ref{fact:aug2}, the augmenting paths in $G/\Omega$ found by \textsc{QuantumPathSearch} can be lifted to the original graph in time proportional to the sum of the lengths of the paths. 
The matching can then be updated by augmenting along those paths.

Note that Criterion 2 from \cite{dps} ensures that after augmentation, there are no eligible 
paths in the resulting path. This fact is proven in Lemma 2.3 in \cite{dps}.
We encapsulate that argument in the following lemma. 

\begin{lemma}
\label{le:nomore}
After the augmentation task, there is no eligible augmenting path in $G/\Omega$.
\end{lemma}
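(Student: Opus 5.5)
Suppose, for contradiction, that after Task~2 there is an eligible augmenting path $Q$ in $G/\Omega$. The plan is to show that $Q$ must share an $H$-blossom with some path of $\Psi$, and then that every eligible edge of $Q$ at such a shared blossom was made ineligible by the augmentation. This is the argument of Lemma~2.3 of Duan, Pettie, and Su~\cite{dps}, which we follow.

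\textbf{Step 1 (disjoint paths are excluded).} First I check that eligibility of every edge not touched by the augmentation is unchanged. During Tasks~1--2 the duals $y,z$ and the structure $\Omega$ are fixed, and Criterion~2 depends only on tightness and matched status. Hence if $Q$ were blossom-disjoint from every path in $\Psi$, it would already have been eligible and augmenting before augmentation. That would contradict the maximality of $\Psi$ established in Lemma~\ref{lem:qpath-maximal}, conditioned on correct Grover calls. So $Q$ must share some $H$-blossom with some $P\in\Psi$.

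\textbf{Step 2 (edges of $P$ become ineligible).} Any edge $e$ on a path $P\in\Psi$ flips its matched status when we augment along the lift of $P$, but $y(u)+y(v)$ stays the same. If $e$ was unmatched and eligible, then $y(u)+y(v)=w(e)-2$; after augmentation $e$ is matched, and eligibility would require $y(u)+y(v)=w(e)$, so $e$ is now ineligible. If $e$ was matched and tight, it is now unmatched with $y(u)+y(v)=w(e)\neq w(e)-2$, so it is again ineligible. Thus no edge of any $P\in\Psi$ remains eligible in $G/\Omega$. By Fact~\ref{fact:aug1} the blossom invariants are preserved, and blossom-internal edges do not matter in the contracted graph.

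\textbf{Step 3 (the shared blossom leads to a contradiction).} Let $B$ be an $H$-blossom lying on both $Q$ and some $P\in\Psi$.
\begin{itemize}
\item \emph{$B$ is an interior blossom of $Q$.} Then $Q$ uses the matched edge leaving $B$. After augmentation, that matched edge is exactly $B$'s edge on $P$: the free endpoint blossoms of $P$ become matched via $P$'s edges, and interior blossoms get $P$'s new matched edge. By Step~2 this edge is ineligible, which contradicts $Q$ being eligible.
\item \emph{$B$ is an endpoint of $Q$.} Then $B$ must be free after augmentation. But every blossom on $P$ is matched after augmentation, so $B$ cannot be free.
\end{itemize}
Both cases are contradictions, so no eligible augmenting path exists in $G/\Omega$.

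The main obstacle I expect is Step~3: justifying carefully, at the contracted level, that the matched edge incident to a shared blossom after augmentation is exactly its $P$-edge. I would settle this using Fact~\ref{fact:blossommatch} together with the fact that a blossom has at most one external matched edge.
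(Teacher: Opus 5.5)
Your proof is correct and follows the paper's route: maximality of $\Psi$ forces $Q$ to share an $H$-blossom $B$ with some path of $\Psi$, augmentation makes every edge of that path ineligible under Criterion~2, and so the unique external matched edge of $B$ is ineligible. Your version is more careful than the paper's terse argument: you check that edges and free blossoms untouched by augmentation keep their status, which is what makes maximality applicable, and you treat separately the case where $B$ is an endpoint of $Q$, which the paper leaves implicit.
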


\begin{proof}Let $P$ be the set of maximal  blossom-disjoint 
paths found by Algorithm \ref{alg:qpathsearch}.
Suppose, for contradiction, that after augmenting along the paths in $P$, there is
an eligible augmenting path. 
Since the paths in $P$ are  maximal, the eligible augmenting path must share a blossom $B$ with some path in $P$.
Since under Criterion 2, edges inside and outside the matching have different properties, when
we augment along all the paths in $P$, the  edges contained in those paths all become ineligible. 
Therefore, the matched edge connecting $B$ to another root blossom is ineligible,
contradicting the fact that there is an eligible augmenting path containing blossom $B$.
\end{proof}

\subsection{Task 3: Quantum Blossom Detection}

Let $\Omega_{\mathrm{old}}$ be the persistent blossom family after Task~2.
By Lemma~\ref{le:nomore}, the eligible contracted graph
$G_{\mathrm{elig}}/\Omega_{\mathrm{old}}$ contains no augmenting path.
Task~3 reuses Algorithm~\ref{alg:qpathsearch} without modification: we start
a fresh execution on the updated matching, initialize a search-local family
$\Omega'$ as a copy of $\Omega_{\mathrm{old}}$, and reinitialize all search
labels and ordered-scan data exactly as in Task~1.

Any path that Algorithm~\ref{alg:qpathsearch} would add to $\Psi$ is, by
Lemma~\ref{lem:qpath-valid}, an eligible augmenting path in
$G_{\mathrm{elig}}/\Omega_{\mathrm{old}}$.  Lemma~\ref{le:nomore} therefore
implies that this branch is never reached during Task~3 and $\Psi$ remains
empty.  All operations that do occur---grow steps, matched steps,
OUT--OUT blossom contractions, activation of the former IN parts of a new
search blossom, and backtracking---are exactly the operations already defined
and analyzed for Task~1.

\begin{lemma}[Alternating forest in Task~3]
\label{le:q-forest}
During Task~3, the predecessor structure maintained by Algorithm \ref{alg:qpathsearch} is an alternating forest rooted at the free blossoms.  Moreover, every
eligible OUT--OUT edge whose endpoints lie in distinct current root blossoms
has both endpoints in the same alternating tree.
\end{lemma}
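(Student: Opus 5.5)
The plan is to prove both claims by induction on the steps of Algorithm~\ref{alg:qpathsearch} as run in Task~3, reusing the invariants already established for Task~1 in Lemmas~\ref{lem:csp}, \ref{lem:blossomevent} and~\ref{lem:qpath-dfs-invariants}.

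\textbf{First claim (alternating forest).} We argue as in the proof of Lemma~\ref{lem:qpath-valid}. Every search tree is rooted at a free root blossom, and the free blossom's $\mathrm{predEdge}$ is $\perp$. A grow step sets $\mathrm{predEdge}[B_u]$ to an unmatched eligible edge from an OUT vertex into a previously UNLABELED root blossom, which becomes IN. The matched step then sets $\mathrm{predEdge}[B_q]$ to the unique eligible matched edge leaving $B_u$, with $B_q$ previously unlabeled and now OUT. So each new predecessor pointer attaches a previously unlabeled blossom below an existing node of a tree; no cycles are created and the edge types alternate by level. The matched-step target $B_q$ is indeed unlabeled: it cannot be a free root, since it is matched to $B_u$ (Fact~\ref{fact:blossommatch}), and if it had been reached earlier, its unique external matched edge would already have brought $B_u$ into the search.

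A blossom step replaces the tree path $C_0,\ldots,C_{2k}$ by a single node $B$. This node inherits $\mathrm{predEdge}[C_0]$. The other children of $C_1,\ldots,C_{2k}$, i.e.\ the subtrees hanging off the contracted path, keep $\mathrm{predEdge}$ edges whose tails now lie in $B$, so they hang off $B$. Since $C_0$ is OUT and $B$ is labeled OUT, alternation is preserved. Finally, Lemma~\ref{le:nomore} guarantees that the augmenting-path branch never fires in Task~3, so no tree is ever truncated. The predecessor structure is therefore an alternating forest rooted at the free blossoms throughout.

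\textbf{Second claim (OUT--OUT edges within one tree).} Take an eligible edge $(v,w)$ with both endpoints OUT and in distinct current root blossoms, where $v$'s tree was started no later than $w$'s. Consider the moment $v$'s tree finished. By Corollary~\ref{cor:csp}, with $\Psi=\emptyset$ in Task~3, every OUT vertex of that tree, in particular $v$, has been completely scanned. At that moment $w$ was not OUT in $v$'s tree (we are assuming the trees differ). It also could not have been in an earlier tree: then $w$ would have been completely scanned at the end of that earlier tree, and the edge would have been found from $w$. Hence $w$ was UNLABELED when $v$ was scanned. Since $(v,w)$ is eligible and unmatched, and $v$ and $w$ lie in different blossoms, a Grover search from $v$ returning $\perp$ is impossible. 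This would contradict $v$ being completely scanned, so $w$ lies in $v$'s tree.

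One subtlety needs handling: the search from $v$ could end with $v$ popped while $w$ was still IN. Then $(v,w)$ was not actionable, and $w$ must later become OUT through a blossom step within the same tree. But IN/OUT labels never move between trees, so $w$ is still in $v$'s tree. More generally, the key fact is that labeled vertices never change trees, and completely scanned vertices cannot have unlabeled eligible neighbors at the moment they are popped.

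\textbf{Main obstacle.} The hardest step is the forest claim across blossom steps: checking that the subtrees hanging off $C_1,\ldots,C_{2k}$ are correctly re-rooted at $B$ through their unchanged $\mathrm{predEdge}$ values and the relabeled $\textsc{RootB}$. This will use Corollary~\ref{cor:blossomevent}, which guarantees that the contracted portion is an ancestor--descendant path.
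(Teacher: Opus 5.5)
Your proof is correct. For the first claim it follows the paper's argument: grow edges are unmatched and eligible, the matched step follows the unique eligible external matched edge, and contractions preserve alternation. You just supply more detail, for example why the matched-step target $B_q$ must be unlabeled.

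For the second claim you take a genuinely different route.

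\textbf{The paper's argument is static.} If an eligible OUT--OUT edge joined two trees, take the root-to-$u$ tree path, the edge, and the reversed tree path from $v$ to its root. Lifted through the stored defining cycles of the temporary blossoms, this would be an eligible augmenting path in $G_{\mathrm{elig}}/\Omega_{\mathrm{old}}$, contradicting Lemma~\ref{le:nomore}.

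\textbf{Your argument is operational, and is the same reasoning as the first paragraph of the proof of Lemma~\ref{lem:qpath-dfs-invariants}.} In Task~3 the augmenting branch never fires, so each tree runs until its stack empties. Every OUT vertex of a finished tree is therefore completely scanned. Labels never migrate between trees, which follows from your matched-step remark and the $\mathrm{OutTime}$ ordering in Corollary~\ref{cor:blossomevent}. Hence if $w$ lies in a later tree, it was UNLABELED at $v$'s final scan, and that \textsc{GroverSearch} could not have returned $\perp$.

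\textbf{What each buys.} The paper's version is shorter and holds for any alternating forest with no augmenting path, regardless of scan order. It is even insensitive to false-negative Grover answers. However, it silently needs the tree paths through temporary blossoms to lift to even alternating paths ending at the OUT endpoints. Your version needs no path lifting and uses Lemma~\ref{le:nomore} only to guarantee that no tree is cut short. The cost is that you depend on exhaustive scanning, i.e., on every $\perp$ answer being correct.

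\textbf{Two small cleanups.} First, the case analysis reads better as follows. At the end of $v$'s tree, $w$ is either labeled in that tree (done) or UNLABELED. Since labels are monotone, in the latter case $w$ was UNLABELED at $v$'s final scan. This makes the separate ``earlier tree'' sentence redundant under your ordering assumption. Second, you assert without justification that $(v,w)$ is unmatched. One line handles the matched case: of two matched partner blossoms, the first to be labeled must be IN via a grow step, and the matched step then pulls the other into the same tree. The paper's proof makes the same implicit assumption, since its concatenated path alternates only when the bridging edge is unmatched.
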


\begin{proof}
The first statement is the same alternating-tree invariant used in
Lemma~\ref{lem:qpath-valid}: grow edges from OUT to newly reached blossoms are
eligible unmatched edges, and every matched IN blossom is followed through
its unique eligible external matched edge to an OUT blossom.  Temporary
blossom contractions preserve this alternating reachability.

For the second statement, suppose that an eligible OUT--OUT edge $(u,v)$
joined two distinct alternating trees.  The alternating tree path from the
free root of the first tree to $u$, followed by $(u,v)$, followed by the
reverse alternating tree path from $v$ to the free root of the second tree,
would form an eligible augmenting path after expanding any temporary search
blossoms along their stored defining cycles.  This would give an eligible
augmenting path in $G_{\mathrm{elig}}/\Omega_{\mathrm{old}}$, contradicting
Lemma~\ref{le:nomore}.  Hence the two endpoints must lie in the same
alternating tree.
\end{proof}

\begin{lemma}[Exhaustion of Task~3]
\label{le:maxblossom}
When Task~3 terminates:
\begin{enumerate}
    \item every OUT vertex is completely scanned;
    \item there is no eligible edge from an OUT vertex to an UNLABELED vertex
    in a distinct current root blossom; and
    \item there is no eligible OUT--OUT edge whose endpoints lie in distinct
    current root blossoms.
\end{enumerate}
\end{lemma}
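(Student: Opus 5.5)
The three parts will be proved in order, each feeding the next. Throughout I condition on the correctness of all calls to \textsc{GroverSearch}, and I use that Task~3 never takes the augmenting-path branch of Algorithm~\ref{alg:qpathsearch} (Lemma~\ref{le:nomore} combined with Lemma~\ref{lem:qpath-valid}).

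\textbf{Part 1.} The argument is by stack discipline. Every vertex that ever receives the label OUT is pushed onto $S$ at that moment. This happens in three places: at the start of a free-root search, when a new OUT blossom is reached through a matched edge, and in the final loop of \textsc{ContractSearchBlossom}. A vertex is popped only when $\textsc{GroverSearch}$ from it returns $\perp$, that is, only when it has been completely scanned. The \textbf{break} that would abandon a nonempty stack occurs only in the augmenting branch, which is never executed in Task~3. So every free-root search runs until its stack is empty. Every OUT vertex therefore was pushed, was eventually popped, and hence was completely scanned. This is exactly Corollary~\ref{cor:csp} with $\Psi=\emptyset$.

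\textbf{Part 2.} Let $(v,u)$ be an eligible edge at termination, with $v$ OUT, $u$ UNLABELED, and the two endpoints in distinct current root blossoms. By Part~1, $v$ was completely scanned at some time $\theta$. Labels only move from UNLABELED to IN to OUT, so $u$ was UNLABELED at time $\theta$ as well. At $\theta$ the edge was almost tight, since $y$ and $M$ are fixed, and $u$ was UNLABELED, so the pair satisfied the \GE{} predicate of Figure~\ref{fig:grover2}. Hence $\textsc{GroverSearch}(v)$ could not have returned $\perp$, which is a contradiction.

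One point needs care: the definition of eligible. An eligible edge from an OUT vertex is normally unmatched, and for those the above suffices. If $(v,u)$ is matched, then $u=M[v]$. Every OUT blossom is either free or was entered through its unique external matched edge, which is then labeled. The only alternative is that $v$'s matched partner lies inside the same root blossom. So $u$ cannot be UNLABELED in a distinct root blossom.

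\textbf{Part 3.} Let $(v,w)$ be an eligible OUT--OUT edge at termination. If it is matched, then, as in the Part~2 argument, it is either internal to a root blossom or the predecessor edge of an OUT blossom, and the latter cannot join two OUT blossoms, since the IN side becomes OUT only upon contraction into the same blossom. So assume $(v,w)$ is unmatched. By Part~1 both $v$ and $w$ are completely scanned. Lemma~\ref{lem:qpath-dfs-invariants} then gives $b(v)=b(w)$, so $v$ and $w$ lie in the same root blossom of $\Omega'$, contradicting the assumption that they are in distinct current root blossoms. Lemma~\ref{le:q-forest} can be cited to justify that both endpoints lie in the same tree, which is the first step of the proof of Lemma~\ref{lem:qpath-dfs-invariants}.

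\textbf{Main obstacle.} I expect the hardest step to be Part~3's reliance on Lemma~\ref{lem:qpath-dfs-invariants}. That lemma is phrased through the representative $b(\cdot)$ rather than through $\textsc{RootB}$. I would check that equality of $b$ really means equality of root blossoms in $\Omega'$, which holds because $b$ is defined by descending from $\textsc{RootB}[v]$ along leftmost children. I would also verify that the matched-edge case does not slip through the Grover predicate, which tests only almost-tight unmatched edges.
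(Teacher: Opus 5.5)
Your proof is correct and follows essentially the same route as the paper: Item~1 comes from the fact that $\Psi=\emptyset$ in Task~3, which the paper handles by citing Corollary~\ref{cor:csp} and you handle by a direct stack-discipline argument. Item~2 uses the final $\perp$ call of \textsc{GroverSearch}$(v)$, and Item~3 applies Lemma~\ref{lem:qpath-dfs-invariants} to two completely scanned OUT endpoints. Your additional details are sound and fill in points the paper leaves implicit: label monotonicity (so $u$ was already UNLABELED at the final scan), separate treatment of tight matched edges (which the Grover predicate does not test), and the observation that $b(v)=b(w)$ forces a common root blossom of $\Omega'$.
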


\begin{proof}
As observed above, $\Psi=\emptyset$ throughout Task~3.  Corrolary \ref{cor:csp} therefore implies that every OUT vertex
is completely scanned when its free-root search terminates.  This proves
Item~1.

For Item~2, suppose that an eligible edge $(v,u)$ remained at termination
with $v$ OUT, $u$ UNLABELED, and the endpoints in distinct current root
blossoms.  Since $v$ is completely scanned, its final call to
\textsc{GroverSearch} returned $\perp$.  But if $(v,u)$ were still an
eligible edge to an UNLABELED vertex in a distinct root blossom, then
$(v,u)$ would be an actionable grow event and hence a marked item for
that call, a contradiction.

For Item~3, let $(v,u)$ be an eligible OUT--OUT edge at termination.  By
Item~1, both endpoints are completely scanned, so
Lemma~\ref{lem:qpath-dfs-invariants} gives
\[
    b(v)=b(u),
\]
where $b(x)$ is the base of the current search blossom containing $x$.
Therefore $u$ and $v$ lie in the same final root blossom of $\Omega'$.  Hence
no eligible OUT--OUT edge remains between distinct current root blossoms.
\end{proof}

At the end of the search, the search-local blossoms are promoted to
persistent blossoms.  Namely, we set
\[
    \Omega\gets\Omega',
\]
retain the final $\textsc{RootB}$ and $UIO$ arrays, and set
\[
    z(B)\gets0
    \qquad
    \text{for every }B\in\Omega'\setminus\Omega_{\mathrm{old}}.
\]
Each promoted blossom is a valid odd alternating blossom because it was
created by \textsc{ContractSearchBlossom}, whose defining cycle and true base
are exactly the temporary-blossom structure already used in the correctness
proof of Task~1.  Thus the promoted family remains laminar and has the same
blossom representation required by the subsequent dual-adjustment and
dissolution tasks.

\begin{lemma}
\label{le:qblossom-runtime}
Task~3 runs in time
\[
    \widetilde{O}(\sqrt{mn}).
\]
\end{lemma}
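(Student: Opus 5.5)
Task~3 executes Algorithm~\ref{alg:qpathsearch} verbatim, so the plan is to reduce Lemma~\ref{le:qblossom-runtime} to the argument of Lemma~\ref{le:qpath-runtime}. The main work is checking that each ingredient of that proof used only properties of the search itself, not anything specific to Task~1. I will split the cost into three parts: the successful Grover searches, the unsuccessful (final) Grover searches, and the classical bookkeeping. The one-time costs come first: initializing $\Omega'$, $\textsc{RootB}$, $UIO$, $\mathrm{OutTime}$ and $\mathrm{predEdge}$ takes $O(n)$. Promoting $\Omega\gets\Omega'$ and setting $z(B)\gets 0$ for the new blossoms also takes $O(n)$, since by Lemma~\ref{le:cycle-cost} the total size of the new defining cycles is $O(n)$.

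\emph{Successful searches.} I will again let $s_v$ be the number of calls from $v$ that trigger a $\GE$ or a $\BE$. Each $\GE$ labels a previously \textsc{UNLABELED} $H$-blossom, so there are at most $n$ of them. Each $\BE$ reduces the number of current root blossoms of $\Omega'$ by at least two, so there are at most $n$ of these as well. Hence $\sum_v s_v\le 2n$.

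The monotonicity statement, Lemma~\ref{lem:actionable-monotonicity}, relied only on three facts: $M,y,z$ are frozen during the search, contractions only merge blossoms, and the stack/$\mathrm{OutTime}$ invariant holds. All three are true in Task~3, because $M,y,z$ are not modified until Task~4. So the $j$-th successful call from $v$ still sees at least $s_v-j+1$ marked items. Lemma~\ref{le:grover} then bounds the cost from $v$ by $\widetilde{O}(\sqrt{d_v s_v})$, and Cauchy--Schwarz gives $\widetilde{O}(\sqrt{mn})$ overall. The augmenting-path branch is never taken, by Lemma~\ref{le:nomore} and Lemma~\ref{lem:qpath-valid}. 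This only removes work: the \textbf{break}, and a search restarting for a new root, do not occur.

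\emph{Unsuccessful searches and bookkeeping.} Each vertex is pushed onto the stack exactly once, when it becomes \textsc{OUT}, and it is popped after exactly one search returning $\perp$. The unsuccessful calls therefore cost $\sum_v\widetilde{O}(\sqrt{d_v})\le\widetilde{O}(\sqrt{n\cdot 2m})$, again by Cauchy--Schwarz. The classical work consists of three pieces:
\begin{itemize}
\item scanning $V(B_u)$ for matched partners in a $\GE$, which is $O(|V(B_u)|)$ per blossom labeled and $O(n)$ in total;
\item the IN-to-OUT relabelings and defining cycles, which cost $O(n)$ by Lemma~\ref{le:cycle-cost};
\item the $\textsc{RootB}$ relabelings, which cost $O(n\log n)$ by Lemma~\ref{le:union-by-size}.
\end{itemize}
Retaining $\textsc{RootB}$ and $UIO$ at the end instead of discarding them costs nothing extra.

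I expect the main obstacle to be justifying that the amortization argument survives without an explicit augmenting-path exit. Concretely, I need the invariant that a vertex is never pushed twice in a single execution. This follows because labels move monotonically from \textsc{UNLABELED} to IN to \textsc{OUT}, and only transitions into \textsc{OUT} push a vertex. I also need Lemma~\ref{lem:actionable-monotonicity} to apply across the full run rather than within one free-root search, since in Task~3 several trees may coexist. I plan to handle this by noting that calls from $v$ occur only during the free-root search containing $v$, so the lemma's per-search scope suffices. The failure probability is handled as in Lemma~\ref{le:grover}: it is polynomially small per call, and a union bound over the $O(n)$ calls keeps it inverse-polynomial.
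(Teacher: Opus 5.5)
Your proposal is correct and follows the paper's route: Task~3 runs the same ordered search with the augmenting-path branch never taken, so its search and contraction costs are those of Lemma~\ref{le:qpath-runtime}, plus $O(n)$ classical work to promote the new blossoms and set their $z$-values. The paper simply cites Lemma~\ref{le:qpath-runtime}, whereas you re-check its ingredients for Task~3: the $2n$ bound on successful calls, why the per-search scope of Lemma~\ref{lem:actionable-monotonicity} suffices, and the push-once/pop-once accounting for unsuccessful calls. This is a more careful version of the same argument.
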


\begin{proof}
Task~3 executes the same ordered search as
Algorithm~\ref{alg:qpathsearch}; the augmenting-path output branch is simply
never reached.  Its search and blossom-contraction costs are therefore
bounded by Lemma~\ref{le:qpath-runtime}.  Promoting the newly created search
blossoms and initializing their $z$-values costs $O(n)$ additional classical
work.  Hence the total running time is
$\widetilde{O}(\sqrt{mn})$.
\end{proof}

\subsection{Dual adjustment and blossom dissolution}

The final two tasks of \QSearchOne{} are identical to the
corresponding tasks of classical \textsc{SearchOne}. Let $V_{\textsc{OUT}}$ and
$V_{\textsc{IN}}$ denote the sets of vertices reachable from the free vertices by an even or
odd length alternating path in the contracted graph, respectively, and let $\rho_\Omega$ denote the set
of root blossoms. The routine performs one dual adjustment:
\begin{align*}
y(v) &\gets y(v)-1 && \forall v\in V_{\textsc{OUT}},\\
y(v) &\gets y(v)+1 && \forall v\in V_{\textsc{IN}},\\
z(B) &\gets z(B)+2 && \forall B\in \rho_\Omega \text{ with } B\subseteq V_{\textsc{OUT}},\\
z(B) &\gets z(B)-2 && \forall B\in \rho_\Omega \text{ with } B\subseteq V_{\textsc{IN}}.
\end{align*}
After this step, every root blossom $B$ with $z(B)=0$ is dissolved.

\begin{lemma}
\label{le:qsearchone-adjust}
The dual-adjustment and blossom-dissolution of
\QSearchOne{} take $O(n)$ time.
\end{lemma}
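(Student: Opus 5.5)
The plan is a direct accounting of classical work, split into three parts: the vertex updates, the blossom updates, and dissolution followed by the refresh of $\textsc{RootB}$. The inputs are all arrays already available at the end of Task~3 by Lemma~\ref{le:maxblossom}: the final $UIO$ labels, the final $\textsc{RootB}$ map, and the promoted family $\Omega$. Because of this, $V_{\textsc{OUT}}$ and $V_{\textsc{IN}}$ can be read off vertex by vertex, without any further search.

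\textbf{Step 1 (vertex duals).} Scan all $v\in V$ once and update $y(v)$ by $\mp 1$ according to $UIO[v]$. This touches each vertex once and costs $O(n)$ classical work. Since $y$ is a QRAM-accessible array, each write is an $O(1)$ classical update in the model of Section~1.

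\textbf{Step 2 (blossom duals).} Enumerate the root blossoms $\rho_\Omega$ by walking the roots of the forest $\Omega$. For each root $B$, take its label from $UIO$ of any single vertex $x$ with $\textsc{RootB}[x]=B$. This is well defined because all vertices of a root blossom share one label, which is the invariant stated at the start of Section~\ref{sec:proofs}. Then add or subtract $2$ from $z(B)$. The number of root blossoms is at most $n$. The representative vertex can be found either by descending the leftmost path in $\Omega$, or by scanning $V$ once and marking each $\textsc{RootB}[x]$ the first time it appears. Either way the total cost is $O(n)$.

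\textbf{Step 3 (dissolution).} For every root $B$ with $z(B)=0$, delete the node $B$ from $\Omega$. Its children then become roots, and no cascading is needed, since only current roots at this moment are dissolved in this step. Each deletion costs $O(\text{number of children})$, and the total number of children over all nodes is bounded by the number of nodes in $\Omega$. The key fact is that a laminar family of odd sets in which every non-trivial blossom has at least $3$ children has at most $2n$ nodes. Hence the total dissolution cost is $O(n)$. Afterward, one traversal of the trees of $\Omega$ rewrites $\textsc{RootB}[v]$ for every leaf $v$, again in $O(n)$, as noted in the overview.

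The main point requiring care is the counting bound on $\Omega$: a forest whose internal nodes all have at least $3$ children and which has $n$ leaves has fewer than $n$ internal nodes. With that bound, every traversal of $\Omega$ is linear. A secondary point is that no step evaluates $yz(u,v)$ or touches any edge. This is what keeps the bound at $O(n)$ instead of $O(m)$, and it is consistent with the classical analysis of \cite{dps}.
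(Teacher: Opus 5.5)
Your proof is correct, and it reaches the bound more directly than the paper does. The paper's argument is essentially a citation. It says dual adjustment and dissolution are the same steps as in classical \textsc{SearchOne}, invokes the linear-time bound of Duan, Pettie, and Su, and adds only that one traversal of $\Omega$ refreshes $\textsc{RootB}$ after dissolutions. You instead do the accounting yourself:
\begin{itemize}
\item the $y$-updates are read off the retained $UIO$ array;
\item the $z$-updates need one representative label per root blossom;
\item dissolution and the $\textsc{RootB}$ rebuild are linear in the size of the forest, which you bound by $2n$ using the at-least-three-children count.
\end{itemize}
The citation is shorter, but it leaves implicit why an $O(n)$ bound survives in the quantum setting. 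A whole classical \textsc{SearchOne} call is charged $O(m)$ and maintains $G/\Omega$ explicitly, so the classical bound does not obviously give $O(n)$ here. Your version makes explicit that no edge and no $yz$ value is touched, and that only the $n$-sized arrays and $\Omega$ are used. That is exactly what justifies $O(n)$ rather than $O(m)$.

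Two small corrections, neither of which affects the bound.

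First, your claim in Step 3 that no cascading is needed is not right. A blossom created in Task 3 and then absorbed by a later contraction in the same search keeps $z=0$ as a non-root, because only root blossoms receive $+2$ in Task 4. If its parent later reaches $z=0$ and is dissolved, it becomes a root with $z=0$ and must also be dissolved; otherwise a later IN label would drive its $z$-value negative. Your accounting already covers this: every node of $\Omega$ is deleted at most once, at cost proportional to its number of children, so the total stays $O(|\Omega|)=O(n)$.

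Second, the uniform-label invariant you cite is stated for $H$-blossoms. For root blossoms newly promoted from $\Omega'$ it still holds, because every blossom created by \textsc{ContractSearchBlossom} consists entirely of OUT vertices.
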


\begin{proof}
These tasks are identical to the corresponding tasks of classical
\textsc{SearchOne}. The same linear-time bound therefore applies; see
Duan, Pettie, and Su~\cite{dps}.
A single transversal of all the trees in $\Omega$ takes linear time and can be done
to update the entire $\textsc{RootB}$ array as the result of any dissolved blossoms.
\end{proof}

\subsection{Correctness and running time}

\begin{theorem}
\label{th:qsearchone-correct}

\QSearchOne{} correctly performs all of the tasks as the classical \SearchOne{}
as outlined in Figure \ref{fig:s1overview}.
\end{theorem}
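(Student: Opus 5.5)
The proof goes task by task through Figure~\ref{fig:s1overview}. For each task I show that the output of \QSearchOne{} meets the same specification that the classical \SearchOne{} relies on. All probabilistic statements are conditioned on every call to \textsc{GroverSearch} succeeding. By Lemma~\ref{le:grover}, each call can be made to fail with inverse-polynomial probability. There are only polynomially many calls, so a union bound gives correctness with high probability. With this conditioning the procedure is deterministic, and I compare it against the classical specification rather than against the classical execution step by step. The classical correctness and the subsequent analysis of Duan, Pettie, and Su~\cite{dps} use only these specifications: a maximal blossom-disjoint set of eligible augmenting paths, and after augmentation a search forest with maximal blossom contraction.

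For Task~1, I combine Lemma~\ref{lem:qpath-valid} and Lemma~\ref{lem:qpath-maximal}. Together they show that $\Psi$ is a set of pairwise $H$-blossom-disjoint eligible augmenting paths in $H=G_{\mathrm{elig}}/\Omega$ and that no other such path avoids $\Psi$. This is exactly the maximality required. I also note that $\Omega$ is restored at the end of Task~1, because all contractions were made in the copy $\Omega'$. For Task~2, Facts~\ref{fact:aug1} and~\ref{fact:aug2} show that lifting gives vertex-disjoint augmenting paths in $G$. Augmenting along them keeps the blossom invariants, and Lemma~\ref{le:nomore} then says that no eligible augmenting path remains. This is the precondition used later.

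Task~3 is the main step. Its specification is a search forest of alternating paths from the free blossoms, with \OUTLAB{}/\INLAB{} labels given by parity, and with eligible odd cycles contracted until no eligible \OUTLAB{}--\OUTLAB{} edge joins distinct root blossoms. I argue three things. First, Lemma~\ref{le:q-forest} gives the alternating forest. Second, Lemma~\ref{le:maxblossom}, items 2 and 3, says the search is exhausted: no unlabeled vertex is reachable by an eligible edge, and no \OUTLAB{}--\OUTLAB{} edge remains outside a blossom. Third, the labels agree with reachability in the final contracted graph. A blossom receives \OUTLAB{} exactly when it has an even alternating path from a free blossom. Soundness of this follows from the construction. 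Completeness follows from exhaustion together with the fact that matched \INLAB{} blossoms are always extended along their eligible matched edge. I would also check that each promoted blossom is a valid odd alternating blossom with the correct base, and that setting $z(B)=0$ keeps the duals feasible, since new blossoms contribute nothing to $yz$. Then $\Omega\gets\Omega'$ is a laminar family with the right representation, so the sets $V_{\textsc{OUT}}$ and $V_{\textsc{IN}}$ coincide with those the classical search would produce.

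The hardest point is completeness of the labels, because the DFS order might leave some reachable blossom unlabeled. The argument is item~2 of Lemma~\ref{le:maxblossom} combined with induction along any alternating path from a free blossom. That leaves Tasks~4 and~5. They are literally the classical procedures applied to identical labels, blossoms, and duals, and \textsc{RootB} is rebuilt afterwards (Lemma~\ref{le:qsearchone-adjust}), so they agree with \SearchOne{}. This completes the proof of the theorem.
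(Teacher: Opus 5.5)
Your proposal is correct and follows the paper's proof essentially step for step. Task~1 comes from Lemmas~\ref{lem:qpath-valid} and~\ref{lem:qpath-maximal}, Task~2 from the lifting facts and Lemma~\ref{le:nomore}, Task~3 from Lemmas~\ref{le:q-forest} and~\ref{le:maxblossom}, and Tasks~4--5 are the identical classical steps. Your explicit induction showing that the \OUTLAB{}/\INLAB{} labels coincide with even/odd alternating reachability is a useful addition that the paper leaves implicit. It relies on exhaustion plus the fact that an \INLAB{} blossom is always extended along its eligible matched edge. One small slip: conditioning on Grover success does not make the run deterministic, since BBHT may return any marked item, but your specification-based comparison does not need determinism.
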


\begin{proof}
By Lemmas~\ref{lem:qpath-valid} and ~\ref{lem:qpath-maximal}, Task~1 finds a maximal set of eligible vertex-disjoint
augmenting paths in the contracted graph.
After lifting the paths to the original graph and augmenting along the  set of paths, Lemma~\ref{le:nomore} shows that no eligible
augmenting path remains.

 By Lemmas~\ref{le:maxblossom}, after augmenation, every discovered eligible
\textsc{OUT}--\textsc{OUT} edge lies inside a single
alternating tree, so it defines a new eligible odd cycle rather than a new
augmenting path. 

Thus Task~3 has the same kind of blossom-family update required by the blossom-shrinking
task of classical \textsc{SearchOne}. Thus, the blossom structure $\Omega$ is maintained in the same 
way as the classical \textsc{SearchOne}. The $\textsc{RootB}$ array, reflecting the root blossom for each vertex is
updated along with each new blossom created.

Finally, the dual-adjustment and zero-$z$ dissolution tasks are identical to
the classical ones. The $\textsc{RootB}$ array introduced for \QSearchOne{} can be updated in $O(n)$ time.
Therefore the input-output behavior is the same as for
\textsc{SearchOne}.
\end{proof}

\begin{theorem}
\label{th:qsearchone-runtime}
The total running time of \QSearchOne{} is
\[
\widetilde{O}(\sqrt{mn}).
\]
\end{theorem}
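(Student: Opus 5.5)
The plan is to bound the cost of each of the five tasks of Figure~\ref{fig:s1overview} separately and then sum the bounds. Since there are only five tasks, executed sequentially, the total is dominated by the largest term. Before Task~1 there is also initialization: rebuilding $\textsc{RootB}[\cdot]$ from $\Omega$ by one traversal of the blossom forest, and resetting $UIO$, $\mathrm{OutTime}$ and $\mathrm{predEdge}$. These are $O(n)$ classical writes into QRAM-accessible arrays, and each write costs $O(1)$ in the model.

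Task~1 is \textsc{QuantumPathSearch}, which runs in $\widetilde{O}(\sqrt{mn})$ time by Lemma~\ref{le:qpath-runtime}. This already includes the Grover calls, the union-by-size relabeling (Lemma~\ref{le:union-by-size}) and the cycle and label maintenance (Lemma~\ref{le:cycle-cost}). Task~2 lifts and augments the paths in $\Psi$. By Lemma~\ref{lem:qpath-valid} and Facts~\ref{fact:aug1} and~\ref{fact:aug2}, the lifted paths are vertex-disjoint in $G$, so their total length is $O(n)$. Lifting and flipping the matched and unmatched edges in $M$ therefore costs $O(n)$. Task~3 costs $\widetilde{O}(\sqrt{mn})$ by Lemma~\ref{le:qblossom-runtime}; this includes re-copying $\Omega$ into $\Omega'$ and promoting the new blossoms. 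Tasks~4 and~5 cost $O(n)$ by Lemma~\ref{le:qsearchone-adjust}. Updating $y$ in QRAM is $O(n)$ classical writes, and recomputing $\textsc{RootB}$ after dissolutions is a single $O(n)$ traversal.

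Summing gives $\widetilde{O}(\sqrt{mn}) + O(n)$. The one point needing care is absorbing the $O(n)$ terms. We may assume $G$ is connected, or at least has no isolated vertices, since a perfect matching must exist. Then $m \ge n/2$, so $n = O(\sqrt{mn})$ and the linear terms are absorbed.

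The remaining subtlety is error probability. Each Grover call is bounded-error, and there are $O(n + \sum_v s_v) = O(n)$ calls in total. By Lemma~\ref{le:grover}, boosting each call to failure probability $1/\mathrm{poly}(n)$ costs only a logarithmic factor. A union bound then shows that all calls succeed with high probability, so the correctness lemmas, which are conditioned on successful Grover calls, apply, and the runtime bound holds within the $\widetilde{O}$ notation. I expect this bookkeeping to be the only real obstacle. The key requirement is to confirm that the number of calls is polynomially bounded, which follows from the count in the proof of Lemma~\ref{le:qpath-runtime}.
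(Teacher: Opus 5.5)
Your proposal is correct and takes essentially the same route as the paper, which simply sums the bounds of Lemmas~\ref{le:qpath-runtime}, \ref{le:qblossom-runtime}, and~\ref{le:qsearchone-adjust}. You also fill in details the paper leaves implicit: the $O(n)$ cost of Task~2, the per-call reinitialization, the bound $n\le 2m$ (no isolated vertices) needed to absorb the $O(n)$ terms into $\widetilde{O}(\sqrt{mn})$, and the union bound over Grover calls.
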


\begin{proof}
By Lemma~\ref{le:qpath-runtime}, the path-search task costs
$\widetilde{O}(\sqrt{mn})$. By Lemma~\ref{le:qblossom-runtime}, the
blossom-detection also costs $\widetilde{O}(\sqrt{mn})$. The
dual-adjustment and blossom-dissolution tasks take $O(n)$ time by
Lemma~\ref{le:qsearchone-adjust}. Therefore
\[
\widetilde{O}(\sqrt{mn})+\widetilde{O}(\sqrt{mn})+O(n)
=
\widetilde{O}(\sqrt{mn}),
\]
as claimed.
\end{proof}

\section{Our Full Quantum-Classical MWPM Algorithm}
\label{sec:full}

Section~\ref{sec:proofs} established that \QSearchOne{} correctly performs
the tasks of \SearchOne{} shown in Figure~\ref{fig:s1overview}, with running
time \(\widetilde{O}(\sqrt{mn})\). We now show how this routine is incorporated
into the \Liq{} framework and derive the correctness and running time of our
quantum-classical algorithm.

\subsection{The \textsc{QLiquidationist} Algorithm Overview}
\label{sec:liquidationist-overview}

The previous sections defined the blossom structure, the dual variables \(y\)
and \(z\), the quantity \(yz(e)\), and the eligibility condition used during
\FVR{}. We now describe the remaining parts of the classical \Liq{} framework  of Duan, Pettie, and Su~\cite{dps} 
that are needed to place \QSearchOne{} inside the full scaling algorithm.

The \Liq{} algorithm processes the integer
edge weights over \(O(\log (nW))\) scales. At the beginning of each scale we perform
the Initialization and Scaling steps of Duan--Pettie--Su unchanged.

Let $\widehat w$ denote the original integer edge-weight function and define
the extended weights
\[
    \bar w(e)=\left(\frac n2+1\right)\widehat w(e).
\]
The \Liq{} algorithm processes the bits of $\bar w$ over
\[
    L=\left\lceil
        \log\!\left(\left(\frac n2+1\right)W\right)
      \right\rceil
    =O(\log(nW))
\]
scales.  Initially,
\[
    y\gets0,\qquad z\gets0,\qquad w\gets0,\qquad
    \Omega\gets\emptyset .
\]
At the beginning of every scale, the matching is reinitialized and the
blossoms from the preceding scale are treated as old blossoms; the current
weights and dual variables are then scaled as described below.

The search used in the other procedures of the algorithm employs two additional
notions of eligibility.

\vspace{.1in}
\noindent{\bf Criterion 1:} An edge \(e\) is \emph{eligible under Criterion~1} if it is tight:
\[
    yz(e)=w(e).
\]

\vspace{.1in}
\noindent{\bf Criterion 3:} An edge \(e\) is \emph{eligible under Criterion~3} if
\[
    yz(e)=w(e)
    \qquad\text{or}\qquad
    yz(e)=w(e)-2.
\]

Criterion~1 is used during \SBL{}. Criterion~2, defined in Section~2, is used
during \FVR{}. Criterion~3 is used during the final \Finalization{} procedure.

A blossom inherited from the preceding scale is called an \emph{old blossom}.
Given a threshold parameter \(\tau\), an old blossom is \emph{large} if it
contains at least \(\tau\) vertices and \emph{small} otherwise. To
\emph{liquidate} an old blossom \(B\), the algorithm performs
\[
    y(u)\gets y(u)+\frac{z(B)}{2}
    \qquad\text{for every }u\in B,
\]
sets \(z(B)\gets0\), and dissolves \(B\) from the blossom family. Liquidation
therefore transfers the dual value of the inherited blossom to its constituent
vertices.

Each scale of the \Liq{} algorithm consists of the following procedures.

\begin{enumerate}
    \item \textbf{Initialization and scaling.}
    Let $w',y',z'$ and $\Omega'$ denote the weight function, dual variables,
    and blossom family inherited from the preceding scale.  Initialize
    \[
        M\gets\emptyset,\qquad
        \Omega'\gets\Omega,\qquad
        \Omega\gets\emptyset,
    \]
    and retain copies
    \[
        w'\gets w,\qquad y'\gets y,\qquad z'\gets z.
    \]
    If $b_i(e)\in\{0,1\}$ is the $i$th bit of $\bar w(e)$, perform the
    scaling updates
    \[
        w(e)\gets2\bigl(w'(e)+b_i(e)\bigr)
            \qquad\text{for every }e\in E,
    \]
    \[
        y(u)\gets2y'(u)+3
            \qquad\text{for every }u\in V,
    \]
    and
    \[
        z(B')\gets2z'(B')
            \qquad\text{for every }B'\in\Omega'.
    \]
    Thus the blossoms inherited from the preceding scale are precisely the
    old blossoms processed in the subsequent liquidation steps.

    \item \textbf{Large-blossom liquidation.}
    Every large old blossom is liquidated.

    \item \textbf{Reweighting.}
    Each edge is reweighted according to
    \[
        w(u,v)\gets w(u,v)-y(u)-y(v),
    \]
    after which all vertex dual variables are reset to zero.

    \item \textbf{\SBL{}.}
    Every remaining old small blossom is liquidated. For each maximal old
    small blossom \(B\), the algorithm then repeatedly runs the classical
    Edmonds-search routine \textsc{PQSearch} from selected free vertices of
    \(B\). These searches use Criterion~1.

    \item \textbf{\FVR{}.}
    Once all inherited blossoms have been removed, the algorithm switches to
    Property~2. The \FVR{} procedure then performs exactly \(\tau\) successive
    calls to \SearchOne{} under Criterion~2. In our algorithm, these calls are
    implemented by the \QSearchOne{} routine developed in
    Sections~\ref{sec:result} and~\ref{sec:proofs}.

    \item \textbf{Perfection.}
    The free dummy vertices inherited from the preceding scale are deleted.
    For every remaining free vertex \(u\), the algorithm creates a new dummy
    vertex \(\widehat{u}\), sets
    \[
        y(\widehat{u})\gets\tau,
        \qquad
        w(u,\widehat{u})\gets0,
    \]
    and inserts \((u,\widehat{u})\) into the matching. Consequently, the
    augmented graph again has a perfect matching at the end of the scale.
\end{enumerate}

After the final scale, all dummy vertices are deleted. The algorithm then
repeatedly performs classical Edmonds searches under Criterion~3 until no free
vertices remain. This last stage is the \Finalization{} procedure.

Our algorithm retains this outer \Liq{} structure. The inherited-blossom
processing, \SBL{}, Perfection, and \Finalization{} remain classical and are 
performed exactly as described by  Duan, Pettie, and Su in~\cite{dps}, while the
complete \FVR{} procedure is realized through our quantum-compatible
\QSearchOne{}.

\begin{theorem}
\label{th:qliquidationist-correct}
Conditioned on every quantum search call returning a correct result,
\textsc{QLiquidationist} returns a maximum-weight perfect matching.
\end{theorem}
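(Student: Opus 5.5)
The proof will be a simulation (reduction) argument. Conditioned on every \textsc{GroverSearch} call being correct, I will show that each scale of \textsc{QLiquidationist} produces the same matching $M$, dual variables $(y,z)$, and blossom family $\Omega$ as a legitimate execution of the classical \Liq{} algorithm of Duan, Pettie, and Su~\cite{dps}. Correctness then follows from their theorem that \Liq{} returns a maximum-weight perfect matching. The paper's procedures split into two groups: Initialization and scaling, large-blossom liquidation, reweighting, \SBL{}, Perfection, and \Finalization{} are executed verbatim. So the only place the two executions could differ is \FVR{}, which is realized as $\tau$ calls to \QSearchOne{}.

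First, I will state precisely which properties of \SearchOne{} the analysis in~\cite{dps} relies on. These are the invariants of Property~2 (feasibility and domination of the duals under Criterion~2, laminarity and the alternating-cycle structure of blossoms, and nonnegativity of $z$). They also include the progress facts used in the running-time and correctness arguments: each call augments along a \emph{maximal} set of blossom-disjoint eligible augmenting paths, then shrinks all eligible blossoms reachable from free vertices, then makes one unit dual adjustment, and finally dissolves zero-valued root blossoms. I will argue that the proof in~\cite{dps} uses only these specified outcomes (Figure~\ref{fig:s1overview}) and not the particular maximal path set or blossom set chosen. This is important because the quantum DFS may pick a different maximal family than a classical run would, so I must treat any output meeting the specification as a legitimate classical execution.

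Second, I will invoke Theorem~\ref{th:qsearchone-correct}, together with Lemmas~\ref{lem:qpath-valid}, \ref{lem:qpath-maximal}, \ref{le:nomore}, and \ref{le:maxblossom}, to show that each \QSearchOne{} call meets this specification.
\begin{itemize}
\item Task~1 returns a maximal set of blossom-disjoint eligible augmenting paths, whose lifts are vertex-disjoint and preserve the blossom invariants by Facts~\ref{fact:aug1} and~\ref{fact:aug2}.
\item Task~3 leaves no eligible OUT--UNLABELED or cross-blossom OUT--OUT edge, and each promoted blossom is a valid odd alternating blossom with $z=0$.
\item Tasks~4 and~5 are identical to the classical ones.
\end{itemize}
I will also check that the auxiliary array $\textsc{RootB}$ is rebuilt or updated consistently, so that every eligibility test in \textsc{GroverSearch} computes exactly Criterion~2 restricted to edges between distinct root blossoms. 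Induction over the $\tau$ calls and then over the $O(\log(nW))$ scales then shows that the state after each scale is a state reachable by \Liq{}, and the final \Finalization{} step yields a maximum-weight perfect matching.

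The main obstacle is the first step: confirming that the guarantees of~\cite{dps} are insensitive to which maximal path family and which DFS-induced blossoms are chosen. In particular, Task~3 here reruns an ordered DFS rather than the classical shrinking routine. If needed, I would rederive dual feasibility after the adjustment directly. Lemma~\ref{le:maxblossom} ensures that no eligible edge joins an OUT vertex to an UNLABELED vertex or to another OUT root blossom, so decreasing OUT duals and increasing IN duals by one keeps every edge feasible under Criterion~2. That is exactly the case analysis of~\cite{dps}.
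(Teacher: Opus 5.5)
Your proposal is correct and follows the same route as the paper. Everything outside \FVR{} runs verbatim, Theorem~\ref{th:qsearchone-correct} certifies that each \QSearchOne{} call meets the \SearchOne{} specification, and correctness is inherited from the analysis of Duan, Pettie, and Su~\cite{dps}; your remark that this analysis must accept any maximal path family and any DFS-generated blossom structure spells out what the paper leaves implicit.

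One caveat concerns your optional fallback. For a cross-blossom OUT--OUT edge, Lemma~\ref{le:maxblossom} plus near domination only gives $yz(e)\ge w(e)-1$. Rederiving feasibility after the $-2$ adjustment would therefore need a further invariant from~\cite{dps}, e.g.\ a parity condition on $yz(e)-w(e)$, not Lemma~\ref{le:maxblossom} alone.
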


\begin{proof}
    
Figure~\ref{fig:s1overview} lists the five tasks of
\QSearchOne{}/\SearchOne{}: finding augmenting paths, augmenting along them,
detecting and contracting new blossoms, adjusting the dual variables, and
dissolving zero-valued blossoms. By Theorem~\ref{th:qsearchone-correct},
\QSearchOne{} performs the same tasks as classical \SearchOne{} and produces
the same type of output required by the \Liq{} algorithm: the matching,
blossom family, dual variables, and dissolved blossoms are updated exactly as
required by the corresponding \SearchOne{} call. 

The classical \Liq{} algorithm returns a maximum-weight perfect matching.
More precisely, every scale preserves the required scaled complementary-slackness
invariants, and the final clean-up phase returns an optimal perfect matching. Therefore, by the same analysis as in Duan, Pettie, and Su~\cite{dps}, \textsc{QLiquidationist} returns a maximum-weight perfect matching.
\end{proof}

\subsection{Running-time contributions}
\label{sec:full-runtime}

We next bound the work outside a single call to \QSearchOne{}.
The analyses of \SBL{} and \Finalization{} use the following running-time
bound for the classical Edmonds-search routine.

\begin{lemma}[\textsc{PQSearch} running time, imported from~\cite{dps}]
\label{le:dps-pq}
Let \(H\) be the graph on which one invocation of \textsc{PQSearch} is
performed, and let \(n(H)\) and \(m(H)\) denote its numbers of vertices and
edges. Then the invocation takes
\[
    \widetilde{O}\bigl(m(H)+n(H)\bigr)
\]
time, including the priority-queue operations, blossom contractions, and
maintenance of the classical search data structures.
\end{lemma}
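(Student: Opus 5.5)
This lemma is imported from Duan, Pettie, and Su, so the plan is to reconstruct the standard accounting for an Edmonds search driven by a priority queue, rather than to invent anything new. \textsc{PQSearch} is a single Edmonds search over $H$. It grows alternating trees from free vertices, contracts blossoms, performs dual adjustments, and dissolves blossoms, and it advances time in jumps by extracting the next event from a priority queue keyed by slack. I would charge every unit of work to one of four sources: scanning an edge, an operation on a vertex or blossom label, a blossom contraction or expansion, or a priority-queue operation. The aim is to show that each source contributes $\widetilde{O}(m(H)+n(H))$ in total.

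The key steps, in order, are as follows.

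\textbf{Step 1 (edge scans).} When a vertex first becomes OUT, either directly or by being absorbed into an OUT blossom, its incident edges are scanned exactly once. Each scan inserts or decreases a key in a priority queue indexed by the earliest dual-adjustment time at which that edge becomes tight. This gives $O(m(H))$ scans and $O(m(H))$ queue operations.

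\textbf{Step 2 (dual offsets).} Dual values are never updated explicitly. Instead, every vertex and blossom stores its value together with a global-time offset and its current label. A dual adjustment therefore costs $O(1)$, and each key remains valid until the vertex's label changes. Since labels change only $O(1)$ times per vertex per search, rekeying costs $O(n(H))$ queue operations.

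\textbf{Step 3 (blossoms).} Contractions are handled with union--find, or with the split--findmin structure of Gabow used in \cite{dps}. This keeps the total cost of merging the per-blossom minimum-slack structures to $\widetilde{O}(m(H)+n(H))$. The number of blossoms formed or dissolved in one search is $O(n(H))$, because the blossoms form a laminar family.

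\textbf{Step 4 (queue cost).} Each priority-queue operation costs $O(\log n)$ with a binary heap. Alternatively, since slacks are small integers, a bucket queue gives $O(1)$ per operation. Either way the queue contributes only a polylogarithmic factor, and summing Steps 1--4 gives the claimed bound.

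The main obstacle is expanding IN blossoms whose $z$ value reaches zero. When such a blossom expands, the OUT/IN status of its internal sub-blossoms must be reassigned, and the relevant minimum-slack structure must be split. A naive implementation rescans internal edges on every expansion, which could cost $\Theta(m(H))$ per expansion and break the bound. The fix I would use is Gabow's split--findmin (or a mergeable-heap) structure over each blossom's vertices. With it, a split costs amortized $\widetilde{O}(1)$ per affected vertex, and an edge is rescanned only when one of its endpoints newly becomes OUT, which happens $O(1)$ times per search. Because this is exactly the machinery \cite{dps} invokes, the cleanest writeup cites their analysis directly for this step and supplies the simple accounting of Steps 1--4 around it.
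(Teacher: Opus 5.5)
The paper gives no proof of this lemma; it imports the bound as a black box from Duan, Pettie, and Su. Your plan to cite their implementation, with Gabow's split--findmin, for the blossom-expansion machinery is therefore consistent with the paper.

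The accounting you put around that citation, however, relies on a false claim in Step~2: labels do \emph{not} change only $O(1)$ times per vertex per search. Suppose $v\in B_1\supset B_2\supset\cdots\supset B_k$. Then $B_1$ can become IN and be expanded when $z(B_1)$ reaches $0$, with $B_2$ lying off the even path of $B_1$'s cycle, so $B_2$ becomes UNLABELED. A later grow step can make $B_2$ IN again, after which it can be expanded in the same way, and so on. Thus $v$ can toggle between IN and UNLABELED once per nesting level. For a chain of nested blossoms, the total number of vertex label changes in one search is quadratic in $n(H)$. Rekeying once per vertex label change therefore does not give $O(n(H))$ queue operations.

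The invariant that does hold is at the granularity of root blossoms. Within one search, OUT vertices stay OUT, and every blossom created during the search is OUT. So only blossoms that existed when the search began are ever expanded, each at most once. Hence only $O(n(H))$ root blossoms appear during the search. This fact, not laminarity alone, is what bounds the number of blossoms formed and dissolved. Each root blossom changes label $O(1)$ times while it is a root: it goes from UNLABELED to IN or to OUT, and an IN root is then either expanded or absorbed into a new OUT blossom.

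To exploit this, let each non-OUT vertex store its minimum slack to an OUT neighbour relative to an offset kept at its root blossom. Maintain the per-root-blossom minima with split--findmin over a vertex order in which every blossom is an interval. Then a root-blossom label change costs one rekey in the global heap, and an expansion costs at most one split per child. This gives the $O(n(H))$ heap operations plus $O(m(H))$ decrease-keys that you need.

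Separately, the bucket-queue alternative in Step~4 needs a bound on the range of keys, which you have not established. Since a binary heap already gives the $\widetilde{O}$ bound, nothing depends on it.
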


For a \textsc{PQSearch} call restricted to a blossom \(B\), we use this bound
with \(H=G[B]\). For \Finalization{}, we use it with \(H=G\). In the dense
regime of the main theorem, \(m\ge n^{3/2}\), so
\(\widetilde{O}(m+n)=\widetilde{O}(m)\) for \Finalization{}.

\paragraph{Cost of \SBL{}.}

The alternative analysis of \SBL{}, which we present here, enables us to
optimize \(\tau\) and achieve a quantum advantage. Under the original analysis,
the cost is \(\widetilde{O}(\tau m\log W)\).

\begin{lemma}
\label{le:detail}
Across all scales, the total time spent in \SBL{} is
\[
    \widetilde{O}(n\tau^2\log W).
\]
\end{lemma}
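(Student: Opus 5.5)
The plan is to bound the cost of \SBL{} within a single scale by $\widetilde{O}(n\tau^2)$ and then multiply by the $O(\log(nW))$ scales, absorbing the $\log n$ into $\widetilde{O}$. Within a scale, \SBL{} processes the maximal old small blossoms. These are vertex-disjoint, because the old blossom family $\Omega'$ inherited from the previous scale is laminar and the maximal members of a laminar family are disjoint. I will invoke the structural fact from Duan, Pettie, and Su~\cite{dps} already quoted in the introduction: liquidating a maximal small blossom $B$ involves only edges internal to $B$. Moreover, it consists of at most $O(|V(B)|)$ invocations of \textsc{PQSearch}, at most one per free vertex of $B$ created by the liquidation, each restricted to $H=G[B]$.

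For a single maximal small blossom $B$ with $n_B=|V(B)|<\tau$ vertices, Lemma~\ref{le:dps-pq} gives a cost of $\widetilde{O}(m(G[B])+n_B)$ per \textsc{PQSearch} call. Since the graph is simple, $m(G[B])\le\binom{n_B}{2}\le n_B^2$. The total cost for $B$ is therefore $\widetilde{O}(n_B\cdot n_B^2)=\widetilde{O}(n_B^3)$. Using $n_B<\tau$, this is at most $\widetilde{O}(\tau^2 n_B)$. Summing over the disjoint maximal small blossoms gives $\sum_B n_B\le n$, hence a per-scale cost of $\widetilde{O}(n\tau^2)$. To this I add the $O(n)$ classical cost of liquidating all small blossoms, which adds $z(B)/2$ to the duals of the vertices and is one traversal of the forest. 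If the dummy vertices from Perfection are counted, I will note that there are $O(n)$ of them, so $n$ still bounds the total.

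The main obstacle is justifying the number of \textsc{PQSearch} calls per blossom and the claim that they see only $G[B]$. Both come from~\cite{dps}, and the paper's own summary states them as \emph{vertices times internal edges}. I will cite that per-blossom bound, $\widetilde{O}(n_B\cdot m(G[B]))$, directly rather than re-derive it. If it needs justification, I will argue that each search from a free vertex of $B$ either augments, reducing the number of free vertices in $B$, or terminates the processing of that vertex, so there are at most $n_B$ searches. After that step, the remaining argument is just the inequality $m(G[B])\le\tau^2$ together with disjointness, which yields the stated $\widetilde{O}(n\tau^2\log W)$ across all scales.
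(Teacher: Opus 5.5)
Your proposal is correct and follows the paper's proof essentially step for step. Both arguments use $O(n(B))$ \textsc{PQSearch} calls per maximal old small blossom, a count imported from Duan--Pettie--Su. Each call costs $\widetilde{O}(m(B)+n(B))=\widetilde{O}(\tau^2)$ because the graph is simple and $n(B)<\tau$. The cost is then summed over the vertex-disjoint maximal small blossoms and multiplied by the number of scales.

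The one caveat concerns your fallback justification of the call count. The imported argument is not that a failed search retires a vertex. It is that each non-augmenting call strictly enlarges the set of free vertices of $B$ with maximum $y$-value, and this, together with the $O(n(B))$ augmentations, is what bounds the number of calls.
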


\begin{proof}
Fix one scale and let \(B\) be a maximal old small blossom. Let \(n(B)\) and
\(m(B)\) denote the number of vertices and edges, respectively, in the
subgraph induced by \(B\). Since \(B\) is small,
\[
    n(B)<\tau.
\]
Because the input graph is simple,
\[
    m(B)
    \le
    \binom{n(B)}{2}
    =
    O(\tau^2).
\]

By Lemma~\ref{le:dps-detail}, \SBL{} makes \(O(n(B))\) calls to
\textsc{PQSearch} restricted to \(B\). Each call costs
\[
    \widetilde{O}\bigl(m(B)+n(B)\bigr)
\]
by Lemma~\ref{le:dps-pq}. Since \(m(B)=O(\tau^2)\) and \(n(B)<\tau\), this is
\(\widetilde{O}(\tau^2)\). Hence the total cost associated with \(B\) is
\[
    \widetilde{O}\bigl(n(B)\tau^2\bigr).
\]

The maximal old small blossoms considered during one scale are pairwise
vertex-disjoint. Therefore,
\[
    \sum_B n(B)\le n,
\]
and the total \SBL{} cost in one scale is
\[
    \widetilde{O}(n\tau^2).
\]
There are \(\widetilde{O}(\log W)\) scales, so the total cost over the full algorithm is
\[
    \widetilde{O}(n\tau^2\log W).
\]
\end{proof}

\paragraph{Cost of \FVR{}.}

\begin{lemma}
\label{le:fvr-cost}
Across all scales, the total time spent in \FVR{} is
\[
    \widetilde{O}(\tau\sqrt{mn}\log W).
\]
\end{lemma}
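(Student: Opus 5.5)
The bound follows by multiplying the per-call cost of \QSearchOne{} by the number of calls per scale and the number of scales, and then adding the overheads that \FVR{} incurs outside the \QSearchOne{} calls.

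\textbf{Cost within one scale.}
\begin{itemize}
\item By the description of the \Liq{} framework, each scale runs \FVR{} exactly once, and \FVR{} consists of exactly $\tau$ successive calls to \QSearchOne{} under Criterion~2.
\item By Theorem~\ref{th:qsearchone-runtime}, each call costs $\widetilde{O}(\sqrt{mn})$. This already includes the $O(n)$ rebuild of $\textsc{RootB}[\cdot]$ at the start of each iteration, the classical bookkeeping of Lemmas~\ref{le:union-by-size} and~\ref{le:cycle-cost}, and the dual adjustment and dissolution of Lemma~\ref{le:qsearchone-adjust}.
\item Task~2 (augmentation) is not itemized in that theorem, so I would add it explicitly. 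By Lemma~\ref{lem:qpath-valid} and Fact~\ref{fact:aug1}, lifting and augmenting cost time linear in the total lifted length. Fact~\ref{fact:aug2} makes the lifted paths vertex-disjoint, so this length is $O(n)$.
\item Since $n\le\sqrt{mn}$ whenever $m\ge n$ (and in general $O(n)$ is additive and harmless), one iteration costs $\widetilde{O}(\sqrt{mn})$. The $\tau$ iterations of one scale therefore cost $\widetilde{O}(\tau\sqrt{mn})$.
\end{itemize}

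\textbf{Per-scale overhead outside the $\tau$ calls.} At the start of \FVR{}, the QRAM-accessible adjacency structure $adj(v,i)$ and the array $y$ must be reloaded with the new weights. This costs $O(m+n)$ classical time, with polylogarithmic QRAM overhead.

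This is the step I expect to be the main obstacle, because $m$ need not be dominated by $\tau\sqrt{mn}$ in general. I would handle it as follows:
\begin{itemize}
\item First, check directly: $\tau\sqrt{mn}\ge m$ holds iff $\tau\ge\sqrt{m/n}$.
\item With the eventual choice $\tau=m^{1/3}$, this condition is $m^{1/3}\ge m^{1/2}n^{-1/2}$, i.e.\ $m^{1/6}\le n^{1/2}$, i.e.\ $m\le n^3$. This always holds for simple graphs, since $m\le n^2$.
\item If the lemma is meant for arbitrary $\tau$, I would instead state the $O(m\log W)$ loading term separately and note that it is absorbed by the $\widetilde{O}(mn/\tau\cdot\log W)$ cost of \Finalization{}, because $\tau\le n$. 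Either way the total claimed for the algorithm is unaffected.
\end{itemize}

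\textbf{Summing over scales.} There are $L=O(\log(nW))$ scales, which is $\widetilde{O}(\log W)$ since the $\log n$ factor is absorbed into the tilde. Multiplying by the per-scale cost gives $\widetilde{O}(\tau\sqrt{mn}\log W)$.

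\textbf{Failure probability.} I would also account for the Grover error. By Lemma~\ref{le:grover}, repetition drives each call's failure probability to any inverse polynomial at only logarithmic extra cost. The total number of Grover calls is polynomial, namely $O(\tau(n+\sum_v s_v)\log(nW))$. A union bound then shows that all calls succeed with high probability, which justifies conditioning on their correctness as in Theorem~\ref{th:qliquidationist-correct}.
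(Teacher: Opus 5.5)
Your proof is correct and follows the same route as the paper: $\tau$ calls to \QSearchOne{} per scale, each costing $\widetilde{O}(\sqrt{mn})$ by Theorem~\ref{th:qsearchone-runtime}, multiplied by the $O(\log(nW))=\widetilde{O}(\log W)$ scales. Your extra bookkeeping is sound but goes beyond the paper's proof of this lemma: the paper charges the per-scale QRAM reload to the separate $\widetilde{O}(m\log W)$ term of Lemma~\ref{le:total-cost} (your second option), leaves the $O(n)$ augmentation cost implicit, and does the union bound over Grover failures in Theorem~\ref{th:main}.
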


\begin{proof}
Each scale contains exactly \(\tau\) calls to \QSearchOne{}. By
Theorem~\ref{th:qsearchone-runtime}, each call takes
\(\widetilde{O}(\sqrt{mn})\) time. Thus the cost in one scale is
\[
    \widetilde{O}(\tau\sqrt{mn}),
\]
and multiplying by the \(\widetilde{O}(\log W)\) scales gives the claimed bound.
\end{proof}

\paragraph{Cost of \Finalization{}.}

\begin{lemma}
\label{le:clean}
The total time spent in \Finalization{} is
\[
    \widetilde{O}\!\left(\frac{nm}{\tau}\log W\right).
\]
\end{lemma}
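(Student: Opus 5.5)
The plan is to bound two quantities separately and multiply them: the number of free vertices that reach \Finalization{}, and the cost of each Edmonds search inside \Finalization{}. Since \Finalization{} is unchanged from the classical \Liq{} algorithm, both bounds can be imported from Duan, Pettie, and Su~\cite{dps}. The only new ingredient is the accounting of the $\log W$ factor: free vertices (equivalently, dummy vertices) accumulate over all $O(\log(nW))$ scales, as noted in the introduction.

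The first step bounds the number of free vertices left at the end of \FVR{} in a single scale. I would use the DPS dual-objective argument. At the start of \FVR{}, after the inherited blossoms have been liquidated and the weights reweighted, the gap between the current dual objective and the optimum is $O(n)$. This follows from the scaling invariants: each vertex dual is increased by $3$, and the extended weights $\bar w$ are used. Each call to \SearchOne{} lowers $y(v)$ by one for every free (OUT) vertex that is not matched, since every free vertex is the base of an OUT root blossom. After $\tau$ iterations, every remaining free vertex has therefore had its dual reduced by $\tau$. Summing over these vertices and comparing with the $O(n)$ bound on the objective gap shows that at most $O(n/\tau)$ free vertices remain. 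This is the DPS lemma, and by Theorem~\ref{th:qsearchone-correct} it applies verbatim to \QSearchOne{}, because the matching, duals and blossoms evolve exactly as in the classical algorithm. The Perfection step then creates $O(n/\tau)$ dummy vertices per scale, so across all scales at most $O\bigl(\frac{n}{\tau}\log(nW)\bigr)$ vertices become free when the dummies are deleted before \Finalization{}.

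The second step bounds the cost of \Finalization{}. Each Edmonds search under Criterion~3 costs $\widetilde{O}(m+n)=\widetilde{O}(m)$ by Lemma~\ref{le:dps-pq} with $H=G$. Each search either augments, removing two free vertices, or performs dual adjustments. The DPS analysis charges \Finalization{} a total of $O(\#\text{free vertices})$ searches. One way to see this is that each search eliminates at least one free vertex. Alternatively, one can group the free vertices in batches: a maximal set of augmenting paths is found in $\widetilde{O}(m)$ time per dual-adjustment phase, and the number of phases is bounded by the number of free vertices. Multiplying the two bounds gives $\widetilde{O}(m)\cdot O\bigl(\frac n\tau\log(nW)\bigr)=\widetilde{O}\bigl(\frac{nm}{\tau}\log W\bigr)$, where $\log n$ is absorbed into $\widetilde{O}$.

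The main obstacle is the first step. It requires confirming that the DPS free-vertex bound per scale holds with the dummy-vertex bookkeeping: a dummy vertex $\widehat u$ gets $y(\widehat u)=\tau$, and that dual must not be double-counted in the next scale's objective gap. I would first try to cite the DPS lemma directly, justified by the equivalence established in Theorem~\ref{th:qsearchone-correct}. Only if that citation needs adaptation would I reprove the objective-gap inequality, namely that $\tau$ times the number of free vertices is at most $O(n)$.
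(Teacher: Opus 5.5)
Your proposal is correct and follows essentially the same route as the paper. Both arguments import the DPS per-scale $O(n/\tau)$ free-vertex bound, justified for \QSearchOne{} through Theorem~\ref{th:qsearchone-correct}. Both then sum over the $O(\log(nW))$ scales to bound the vertices freed when the dummy vertices are deleted, and multiply by the $\widetilde{O}(m)$ cost of each \textsc{PQSearch} call, each of which reduces the number of free vertices. Your sketch of the dual-objective argument is extra detail that the paper simply cites from DPS.
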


\begin{proof}
By Lemma~\ref{le:dps-free}, after the \FVR{} procedure of one scale, at most
\(O(n/\tau)\) free vertices remain. The proof of this classical bound depends
only on the maintained properties of the \Liq{} framework and on the
postcondition of \SearchOne{}. Since Theorem~\ref{th:qsearchone-correct} shows
that \QSearchOne{} performs the same tasks as \SearchOne{}, the bound
continues to hold for \textsc{QLiquidationist}.

After all scales have been completed and the dummy vertices have been deleted,
the total number of free vertices that may need to be processed by
\Finalization{} is therefore
\[
    \widetilde{O}\!\left(\frac{n}{\tau}\log W\right).
\]
Each successful \textsc{PQSearch} call decreases the number of free vertices.
By Lemma~\ref{le:dps-pq}, a \textsc{PQSearch} call during \Finalization{} costs
\[
    \widetilde{O}(m+n).
\]
In the dense regime considered in this paper, \(m\ge n^{3/2}\), so this is
\(\widetilde{O}(m)\). There may also be one final unsuccessful call certifying
that no further augmentation is possible; its \(\widetilde{O}(m)\) cost is
absorbed by the displayed bound. It follows that the total cost of
\Finalization{} is
\[
    \widetilde{O}\!\left(\frac{nm}{\tau}\log W\right).
\]
\end{proof}

\paragraph{Remaining classical and memory-update costs.}

At the beginning of each scale, the scaled edge weights and the corresponding
QRAM-readable adjacency data are updated. The large-blossom liquidation,
reweighting, Perfection, and other classical bookkeeping steps require at most
\(\widetilde{O}(m+n)\) work per scale. Consequently, their total contribution
is
\[
    \widetilde{O}\bigl((m+n)\log W\bigr).
\]
In the dense regime considered in this paper, \(m\ge n^{3/2}\), and this
simplifies to
\[
    \widetilde{O}(m\log W).
\]
This term includes the initial storage of the graph, all per-scale weight
updates, and the sequential updates to arrays that are later read coherently.

Combining the preceding bounds gives the following running time for an
arbitrary choice of \(\tau\).

\begin{lemma}
\label{le:total-cost}
The total running time of \textsc{QLiquidationist} is
\[
    \widetilde{O}\!\left(
        \left(
            n\tau^2
            +
            \tau\sqrt{mn}
            +
            \frac{nm}{\tau}
            +
            m
        \right)\log W
    \right).
\]
\end{lemma}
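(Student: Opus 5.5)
The proof sums the contributions already bounded in this section, one for each procedure of a scale in the \textsc{QLiquidationist} outline. First, the \SBL{} cost over all scales is $\widetilde{O}(n\tau^2\log W)$ by Lemma~\ref{le:detail}. Second, the \FVR{} cost is $\widetilde{O}(\tau\sqrt{mn}\log W)$ by Lemma~\ref{le:fvr-cost}. That lemma multiplies the per-call bound of Theorem~\ref{th:qsearchone-runtime} by the $\tau$ calls per scale and by the $O(\log(nW))$ scales. Third, \Finalization{} costs $\widetilde{O}(\frac{nm}{\tau}\log W)$ by Lemma~\ref{le:clean}. Fourth, the remaining classical work is bounded in the paragraph on remaining classical and memory-update costs. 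This covers initialization and scaling, large-blossom liquidation, reweighting, Perfection, and the rewriting of the QRAM-readable adjacency data and dual arrays at the start of each scale. Together these cost $\widetilde{O}((m+n)\log W)$, which is the $m\log W$ term since $n\le m+O(1)$ can be absorbed; for a connected input, or after discarding isolated vertices, $n=O(m)$.

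Two bookkeeping points need care. The first is the number of scales. It is $L=O(\log(nW))=O(\log n+\log W)$, so every per-scale bound multiplied by $L$ is within a polylogarithmic factor of the bound multiplied by $\log W$ (assuming $W\ge 2$). That factor is hidden in $\widetilde{O}$, consistent with how the previous lemmas state their totals. The second is that the Grover calls are bounded-error. Each call in Lemma~\ref{le:grover} fails with inverse-polynomial probability after logarithmic repetition. The total number of calls is polynomial: $O(n+\sum_v s_v)$ per \QSearchOne{}, times $\tau\cdot L$. A union bound therefore makes all calls correct with high probability, at only a polylogarithmic multiplicative overhead already absorbed in $\widetilde{O}$. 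This is what lets Theorem~\ref{th:qliquidationist-correct} and the running time hold simultaneously.

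Finally, I will check that no step of a scale is left out of the four contributions. The data-structure work inside \QSearchOne{} is already inside Theorem~\ref{th:qsearchone-runtime}: rebuilding \textsc{RootB}, union-by-size relabeling, and dual adjustment and dissolution. The deletion of dummy vertices after the last scale is linear. Adding the four terms gives the claimed bound. The main obstacle is not any calculation but this completeness check. In particular, the per-scale reload of weights into QRAM must cost $O(m)$ classical writes, and not more, so that it is dominated by the $m\log W$ term. I would verify this directly from the description of $adj(v,i)$ in Figure~\ref{fig:data}: each entry is rewritten once per scale in $O(1)$ time.
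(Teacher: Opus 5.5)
Your proposal is correct and matches the paper's proof, which simply adds the bounds of Lemmas~\ref{le:detail}, \ref{le:fvr-cost}, and \ref{le:clean} to the per-scale bookkeeping cost $\widetilde{O}((m+n)\log W)$. Your extra remarks on the $O(\log(nW))$ scale count and the union bound over Grover calls agree with how the paper treats these points, the latter in the proof of Theorem~\ref{th:main}. One small imprecision: $n\le m+O(1)$ holds only for connected inputs. The $n\log W$ term is absorbed anyway, either via $n\le 2m$ (a graph with a perfect matching has no isolated vertices) or via $n\le n\tau^2$, whereas the paper appeals to the dense regime $m\ge n^{3/2}$.
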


\begin{proof}
The four terms are, respectively, the total costs of \SBL{}, \FVR{},
\Finalization{}, and the remaining classical and memory-update operations.
They are bounded by Lemmas~\ref{le:detail}, \ref{le:fvr-cost}, and
\ref{le:clean}, together with the preceding bookkeeping analysis.
\end{proof}

\subsection{Proof of the main theorem}
\label{sec:main-theorem-proof}

\begin{theorem}
\label{th:main}
Let \(G=(V,E)\) be a simple graph that admits a perfect matching, where
\[
    n=|V|,
    \qquad
    m=|E|,
\]
and the integer edge weight magnitudes are at most \(W\). For every constant \(c>0\),
\textsc{QLiquidationist} can be implemented so that, with probability at least
\(1-n^{-c}\), it returns a maximum-weight perfect matching in time
\[
    \widetilde{O}(nm^{2/3}\log W)
\]
in the dense regime
\[
    n^{3/2}\le m\le n^2.
\]
\end{theorem}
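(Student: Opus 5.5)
The plan is to combine correctness (Theorem~\ref{th:qliquidationist-correct}), the cost bound of Lemma~\ref{le:total-cost}, and a union bound over all quantum search calls to control the failure probability.

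\textbf{Choice of $\tau$.} Set $\tau=\lceil m^{1/3}\rceil$ and evaluate the four terms of Lemma~\ref{le:total-cost}.
\begin{itemize}
\item \SBL{}: $n\tau^2=nm^{2/3}$.
\item \Finalization{}: $nm/\tau=nm^{2/3}$.
\item \FVR{}: $\tau\sqrt{mn}=m^{5/6}n^{1/2}$. Since $m\le n^2$ we have $m^{1/6}\le n^{1/3}\le n^{1/2}$, so this is at most $nm^{2/3}$.
\item Bookkeeping: $m\le nm^{2/3}$, because $m^{1/3}\le n$.
\end{itemize}
Hence the total running time is $\widetilde{O}(nm^{2/3}\log W)$. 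The lower bound $m\ge n^{3/2}$ is needed only where the earlier lemmas use it, namely to absorb $n$ into $m$ in Lemma~\ref{le:clean} and in the bookkeeping bound. I should also check that the $O(\log(nW))$ scale count is $\widetilde{O}(\log W)$, since $\log(nW)\le \log n+\log W$.

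\textbf{Failure probability.} Let $N$ be the total number of \textsc{GroverSearch} calls. By the per-call accounting in Lemma~\ref{le:qpath-runtime}, each search from $v$ costs at least constant time, so $N$ is at most the total running time. Thus $N\le \mathrm{poly}(n)\cdot\log W$.

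Two points need care here.
\begin{itemize}
\item If $W$ is superpolynomial in $n$, $N$ is not polynomial in $n$. I would handle this by setting the per-call failure probability to $n^{-c}/(N\cdot\mathrm{poly})$. Lemma~\ref{le:grover} achieves this with $O(\log N+c\log n)=O(\log n+\log\log W)$ repetitions. Since $\log\log W$ is polylogarithmic in $n$ for any $W$ with polynomially many bits, the overhead is hidden in $\widetilde{O}$. Alternatively, I will simply state that polylogarithmic factors in $nW$ are suppressed.
\item The running-time argument of Lemma~\ref{le:qpath-runtime} assumed every call succeeds, because the monotonicity argument relies on correct outputs. I would condition on the event that all calls succeed. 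On that event the run is deterministic and correct, and its running time obeys the bound. By a union bound, this event has probability at least $1-n^{-c}$.
\end{itemize}
Then Theorem~\ref{th:qliquidationist-correct} gives correctness on this event.

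\textbf{Main obstacle.} The delicate step is the interaction between randomness and running time. A failed Grover call could return a wrong $\perp$ or a non-marked element, which could lengthen the run and so increase $N$. I will avoid circularity in two ways:
\begin{itemize}
\item Verify every returned element classically in $O(1)$ time, so false positives are impossible and only false $\perp$ answers can occur.
\item Cap the algorithm's running time at the deterministic bound, aborting if it is exceeded. This makes $N$ bounded a priori.
\end{itemize}
The union bound then applies over this fixed number of calls.
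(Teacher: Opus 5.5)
Your proposal is correct and follows the paper's proof: the same choice $\tau=\lceil m^{1/3}\rceil$, the same term-by-term absorption into $nm^{2/3}$ using only $m\le n^2$, and the same union bound over amplified Grover calls followed by Theorem~\ref{th:qliquidationist-correct}. Your extra care makes explicit what the paper compresses into ``polynomially many calls'': classically verifying returned items, capping the running time, and counting $O(\log n+\log\log W)$ repetitions per call. You should drop the fallback of suppressing polylogarithmic factors in $nW$, though, because that convention would also absorb the explicit $\log W$ factor the theorem is stating.
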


\begin{proof}
Each Grover/BBHT invocation is amplified so that its failure probability is a
sufficiently small inverse polynomial in the input parameters. The number of
quantum search calls made over all scales is polynomial in the input size.
Therefore, by a union bound, the probability that any quantum search returns
an incorrect result is at most \(n^{-c}\), for any fixed constant \(c>0\),
after suitable amplification.

Conditioned on the complementary event, every quantum search returns a valid
marked item when one exists and returns \(\perp\) only when no marked item
exists. Theorem~\ref{th:qliquidationist-correct} then implies that the
algorithm returns a maximum-weight perfect matching.

It remains to optimize the running time in Lemma~\ref{le:total-cost}. Set
\[
    \tau=\left\lceil m^{1/3}\right\rceil.
\]
Ignoring constant factors caused by rounding, the first and third terms satisfy
\[
    n\tau^2
    =
    O(nm^{2/3})
\]
and
\[
    \frac{nm}{\tau}
    =
    O(nm^{2/3}).
\]

For the \FVR{} term,
\[
    \tau\sqrt{mn}
    =
    O\!\left(m^{1/3}\sqrt{mn}\right)
    =
    O\!\left(n^{1/2}m^{5/6}\right).
\]
Since \(G\) is simple, \(m\le n^2\), and hence
\[
    m^{1/6}\le n^{1/3}\le n^{1/2}.
\]
It follows that
\[
    n^{1/2}m^{5/6}
    =
    nm^{2/3}\cdot\frac{m^{1/6}}{n^{1/2}}
    \le
    nm^{2/3}.
\]
Similarly,
\[
    m\le nm^{2/3},
\]
so the classical and memory-update term is also absorbed. Therefore,
\[
    n\tau^2
    +
    \tau\sqrt{mn}
    +
    \frac{nm}{\tau}
    +
    m
    =
    O(nm^{2/3}),
\]
and Lemma~\ref{le:total-cost} gives
\[
    T_{\textsc{QLiquidationist}}
    =
    \widetilde{O}(nm^{2/3}\log W).
\]

The best known classical combinatorial bound is
\[
    \widetilde{O}(m\sqrt{n}\log W).
\]
The ratio of the classical bound to our bound is
\[
    \frac{m\sqrt{n}}{nm^{2/3}}
    =
    \left(\frac{m}{n^{3/2}}\right)^{1/3}.
\]
Thus the two bounds have the same asymptotic order at
\(m=\Theta(n^{3/2})\), while our bound is asymptotically smaller when
\[
    m=\omega(n^{3/2}).
\]
\end{proof}

\section{Discussion and Open Problems}

\label{sec:discussion}

In \FVR{}, the dominant work comes from repeated local scans over adjacency lists subject
to eligibility and label constraints. These scans fit naturally into the Grover-search
paradigm: the search space is explicit, the predicate is local and reversible, and the
surrounding routine can be organized so that the classical invariants are preserved.

By contrast, \SBL{} and \Finalization{} are implemented through a priority-queue-driven
Edmonds search, called \textsc{PQSearch} in the \Liq{} algorithm. A call to
\textsc{PQSearch} does not merely ask for some eligible edge or some useful event.
Rather, it maintains a global set of candidate events with priorities, repeatedly extracts
the minimum-key event, discards stale events, and updates the search structure. Thus the
bottleneck is not just local edge discovery; it is the dynamic scheduling of many
competing events. A direct Grover search over adjacency lists does not by itself replace
this priority-queue mechanism. This is why the present paper accelerates \FVR{} while
leaving \SBL{} and \Finalization{} classical.

The most immediate open problem is to design a quantum analogue of the event-management tool in \textsc{PQSearch}. Any improvement over the current $\widetilde{O}(m)$ classical
cost of \textsc{PQSearch} would improve the cost of \SBL{} and \Finalization{}, and could
lead to quantum speedups for MWPM beyond the dense regime considered in this paper.

A second possible direction is approximate weighted matching. There are scaling-based
approximation algorithms for maximum-weight matching in general graphs, including the
linear-time $(1-\epsilon)$-approximation framework in \(O(m \epsilon^{-1} \log \epsilon^{-1})\) time~\cite{DP14}. These algorithms
have weaker output requirements than exact MWPM, and therefore may avoid some of the
event-management bottlenecks that arise in the exact \Liq{} framework. It would
be interesting to understand whether Grover-style local search can improve the dependence
on $m$ or $\epsilon$ in such approximate algorithms while preserving their primal-dual
structure.

\bibliographystyle{plain}
\bibliography{mwpm_citations}

\clearpage
\appendix
\section{Previous Work}
\subsection{Classical algorithms on exact MWPM}
We briefly review the main lines of work on exact maximum-weight perfect matching in
general graphs. Table~\ref{tab:mwpm-classical} summarizes representative classical bounds.

\begin{table}[h]
    \centering
    \small
    \setlength{\tabcolsep}{5pt}
    \renewcommand{\arraystretch}{1.15}
    \begin{tabularx}{\linewidth}{|r|l|>{\raggedright\arraybackslash}X|}
        \hline
        \textbf{Year} & \textbf{Reference} & \textbf{Time bound (MWPM) / assumptions} \\
        \hline
        1965 & Edmonds~\cite{Edmonds65poly} & $O(mn^2)$ \\
        \hline
        1974 & Gabow~\cite{Gabow74thesis} & $O(n^3)$ \\
        \hline
        1976 & Lawler~\cite{Lawler76} & $O(n^3)$ \\
        \hline
        1976 & Karzanov~\cite{Karzanov76} & $O(n^3 + mn\log n)$ \\
        \hline
        1978 & Cunningham--Marsh~\cite{CunninghamMarsh78} & $\mathrm{poly}(n)$ \\
        \hline
        1982 & Galil--Micali--Gabow~\cite{GalilMicaliGabow82} & $O(mn\log n)$ \\
        \hline
        1985 & Gabow~\cite{Gabow85scaling} & $O(mn^{3/4}\log W)$ (integer weights) \\
        \hline
        1989 & Gabow--Galil--Spencer~\cite{GabowGalilSpencer89} &
        $O\!\left(mn\log\log\log_d n + n^2\log n\right)$, where $d = 2 + m/n$ \\
        \hline
        1990 & Gabow~\cite{Gabow90soda} & $O(mn + n^2\log n)$ \\
        \hline
        1991 & Gabow--Tarjan~\cite{GabowTarjan91} &
        $O\!\left(m\sqrt{n\,\alpha(n,m)}\,\log n\,\log(nW)\right)$ (integer weights) \\
        \hline
        2012 & Cygan--Gabow--Sankowski~\cite{CyganGabowSankowski12} &
        $O(Wn^\omega)$ (randomized; integer weights) \\
        \hline
        2017 & Duan--Pettie--Su~\cite{dps} &
        $O(m\sqrt{n}\,\log(nW))$ (integer weights) \\
        \hline
    \end{tabularx}
    \caption{Selected exact algorithms for maximum-weight perfect matching in general graphs.
    Here $n=|V|$, $m=|E|$, $W$ upper-bounds the magnitude of integer edge weights,
    $\alpha(\cdot,\cdot)$ is the inverse Ackermann function, and $\omega$ is the matrix
    multiplication exponent.}
    \label{tab:mwpm-classical}
\end{table}

\paragraph{Edmonds' primal--dual framework.}
The starting point is Edmonds' blossom algorithm, which can be interpreted as a
primal--dual method for the matching LP and its dual~\cite{Edmonds65poly,Edmonds65ptf}. The algorithm maintains a feasible
matching together with dual variables and grows alternating forests along tight edges.
When the search encounters an odd cycle obstructing augmentation, it contracts that cycle
into a blossom. Dual adjustments preserve feasibility while creating new tight edges, and
the process continues until an augmenting path is found or optimality is certified.

\paragraph{Scaling algorithms for integer weights.}
For integer weights, the standard acceleration is weight scaling. One reveals the edge
weights gradually, maintaining a matching and dual solution that satisfy an approximate
form of complementary slackness at the current scale. This reduces each scale to a
controlled relaxation of the exact problem, but it also requires managing blossoms
inherited from earlier scales. Much of the complexity of modern weighted matching
algorithms lies in handling these inherited blossoms efficiently~\cite{Gabow85scaling,GabowTarjan91}.

\paragraph{The Duan--Pettie--Su improvement.}
Duan, Pettie, and Su~\cite{dps} developed the \Liq{} framework, which gives the
current best exact combinatorial bound for integer-weight MWPM in general graphs. At a
high level, their algorithm organizes each scale around three main procedures:
\SBL{}, \FVR{}, and \Finalization{}. Our work preserves this
high-level organization, but replaces the \FVR{} routine by a
quantum-enhanced version.

\paragraph{Algebraic approaches.}
A different line of work uses fast matrix multiplication to obtain randomized exact
algorithms with running time polynomial in $W$ and $n^\omega$. These methods are most
competitive in dense regimes and are structurally quite different from the combinatorial
primal--dual framework considered here~\cite{CyganGabowSankowski15}.

\subsection{Quantum algorithms on matching problems}
\label{sec:qmatchreview}

There are two primary models for quantum matching algorithms.
 In the \emph{query model},   only the number of oracle calls made to access the graph are counted.
 Running time in the query model typically ignores
the cost of maintaining classical or quantum auxiliary data structures. 
By contrast, a \emph{time bound} is
stronger: it counts the total running time.
In addition, quantum graph algorithms may assume that the graph is given in
the form of an adjacency matrix or as a set of adjacency lists.
In the
adjacency-matrix model, a query asks whether a pair $(u,v)$ is an edge. 
In the
adjacency-list model, the algorithm may query the $i$th neighbor of a vertex. 
Our result
is a time bound in an adjacency-list model.
Table~\ref{tab:quantum-matching-prior} shows a summary of quantum algorithms for a variety of matching problems in various models. 
All of the matching problems for which quantum speedups have been obtained are for easier problems, in that they assume a bipartite graph or uniform weights on the edges. 
Our result provides the first quantum speedup for weighted matching in general graphs.

\begin{table}[h]
\centering
\small
\setlength{\tabcolsep}{5pt}
\renewcommand{\arraystretch}{1.15}
\begin{tabularx}{\linewidth}{|>{\raggedright\arraybackslash}p{0.29\linewidth}
                        |>{\raggedright\arraybackslash}p{0.21\linewidth}
                        |>{\raggedright\arraybackslash}X|}
\hline
\textbf{Problem} &
\textbf{Model / graph class} &
\textbf{Best-known quantum bound (selected)}\\
\hline
Maximal matching &
Oracle graph access; bipartite &
$O(n\sqrt{m+n}\log n)$ \cite{AS06} \\
\hline
Maximal matching &
Oracle graph access; general &
$O\!\left(n^2(\sqrt{m/n}+\log n)\log n\right)$ \cite{AS06} \\
\hline
Maximum matching &
Time bound; general &
$O(n\sqrt{m}\log^2 n)$ in the adjacency-list model \cite{Dor09} \\
\hline
Maximum bipartite matching &
Query bound; adjacency matrix &
$O(n^{7/4})$ \cite{LL15} \\
\hline
Maximum bipartite matching &
Query bound; adjacency list &
$O(n^{3/4}\sqrt{m}+n)$ \cite{BT20} \\
\hline
Maximum matching &
Query bound; general &
$O(n^{7/4})$ matrix queries and $O(n^{3/4}\sqrt{m+n})$ list queries \cite{KW21} \\
\hline
Maximum bipartite matching &
Quantum edge-query model &
$O(n^{3/2}\log^2 n)$ \cite{Bli22} \\
\hline
Maximum-weight bipartite matching  &
Time bound; bipartite, integer weights &
$O(n\sqrt{m}W\log^2 n)$ \cite{Dor09} \\
\hline
Minimum-weight perfect matching &
Time bound; bipartite, weighted &
$O(n^{2.5}\log^3 n)$ in the matrix model \cite{Dor09,DHHM04} \\
\hline
$k$-matching &
Query bound; general &
$O(\sqrt{kn}+k^2)$ in the adjacency-matrix model \cite{TM24} \\
\hline
\end{tabularx}
\caption{Selected prior quantum results for matching problems. Here $n=|V|$, $m=|E|$,
and $W$ is the maximum integer edge weight when applicable.}
\label{tab:quantum-matching-prior}
\end{table}

Prior quantum matching algorithms obtain their speedups in settings where the search structure is simpler than in exact weighted matching on general graphs. For example, D\"orn~\cite{Dor09} uses Grover-style search to speed up the exploration steps in augmenting-path algorithms, obtaining a time bound for maximum cardinality matching in general graphs and for maximum-weight matching in bipartite graphs. Other works, like \cite{BT20}, study query complexity, where the goal is to minimize the number of graph-access queries rather than to give a full running-time bound including all data-structure updates.

The closest prior result to ours is due to D\"orn~\cite{Dor09} who gives quantum
speedups for two related matching problems: maximum cardinality matching in general
graphs, with running time $O(n\sqrt{m}\log^2 n)$ and maximum-weight matching in bipartite graphs, with running time $O(n\sqrt{m}\,W\log^2 n)$.
These two results capture two separate parts of the
difficulty addressed in this paper: the first allows general graphs but is unweighted, while
the second allows weights but only in the bipartite setting. Our problem combines both
features.

The weighted quantum algorithm of D\"orn does not work
inside a weight-scaling primal--dual framework. Its dependence on the largest edge weight
is linear in $W$, whereas scaling-based classical algorithms achieve only logarithmic
dependence on the weight range.  To the best of our knowledge, there were previously no known
quantum speedups for matching algorithms that use the scaling framework, allowing for a logarithmic dependence on $W$.

\section{Imported classical facts from Duan--Pettie--Su}
\label{app:dps}

\subsection{The \SBL{} call-count bound}

The proof of Lemma~\ref{le:detail} uses the following classical fact.

\begin{lemma}[\SBL{} search count, imported from~\cite{dps}]
\label{le:dps-detail}
Fix one scale and let \(B\) be a maximal old small blossom. During the
\SBL{} procedure, the number of calls to \textsc{PQSearch} restricted to
\(B\) is \(O(n(B))\), where \(n(B)\) is the number of vertices of \(B\).
\end{lemma}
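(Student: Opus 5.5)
This lemma is imported from Duan, Pettie, and Su, so the plan is to reconstruct their charging argument rather than invent a new one. The structure of \SBL{} for a maximal old small blossom $B$ is as follows. After $B$ and its sub-blossoms are liquidated, the algorithm repeatedly calls \textsc{PQSearch} restricted to $G[B]$ from selected free vertices of $B$. Each call ends in one of two ways.

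\textbf{Successful calls.} A call may find an augmenting path inside $B$. Augmenting reduces the number of free vertices in $B$ by two. The free vertices of $B$ at the start of \SBL{} number at most $n(B)$, since the matching was reinitialized at the scale start. Liquidation never creates new free vertices inside $B$. Hence the successful calls number at most $n(B)/2$.

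\textbf{Unsuccessful calls.} A call may instead terminate without augmentation, because the dual variable of the search root reaches the stopping threshold. The plan is to show that each such call permanently retires its starting free vertex. The vertex's $y$-value has hit the threshold, so DPS never selects it again as a search source in this liquidation phase. This should follow from the selection rule in \cite{dps}, which picks only free vertices whose dual is still above the threshold, together with the monotonicity of duals along the search. This charges each unsuccessful call to a distinct vertex of $B$, giving at most $n(B)$ such calls.

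\textbf{Conclusion.} Summing the two cases gives $O(n(B))$ calls, which is the claimed bound.

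The main obstacle is the unsuccessful case. I must verify from the precise \textsc{PQSearch} termination rule in \cite{dps} that a vertex which ends a failed search can never be chosen again. If the rule instead allows a vertex to be re-searched after later augmentations change its status, I would charge each re-search to the augmentation that intervened. That would give at most one extra call per successful augmentation, so the total would still be $O(n(B))$.
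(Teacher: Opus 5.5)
Your treatment of successful calls is fine and matches the first alternative in the paper's argument: each such call matches two more vertices of $B$, so there are at most $n(B)/2$ of them. The gap is in the unsuccessful case, which rests on a wrong picture of how the calls in \SBL{} are driven.

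In the procedure the paper describes, a call is rooted at the free vertices of $B$ whose $y$-value is currently maximum. If it does not augment, it stops once that maximum has been lowered to the next-largest $y$-value among the free vertices of $B$. It does not stop at a fixed threshold that retires its roots. After such a call, the old roots are still free and still have maximum $y$-value. So they are roots again in the very next call, now joined by the vertices whose level they have just reached. One vertex can therefore be a source of a long run of consecutive failed calls with no augmentation in between. Neither your charge to a distinct ``retired'' starting vertex nor your fallback (charging each re-search to an intervening augmentation) bounds the number of failed calls.

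What is missing is the potential the paper uses: every failed call strictly enlarges the set of free vertices of $B$ with maximum $y$-value. This yields an $O(n(B))$ bound for three reasons:
\begin{itemize}
\item all roots are lowered in lockstep, while unreached free vertices keep their $y$-values;
\item hence a vertex that has entered the maximum-$y$ set stays there until it is matched;
\item matched vertices are never freed again during \SBL{}, since augmentation keeps matched vertices matched.
\end{itemize}
Each vertex therefore enters that set at most once. Charge every failed call to a vertex that newly enters the set, rather than to its starting vertex; this gives at most $n(B)$ failed calls. Equivalently, the number of distinct $y$-values among free vertices of $B$ drops with every failed call and never increases. Combined with your $n(B)/2$ bound on successful calls, this gives the claimed $O(n(B))$.
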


\begin{proof}[Classical idea]
Each such call either matches at least two more vertices in \(B\), or strictly
enlarges the set of free vertices in \(B\) having maximum \(y\)-value. Since
both processes can happen only \(O(n(B))\) times, the total number of calls is
\(O(n(B))\).
\end{proof}

This is the only nontrivial classical fact needed in the proof of
Lemma~\ref{le:detail}; once it is imported, the rest of the argument is the
elementary estimate
\[
    O(n(B))\cdot \widetilde{O}(m(B)+n(B))
    =
    \widetilde{O}(n(B)\tau^2)
\]
for a small blossom \(B\).

\subsection{The free-vertex bound after one scale}

The proof of Lemma~\ref{le:clean} uses the following classical consequence of
the \Liq{} analysis.

\begin{lemma}[Free-vertex bound, imported from~\cite{dps}]
\label{le:dps-free}
After one scale of the \Liq{} algorithm, the number of free vertices remaining
after the \FVR{} procedure is \(O(n/\tau)\).
\end{lemma}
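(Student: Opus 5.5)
The statement is the imported free-vertex bound (Lemma~\ref{le:dps-free}). I would prove it by the standard scaling-duality argument of Duan, Pettie, and Su: compare the dual objective before and after \FVR{} against the weight of an optimal perfect matching at the current scale.

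\textbf{Step 1: invariants.} I would fix the invariants maintained through \FVR{} under Criterion~2. These are:
\begin{itemize}
\item approximate domination: $yz(e)\ge w(e)-2$ for all $e$;
\item tightness on matched and blossom edges: $yz(e)\le w(e)$ for $e\in M$ or $e\in E_B$;
\item all free vertices share the same dual value, and that value is minimal among vertices.
\end{itemize}
These are exactly the properties \SearchOne{} preserves. Theorem~\ref{th:qsearchone-correct} lets me use them verbatim for \QSearchOne{}.

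\textbf{Step 2: effect of each call on free vertices.} Each call to \SearchOne{} that leaves a free vertex $v$ unmatched ends with a dual adjustment. Since $v$ is OUT (a root of the alternating forest), that adjustment decrements $y(v)$ by one. After $\tau$ iterations, every vertex still free has therefore had its $y$-value lowered by exactly $\tau$ relative to its value at the start of \FVR{}.

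\textbf{Step 3: the counting argument.} Let $M^*$ be a perfect matching that is optimal for the current scaled weights. I would use two bounds:
\begin{itemize}
\item \emph{Upper bound on the dual objective.} Summing $yz(e)\ge w(e)-2$ over $M^*$ gives
$\sum_v y(v)+\sum_B z(B)\lfloor |B|/2\rfloor \ge w(M^*)-n$, up to the usual handling of blossom terms via laminarity.
\item \emph{Lower bound on how far the duals start.} At the beginning of \FVR{}, after the $y\gets 2y'+3$ scaling, the inherited invariants give that the dual objective exceeds $w(M^*)$ by only $O(n)$. This is where the choice of additive constant and the reweighting and liquidation steps enter.
\end{itemize}
Each of the $f$ free vertices remaining at the end contributed a drop of $\tau$ in the objective. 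The $z$-increases on OUT blossoms and the $y$-increases on IN vertices cancel against matched pairs, so the net objective drops by at least $f\tau$. Combining, $f\tau\le O(n)$, hence $f=O(n/\tau)$.

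\textbf{Main obstacle.} The delicate step is the initial $O(n)$ gap between the dual objective and $w(M^*)$ at the start of \FVR{}. It relies on the old blossoms having been liquidated, so that $z$ vanishes on inherited blossoms. It also relies on the bounded loss per scale from the dummy vertices added in Perfection, with $y(\widehat u)=\tau$. I would import this bound directly from the scale analysis of~\cite{dps} rather than reprove it, checking only that it uses nothing about \SearchOne{} beyond its input--output behaviour.
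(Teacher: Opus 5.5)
The paper gives no argument of its own for this lemma. It imports the bound from Duan--Pettie--Su and argues only that their proof uses nothing beyond the framework invariants and the input--output behaviour of \SearchOne{}, which Theorem~\ref{th:qsearchone-correct} transfers to \QSearchOne{}. Your closing remark and your use of Theorem~\ref{th:qsearchone-correct} are that same transfer. However, the part of the counting you carry out yourself has a step that fails as written.

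You claim that in each dual adjustment the increases on IN vertices and blossoms cancel against matched OUT partners. Then $D=\sum_v y(v)+\sum_B z(B)\lfloor |B|/2\rfloor$ would drop by at least the number of free vertices. But an OUT root blossom changes $D$ by $-1$ and an IN root blossom by $+1$. So the cancellation needs every IN blossom to have an OUT child, i.e.\ its external matched edge must be tight.

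Under Criterion~2 this routinely fails. Every edge that becomes matched in an augmentation has $yz(e)=w(e)-2$; this is exactly what makes Lemma~\ref{le:nomore} work. Consider a blossom entered by an eligible unmatched edge whose external matched edge is such an edge. It is an IN leaf of the search forest and contributes $+1$ with nothing to cancel it. A single remaining free root with eligible edges into two such blossoms already makes $D$ increase in that iteration. In general the per-iteration change is $-|F_t|$ plus the number of IN leaves. Bounding the leaves directly would require control of the total augmenting-path length, which you do not have.

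The inequality $D_{\mathrm{start}}-D_{\mathrm{end}}\ge f\tau$ is nonetheless true. You need to obtain it by comparing the two endpoints rather than by per-iteration cancellation. Let $M,F$ be the final matching and free set.

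At the end: every blossom contains $\lfloor |B|/2\rfloor$ edges of $M$, so near-tightness gives $D_{\mathrm{end}}=\sum_{e\in M}yz_{\mathrm{end}}(e)+y_{\mathrm{end}}(F)\le w(M)+y_{\mathrm{end}}(F)$. By your Step~2, $y_{\mathrm{end}}(u)=y_{\mathrm{start}}(u)-\tau$ for every $u\in F$.

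At the start of \FVR{}: domination and $z\ge 0$ give
\[
w(M)\le\sum_{e\in M}yz_{\mathrm{start}}(e)\le D_{\mathrm{start}}-y_{\mathrm{start}}(F).
\]
If only near-domination is available there, this costs an extra additive $n$.

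Adding the two, the $y_{\mathrm{start}}(F)$ terms cancel and $D_{\mathrm{end}}\le D_{\mathrm{start}}-f\tau$, up to an additive $n$. Together with your two endpoint bounds this gives $f=O(n/\tau)$.

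Note also that your two bullets are mislabelled. The first is a lower bound on $D_{\mathrm{end}}$, and the second is an upper bound on $D_{\mathrm{start}}$.
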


The proof in~\cite{dps} depends only on the maintained properties of the
framework and on the postcondition of the classical \SearchOne{} routine.
Since Theorem~\ref{th:qsearchone-correct} shows that \QSearchOne{} performs
the same five tasks as \SearchOne{}, the same bound applies to the
quantum-enhanced algorithm.

Summing over all \(O(\log W)\) scales yields the bound used in
Lemma~\ref{le:clean}:
\[
    O\!\left(\frac{n}{\tau}\log W\right)
\]
free vertices may appear in \Finalization{}.

\section{Matchings, blossoms, and dual variables}

\label{ap:lp}

Let $G=(V,E)$ be an undirected graph with edge-weight function
$w:E\to \mathbb{R}$. A \emph{matching} is a set of edges, no two of which share an
endpoint. A matching is \emph{perfect} if every vertex is incident to exactly one matched
edge.

In this paper we focus on \emph{maximum-weight perfect matching} (MWPM). Thus the goal is
to find a perfect matching $M\subseteq E$ that maximizes
\[
w(M) \;=\; \sum_{e\in M} w_e .
\]

\paragraph{The perfect-matching LP.}
For each edge $e\in E$, introduce a variable $x_e$ indicating whether $e$ is selected.
Let $\delta(v)$ denote the set of edges incident to $v$, and for $S\subseteq V$ let
$E(S)$ denote the set of edges with both endpoints in $S$.

The natural integer program for MWPM is
\begin{align*}
\max \ & \sum_{e\in E} w_e x_e \\
\text{s.t. } &
\sum_{e\in\delta(v)} x_e = 1 && \forall v\in V,\\
& x_e \in \{0,1\} && \forall e\in E.
\end{align*}
Dropping integrality gives the basic LP relaxation
\begin{align*}
\max \ & \sum_{e\in E} w_e x_e \\
\text{s.t. } &
\sum_{e\in\delta(v)} x_e = 1 && \forall v\in V,\\
& x_e \ge 0 && \forall e\in E.
\end{align*}
Every perfect matching induces a feasible solution to this relaxation, but in general
graphs the relaxation is not integral.

\paragraph{Odd sets and blossoms.}
The obstruction is parity. On an odd cycle, the assignment $x_e=\tfrac12$ for every edge
satisfies all degree constraints, although no integral perfect matching can do so. More
generally, for every odd set $S\subseteq V$ with $|S|\ge 3$, any matching can use at most
\[
\frac{|S|-1}{2}
\]
edges of $E(S)$. This yields the odd-set inequalities
\[
\sum_{e\in E(S)} x_e \le \frac{|S|-1}{2}
\qquad
\forall S\subseteq V \text{ odd},\ |S|\ge 3.
\]

Adding these inequalities gives Edmonds' exact LP formulation~\cite{Edmonds65poly}:
\begin{align}
\text{(P)} \qquad
\max \ & \sum_{e\in E} w_e x_e \notag\\
\text{s.t. } &
\sum_{e\in\delta(v)} x_e = 1 && \forall v\in V, \notag\\
& \sum_{e\in E(S)} x_e \le \frac{|S|-1}{2}
&& \forall S\subseteq V \text{ odd},\ |S|\ge 3, \notag\\
& x_e \ge 0 && \forall e\in E. \label{eq:primal-mwpm}
\end{align}

In blossom-based algorithms, the odd sets that become active during the search are
represented as \emph{blossoms}. Algorithmically, a blossom is an odd set that is treated as
a contracted object during alternating-tree search. The maintained blossoms form a
laminar family, which allows them to be handled hierarchically.

\paragraph{Dual variables.}
The dual of \eqref{eq:primal-mwpm} is
\begin{align}
\text{(D)} \qquad
\min \ & \sum_{v\in V} y(v)
\;+\;
\sum_{\substack{S\subseteq V\\ |S|\ge 3\ \text{odd}}}
\frac{|S|-1}{2}\,z(S)
\notag\\
\text{s.t. } &
y(u)+y(v)+\sum_{\substack{S\subseteq V \text{ odd}\\ e\in E(S)}} z(S)
\;\ge\; w_e
&& \forall e=(u,v)\in E, \notag\\
& z(S) \ge 0
&& \forall S\subseteq V \text{ odd},\ |S|\ge 3, \label{eq:dual-mwpm}
\end{align}
where the vertex variables $y(v)$ are free and the odd-set variables satisfy $z(S)\ge 0$.

In the primal--dual view, the variables $y$ and $z$ are the dual weights maintained by the
algorithm. For a current laminar blossom family $\Omega$, it is convenient to write
\[
yz(e)
\;:=\;
y(u)+y(v)+\sum_{B\in \Omega:\, e\subseteq B} z(B)
\qquad \text{for } e=(u,v)\in E,
\]
and define the \emph{slack} of an edge by
\[
S(e) \;:=\; yz(e)-w(e).
\]
An edge is \emph{tight} if $S(e)=0$.

\paragraph{Complementary slackness.}
A primal--dual pair $(x,(y,z))$ is optimal if it is feasible and satisfies complementary
slackness. For \eqref{eq:primal-mwpm} and \eqref{eq:dual-mwpm}, this means in particular:
\begin{itemize}
    \item if $x_e>0$, then
    \[
    y(u)+y(v)+\sum_{\substack{S\text{ odd}:\\ e\in E(S)}} z(S) = w_e;
    \]
    that is, every used edge is tight;

    \item if $z(S)>0$, then
    \[
    \sum_{e\in E(S)} x_e = \frac{|S|-1}{2};
    \]
    that is, every active odd set is tight.
\end{itemize}

This is the basic structural picture behind blossom algorithms: one searches for
augmenting structure among tight edges while maintaining dual feasibility, and the active
odd sets with positive $z$-value are exactly the blossom constraints that matter.

\paragraph{Bipartite specialization.}
In bipartite graphs, the odd-set inequalities are redundant. The dual then consists only
of vertex potentials $y$, and the blossom variables disappear. The general-graph setting is
harder precisely because odd sets must be tracked explicitly.

\section{Reduction remarks}

This appendix describes two simple reductions that justify the choice of problem
formulation in the main body.

\subsection{From maximum-weight matching to maximum-weight perfect matching}

We first show that maximum-weight matching without the perfect-matching requirement
reduces to maximum-weight perfect matching with only constant-factor overhead.

\begin{lemma}
\label{le:mm-to-mwpm}
Maximum-weight matching in a general graph reduces to maximum-weight perfect
matching with constant overhead.
\end{lemma}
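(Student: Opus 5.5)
The plan is to use the standard gadget: take two disjoint copies of $G$, join each vertex to its copy, and observe that perfect matchings of the new graph correspond to pairs of matchings of $G$. Given $G=(V,E)$ with weights $w$, build $\widehat G$ with vertex set $V\cup V'$, where $V'=\{v' : v\in V\}$ is a disjoint copy. Its edges are $E$, the copied edges $E'=\{(u',v') : (u,v)\in E\}$ with $w(u',v')=w(u,v)$, and a \emph{mirror} edge $(v,v')$ of weight $0$ for every $v\in V$. Then $\widehat G$ has $2n$ vertices and $2m+n$ edges. The weight magnitudes are still bounded by $W$, and the weights stay integral. So any MWPM algorithm whose running time is monotone in $(n,m,W)$ runs on $\widehat G$ within a constant factor of its cost on $G$, and the construction takes $O(n+m)$ time. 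I would note that $\widehat G$ is simple whenever $G$ is, so Theorem~\ref{th:main} applies. Also $\widehat G$ always admits a perfect matching, namely all mirror edges. Finally, since $m\ge n^{3/2}$ implies $2m+n=\Theta(m)$, the dense regime is preserved.

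Next I would show the two optimal values agree, up to a factor of two. Given any matching $M$ of $G$, let $M'$ be its copy in $E'$ and add the mirror edge $(v,v')$ for every $v$ left free by $M$. The result is a perfect matching of $\widehat G$ of weight $2w(M)$. Conversely, let $\widehat M$ be any perfect matching of $\widehat G$. It consists of mirror edges, a matching $M_1\subseteq E$, and a matching $M_2'\subseteq E'$. Its weight is $w(M_1)+w(M_2)$, which is at most $2\max\{w(M_1),w(M_2)\}$, where $M_2$ is the preimage of $M_2'$ in $G$. Hence $\mathrm{OPT}_{\mathrm{MWPM}}(\widehat G)=2\,\mathrm{OPT}_{\mathrm{MWM}}(G)$. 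Moreover, from an optimal $\widehat M$ we recover an optimal matching of $G$ by taking whichever of $M_1$ and $M_2$ is heavier. That matching must attain $\mathrm{OPT}$, since otherwise the sum would fall below $2\,\mathrm{OPT}$. This recovery takes $O(n)$ time.

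The step needing the most care is the bookkeeping of the reduction's overhead in the QRAM setting, rather than the combinatorics. The QRAM arrays for $\widehat G$ have to be loaded, and the polylogarithmic factors in $n$ and $\log W$ have to be unaffected. Both follow because the size parameters change only by constant factors, and the construction is classical and linear in the input size. I would state this explicitly to conclude the claimed constant overhead.
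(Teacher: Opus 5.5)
Your proof is correct and matches the paper's argument: the same two-copy construction with weight-$0$ vertical edges, the same size count of $2n$ vertices and $2m+n$ edges, and the same two inequalities giving $\mathrm{OPT}_{PM}(\widehat G)=2\,\mathrm{OPT}_{M}(G)$ (the paper also observes that both $M_1$ and $M_2$ are optimal, so either projection may be returned). Your extra remarks on QRAM loading and on preserving $W$ and simplicity go beyond what the paper states, but the dense-regime condition $\widehat m\ge \widehat n^{3/2}$ holds only up to a constant factor, since $2m+n$ can fall below $(2n)^{3/2}$ when $m=n^{3/2}$.
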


\begin{proof}
Let $G=(V,E)$ be an edge-weighted graph. Construct a graph
\[
H=(V^{(1)}\cup V^{(2)},\,E^{(1)}\cup E^{(2)}\cup F)
\]
as follows. For each vertex $v\in V$, create two copies $v^{(1)}\in V^{(1)}$
and $v^{(2)}\in V^{(2)}$. For each edge $uv\in E$, add edges
$u^{(1)}v^{(1)}\in E^{(1)}$ and $u^{(2)}v^{(2)}\in E^{(2)}$, each with weight
$w(uv)$. Finally, for each $v\in V$, add a vertical edge
$v^{(1)}v^{(2)}\in F$ of weight $0$.

Then
\[
|V(H)|=2|V(G)|,
\qquad
|E(H)|=2|E(G)|+|V(G)|,
\]
so the construction enlarges the instance by only a constant factor.

Let $M$ be any matching in $G$. Define
\[
P(M)
=
\{u^{(1)}v^{(1)},u^{(2)}v^{(2)}: uv\in M\}
\;\cup\;
\{v^{(1)}v^{(2)}: v \text{ is unmatched in } M\}.
\]
Then $P(M)$ is a perfect matching in $H$, and
\[
w_H(P(M))=2w_G(M).
\]
Therefore
\[
\mathrm{OPT}_{PM}(H)\ge 2\,\mathrm{OPT}_{M}(G).
\]

Conversely, let $P$ be any perfect matching of $H$. Let $M_1$ be the projection
to $G$ of the edges of $P$ inside $V^{(1)}$, and define $M_2$ analogously from
the edges inside $V^{(2)}$. Since $P$ is a matching, both $M_1$ and $M_2$ are
matchings in $G$. Moreover, the vertical edge $v^{(1)}v^{(2)}$ appears in $P$
exactly when $v$ is unmatched inside both copies, so
\[
w_H(P)=w_G(M_1)+w_G(M_2)\le 2\,\mathrm{OPT}_{M}(G).
\]
Hence
\[
\mathrm{OPT}_{PM}(H)=2\,\mathrm{OPT}_{M}(G).
\]

Thus one call to a maximum-weight perfect matching algorithm on $H$, followed by
projection onto either copy, yields an optimal maximum-weight matching in $G$.
\end{proof}

\subsection{Maximum-weight versus minimum-weight perfect matching}

For perfect matchings, the maximum-weight and minimum-weight versions are
equivalent up to a trivial weight transformation.

\begin{lemma}
\label{le:max-min-perfect}
Maximum-weight perfect matching and minimum-weight perfect matching are
equivalent up to an affine transformation of the edge weights.
\end{lemma}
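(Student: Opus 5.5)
The plan is to give two explicit affine maps on the edge weights, one in each direction, and check that each sends optimal solutions of one problem to optimal solutions of the other. The key fact is that every perfect matching of an $n$-vertex graph has exactly $n/2$ edges. Consequently, adding a constant $c$ to every edge weight changes the weight of every perfect matching by the same amount $cn/2$. In particular, it does not change which perfect matching is optimal. Negating all weights reverses the order of perfect matchings by weight.

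First I fix the map from minimum to maximum. Given weights $w$, set $w'(e) = -w(e)$, or more generally $w'(e) = C - w(e)$ for any constant $C$. One can take $C = W$ if nonnegative weights are wanted, so that all new weights lie in $[0, 2W]$. For any perfect matching $M$, we have
\[
w'(M) = \frac{n}{2}C - w(M),
\]
because $|M| = n/2$. Since $\frac{n}{2}C$ is the same for all perfect matchings, $M$ maximizes $w'$ exactly when it minimizes $w$. The same map with the roles swapped gives the reverse direction. The transformation is an involution up to the constant, and the same identity applies.

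Next I check that the reduction preserves the parameters used in the running-time bounds. The graph is unchanged, so $n$ and $m$ stay the same. The weights remain integers, and their magnitude bound changes from $W$ to at most $2W$. This only affects $\log W$ by an additive constant. The transformation itself takes $O(m)$ time, which is absorbed by the existing $\widetilde{O}(m \log W)$ bookkeeping term.

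There is no real obstacle here. The only point that needs care is noting that the argument relies on perfectness, namely that all feasible solutions have the same cardinality $n/2$. Without this, the additive constant $C$ would not cancel, and the equivalence would fail for non-perfect matchings. I will state this explicitly.
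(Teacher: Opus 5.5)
Your proposal is correct and follows essentially the same route as the paper: map $w'(e)=C-w(e)$ and use that every perfect matching has exactly $|V|/2$ edges, so $w'(M)=\frac{|V|}{2}C-w(M)$ and optimal solutions correspond. Your added remarks (any $C$ works, $C=W$ keeps weights in $[0,2W]$, and $n$, $m$, $\log W$ are essentially unchanged) are accurate but not needed for the lemma as stated.
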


\begin{proof}
Let $G=(V,E)$ be a graph and let $w:E\to \mathbb{R}$ be edge weights. Choose a
constant $C>\max_{e\in E} w(e)$, and define
\[
w'(e)=C-w(e)
\qquad \forall e\in E.
\]

Every perfect matching of $G$ contains exactly $|V|/2$ edges. Therefore, for any
perfect matching $M$,
\[
w'(M)
=
\sum_{e\in M}(C-w(e))
=
\frac{|V|}{2}\,C - w(M).
\]
Since the first term is the same for every perfect matching, minimizing $w'(M)$
is equivalent to maximizing $w(M)$. Thus an algorithm for minimum-weight perfect
matching immediately yields an algorithm for maximum-weight perfect matching, and
vice versa.
\end{proof}

\section{Contracted and original graph paths}
\label{sec:contracted}

Let \(H=G/\Omega\) denote the graph obtained by contracting each current root
blossom to one supervertex. For a blossom \(B\), let \(b(B)\) denote its base.
We use the standard blossom invariant that every nontrivial blossom is stored
with an odd alternating cycle of child blossoms, with one distinguished base
child. Recursively, the base of a blossom is the base of its base child. If a
blossom has a matched edge crossing its boundary, then that edge is incident to
its base; if it has no matched edge crossing its boundary, then the blossom is
free.

\begin{lemma}[Internal expansion inside a blossom]
\label{le:internal-expansion}
Let \(B\) be a blossom and let \(x\in B\). There is an alternating path inside
\(B\) from \(b(B)\) to \(x\), determined by the stored defining cycles of the
blossoms contained in \(B\). Such a path can be constructed recursively in time
linear in the number of edges output.
\end{lemma}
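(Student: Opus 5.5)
We prove Lemma~\ref{le:internal-expansion} by structural induction on the blossom forest $\Omega$, that is, on the depth of $B$. We also strengthen the claim so the induction goes through: the path $P_B(x)$ from $b(B)$ to $x$ is even length and alternating, and it starts at $b(B)$ with a matched edge whenever $x\neq b(B)$. Equivalently, it ends at $x$ with a matched edge if $x$ is the base and with an unmatched edge otherwise. Here the matching status of an edge inside $B$ is taken relative to the current matching, under which $b(B)$ has no matched partner inside $B$.

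\emph{Base case.} If $B$ is trivial, then $x=b(B)$ and the empty path works.

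\emph{Inductive step.} Let $B$ have ordered child cycle $C_0,C_1,\ldots,C_{2k}$, where $C_0$ is the base child, so that $b(B)=b(C_0)$. The stored defining edges are $e_i=(p_i,q_i)$ joining $C_{i-1}$ to $C_i$ (with indices taken cyclically), and $e_1,e_3,\ldots$ or $e_2,e_4,\ldots$ alternate between matched and unmatched. Since exactly $k$ edges of the cycle are matched and $C_0$ is unmatched on the cycle, the two edges at $C_0$ are unmatched. Going around the cycle, the edges at each $C_i$ with $i\ge1$ then consist of one matched and one unmatched edge. Moreover, the matched edge at $C_i$ is incident to $b(C_i)$, because a matched edge leaving a blossom is incident to its base.

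Suppose $x\in C_j$. If $j=0$, apply the induction hypothesis to $C_0$. Otherwise, choose the direction around the cycle from $C_0$ to $C_j$ whose last edge entering $C_j$ is matched. Exactly one of the two directions has this property, decided by the parity of $j$: go forward if $j$ is even and backward if $j$ is odd. This edge enters $C_j$ at $b(C_j)$.

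Build the path as follows. For each intermediate child $C_i$ traversed, we enter $C_i$ through one cycle edge and leave through the other. One of these endpoints is $b(C_i)$, namely the matched side, and the other is some vertex $y\in C_i$. Insert the reverse or the forward copy of $P_{C_i}(y)$ accordingly. By the strengthened hypothesis, this segment has even length, starts with a matched edge at $b(C_i)$, and ends at $y$ with an unmatched edge. Hence alternation is preserved at both junctions. The segment in $C_0$ is $P_{C_0}(p)$, where $p$ is the endpoint in $C_0$ of the unmatched exit edge. Finally, append $P_{C_j}(x)$ after entering $C_j$ at its base. Checking the parity of the whole path and its first and last edges re-establishes the strengthened invariant for $B$.

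\emph{Main obstacle.} The delicate point is this parity bookkeeping, specifically showing that the chosen direction both enters $C_j$ via its matched edge and produces an even total length. I would verify it by counting: every child segment has even length, so the parity of the whole path equals the number of cycle edges used. In the chosen direction that number is even, since it equals $j$ in one direction or $2k+1-j$ in the other.

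\emph{Running time.} Each recursive call outputs at least one edge, or terminates at a trivial blossom, and descends one level. Cycle traversal costs $O(1)$ per cycle edge output. To avoid paying for nonproductive descents, we store base pointers, so that the empty path at a base is returned in $O(1)$ time. The total time is therefore linear in the number of edges output.
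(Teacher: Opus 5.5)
Your route is the same as the paper's: induction on nesting depth, walking around the odd child cycle in the direction whose last edge into $C_j$ is matched (forward for even $j$, backward for odd $j$), and recursively expanding each child. Making a parity invariant explicit is a real improvement over the paper's sketch. However, the invariant you state has matched and unmatched reversed, and with it the inductive step does not go through. You note yourself that $b(B)$ has no matched partner inside $B$, so no path inside $B$ can leave $b(B)$ along a matched edge. Your strengthened hypothesis is therefore false for every $x\neq b(B)$.

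The junction check then fails everywhere. At an intermediate child $C_i$ the cycle edge at $b(C_i)$ is matched, so a segment that also starts at $b(C_i)$ with a matched edge would put two matched edges at $b(C_i)$. At $y$ the cycle edge is unmatched, and your segment also ends there with an unmatched edge. Likewise $P_{C_0}(p)$, ending at $p$ with an unmatched edge, would be followed by the unmatched exit edge. So ``alternation is preserved at both junctions'' does not follow.

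The correct invariant is: $P_B(x)$ is empty if $x=b(B)$; otherwise it has even length, starts at $b(B)$ with an unmatched edge, and ends at $x$ with a matched edge. Read from $x$, this is the standard even alternating path to the base that begins with $x$'s matched edge. With this version every junction works:
at $b(C_i)$ a matched cycle edge meets an unmatched internal edge; at $y$ an unmatched cycle edge meets a matched internal edge; the $C_0$ segment ends matched at $p$ before the unmatched exit edge; and the matched entry edge at $b(C_j)$ is followed by the unmatched first edge of $P_{C_j}(x)$. Your count of $j$ or $2k+1-j$ cycle edges (both even) then closes the induction.

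A secondary point concerns the running time. Your claim that every recursive call outputs an edge or bottoms out is not accurate. If $x\in C_0$ but $x\neq b(C_0)$, then $P_B(x)$ outputs nothing at the level of $B$ and merely delegates to $C_0$. A chain of such delegations through nested base children can be long while ultimately producing only two edges. Locating the child that contains $x$ also requires walking up the blossom tree, and base pointers handle only the empty-path case. What your recursion actually supports is an $O(|B|)$ bound (each sub-blossom of $B$ is entered at most once), not time linear in the output. The paper's own proof asserts linearity with no more justification than yours. For its application, lifting blossom-disjoint paths of total size $O(n)$, the $O(|B|)$ bound already suffices. Strict linearity in the output would still need an additional argument or a different representation.
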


\begin{proof}
The proof is by induction on the depth of \(B\) in the blossom forest. If
\(B\) is trivial, then \(B=\{x\}\), and the claim is immediate.

Suppose \(B\) is nontrivial, with children
\[
    C_0,C_1,\ldots,C_{2k},
\]
where \(C_0\) is the base child. Let \(C_i\) be the child containing \(x\).
The children lie on an odd alternating cycle. Along this cycle, there is a
stored alternating route from the base child \(C_0\) to \(C_i\) with the parity
needed to connect the base of \(B\) to the desired endpoint in \(C_i\). For
each child blossom encountered on this route, we recursively expand the needed
internal segment using the induction hypothesis.

Concatenating the expanded child segments with the defining-cycle edges gives
an alternating path inside \(B\) from \(b(B)\) to \(x\). Since the construction
outputs only the pieces of blossom structure used by the path, the running time
is linear in the number of edges output.
\end{proof}

\begin{lemma}[Base-unavoidability]
\label{le:base-unavoidable}
Let \(B\) be a current root blossom. Any augmenting path in \(G\) that uses
vertices of \(B\) and is compatible with the current blossom contraction
contains the base \(b(B)\). In particular, the lift of any augmenting path of
blossoms in \(H\) contains the base of every root blossom appearing on the
contracted path.
\end{lemma}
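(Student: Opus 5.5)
The argument rests on the structural invariant recalled at the start of this appendix: a blossom has at most one matched edge crossing its boundary, and if such an edge exists it is incident to the base. Together with Fact~\ref{fact:blossommatch} (an unmatched vertex inside $B$ must be the base, and then $B$ is free), this is essentially all that is needed. Fix a root blossom $B$ and an augmenting path $P$ in $G$ that meets $B$ and is compatible with the contraction. Compatibility means the portion of $P$ inside $B$ is a maximal segment entering and leaving $B$ only through edges of the contracted path, so $P\cap B$ is a single contiguous subpath $P_B$. The goal is to show $b(B)\in P_B$.

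The proof splits into cases by how $P$ enters and leaves $B$. The ways $P_B$ can touch the rest of the path are: an endpoint of $P$ inside $B$, or an edge of $P$ crossing $\partial B$. Each contiguous segment has at most two such attachment points.

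First, suppose an endpoint of $P$ lies in $B$. That endpoint is an unmatched vertex, so by Fact~\ref{fact:blossommatch} it equals $b(B)$ and we are done.

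Otherwise $P_B$ is attached at both ends by edges crossing $\partial B$. Since $P$ alternates and $P_B$ is a nonempty subpath strictly inside $P$, the two crossing edges cannot both be unmatched. The argument is a parity one. $P_B$ is an alternating path inside $B$, and every vertex of $B$ other than its base is matched inside $B$; the base is matched outside or free. Suppose both boundary edges were unmatched. Then both endpoints of $P_B$ would need to be matched along $P_B$ itself, so $P_B$ would start and end with matched edges. It would then cover its vertices perfectly by matched edges internal to $B$. I will instead argue directly as follows.

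\begin{itemize}
\item At least one of the two crossing edges is matched. For an augmenting path of blossoms this is immediate from the alternation in the definition, since each interior blossom of the contracted path is entered by one matched and one unmatched edge.
\item A matched edge crossing $\partial B$ is, by the invariant, incident to $b(B)$.
\item Hence that endpoint of $P_B$ is $b(B)$.
\end{itemize}

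For the ``in particular'' claim, apply this to each root blossom $B_i$ on a contracted augmenting path $\langle B_1,\dots,B_r\rangle$. The endpoint blossoms $B_1$ and $B_r$ are free, and their lifts end at their bases by Fact~\ref{fact:aug1}. Each interior $B_i$ is incident to exactly one matched path edge $e_{i-1}$ or $e_i$, whose endpoint in $B_i$ is the base.

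The main obstacle is making ``compatible with the current blossom contraction'' precise enough to exclude paths that wander in and out of $B$ several times. I would take it to mean that $P$ is the lift of a contracted path, so each blossom appears once on the contracted path. The lifted path inside $B$ is then produced by Lemma~\ref{le:internal-expansion} from $b(B)$, which gives the claim even more directly. Under that definition, the lift construction of Lemma~\ref{le:internal-expansion} by itself already starts every internal segment at the base.
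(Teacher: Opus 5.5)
Your final argument is the paper's proof. An endpoint blossom is free, so the path's unmatched endpoint inside it is $b(B)$ by Fact~\ref{fact:blossommatch}; an interior blossom of the alternating contracted path has exactly one matched boundary edge, which the blossom invariant forces to be incident to $b(B)$. Like the paper, you read ``compatible with the contraction'' as ``a lift of an augmenting path of blossoms.''

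You were right to drop the parity remark, because for an arbitrary alternating path meeting $B$ in one contiguous segment it is false. Such a path can enter a non-base vertex of $B$ on an unmatched edge, use a single matched edge inside $B$, and leave on another unmatched edge without ever touching $b(B)$.
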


\begin{proof}
If \(B\) is an endpoint blossom of an augmenting path of blossoms in \(H\), then
\(B\) is free. By the blossom invariant, the only possible unmatched vertex in
a free blossom is its base. Hence the lifted augmenting path must start or end
at \(b(B)\).

Now suppose \(B\) is an internal blossom on an augmenting path. The contracted
path enters and leaves \(B\) using one matched and one unmatched boundary edge.
By the blossom invariant, the only matched edge crossing the boundary of \(B\),
if such an edge exists, is incident to \(b(B)\). Therefore any alternating
traversal through \(B\) must pass through \(b(B)\). The internal portion from
\(b(B)\) to the other boundary endpoint is supplied by
Lemma~\ref{le:internal-expansion}. Thus every lifted traversal of \(B\)
contains \(b(B)\).
\end{proof}

\begin{proof}[Proof of Fact~\ref{fact:aug1}]
Let
\[
    B_1,e_1,B_2,e_2,\ldots,e_{r-1},B_r
\]
be an augmenting path of blossoms in \(H=G/\Omega\). Thus \(B_1\) and \(B_r\)
are free root blossoms, each \(e_i\) is an original edge with one endpoint in
\(B_i\) and one endpoint in \(B_{i+1}\), and the edges
\(e_1,\ldots,e_{r-1}\) alternate between unmatched and matched.

We lift this path by expanding its portion inside each blossom. In the first
blossom \(B_1\), start at the base \(b(B_1)\) and use
Lemma~\ref{le:internal-expansion} to reach the endpoint of \(e_1\) inside
\(B_1\). In the last blossom \(B_r\), use
Lemma~\ref{le:internal-expansion} to connect the endpoint of \(e_{r-1}\) inside
\(B_r\) to the base \(b(B_r)\).

For an internal blossom \(B_i\), the path uses one edge entering \(B_i\) and
one edge leaving \(B_i\). Since the contracted path is alternating, exactly one
of these two boundary edges is matched. By the blossom invariant, this matched
boundary edge is incident to \(b(B_i)\). Therefore the internal part of the
lifted path inside \(B_i\) is obtained by connecting \(b(B_i)\) to the endpoint
of the other boundary edge, again using Lemma~\ref{le:internal-expansion}.

Concatenating these internal expansions with the boundary edges
\(e_1,\ldots,e_{r-1}\) gives an alternating path in the original graph \(G\).
Because \(B_1\) and \(B_r\) are free, the path starts at \(b(B_1)\) and ends at
\(b(B_r)\), both of which are unmatched vertices. Hence the lifted path is an
augmenting path in \(G\). The construction outputs only the internal blossom
segments used by the lifted path, so its running time is linear in the length
of the lifted path.

It remains to check that augmenting along the lifted path preserves the blossom
structure invariants. Augmentation flips the matching status of the edges on an
alternating path. Inside every blossom crossed by the lifted path, this flip
only changes which child acts as the base child; the odd defining cycle and the
laminar set of blossom vertices remain unchanged. Recursing through nested
blossoms shows that every blossom still has a valid base, a valid alternating
defining cycle, and at most one matched edge crossing its boundary. Therefore
augmenting along the lifted path preserves the blossom-structure invariants.
\end{proof}

\begin{proof}[Proof of Fact~\ref{fact:aug2}]
Let \(\mathcal P\) be a maximal set of blossom-disjoint augmenting paths in
\(H=G/\Omega\), and lift every path in \(\mathcal P\) to \(G\) using
Fact~\ref{fact:aug1}.

First, the lifted paths are vertex-disjoint. Indeed, two different contracted
paths in \(\mathcal P\) use disjoint sets of root blossoms. The lift of a
contracted path uses only original vertices contained in the root blossoms on
that path. Since distinct root blossoms are disjoint as vertex sets, the lifted
paths are vertex-disjoint.

\end{proof}

\begin{proof}[Proof of Fact~\ref{fact:blossommatch}]
We prove the claim by induction on the blossom tree.

If \(B\) is a trivial blossom, then \(B\) consists of a single original vertex.
If this vertex is unmatched, then it is the base of \(B\) and \(B\) is free.
If it is matched, then \(B\) contains no unmatched vertex and is matched.
Thus the claim holds for trivial blossoms.

Now suppose \(B\) is nontrivial, with children
\[
    C_0,C_1,\ldots,C_{2k},
\]
where \(C_0\) is the base child. The defining cycle of \(B\) contains \(k\)
matched edges that pair up all children except the base child \(C_0\). By the
induction hypothesis, each child \(C_i\) contains at most one vertex that could
be unmatched inside that child, namely its base. Whenever a child is incident
to one of the matched defining-cycle edges of \(B\), the base of that child is
matched by this cycle edge. Hence none of the non-base children
\(C_1,\ldots,C_{2k}\) contains an unmatched vertex in the full matching.

Therefore the only possible unmatched vertex inside \(B\) is the base of the
base child \(C_0\), which is exactly \(b(B)\). If \(B\) has no matched edge
crossing its boundary, then \(b(B)\) remains unmatched, and \(B\) is free. If
\(B\) has a matched edge crossing its boundary, then by the blossom invariant
this edge is incident to \(b(B)\). In that case \(b(B)\) is also matched, and
there is no unmatched vertex inside \(B\).

Thus \(B\) contains at most one unmatched vertex. If such a vertex exists, it
is \(b(B)\) and \(B\) is free. If no such vertex exists, then \(B\) is matched.
\end{proof}

\end{document}